\theoremstyle{plain}
\newtheorem{theorem}{Theorem}[section]
\newtheorem{proposition}[theorem]{Proposition}
\newtheorem{corollary}[theorem]{Corollary}
\newtheorem{lemma}[theorem]{Lemma}
\theoremstyle{remark}
\newtheorem{remark}[theorem]{Remark}
\newtheorem{definition}[theorem]{Definition}
\newcommand{\R}{{\mathbb R}}
\newcommand{\kk}{{\bf k}}
\newcommand{\x}{{\bf x}}
\newcommand{\C}{{\mathbb C}}
\newcommand{\N}{{\mathbb N}}
\newcommand{\Z}{{\mathbb Z}}
\newcommand{\Hi}{\mathcal{H}}
\newcommand{\rueda}{\mathcal{I}}
\newcommand{\eps}{\varepsilon}
\newcommand{\Id}{\mathbf{1}}
\newcommand{\eu}{\mathrm{e}}
\newcommand{\iu}{\mathrm{i}}
\newcommand{\di}{\mathrm{d}}
\newcommand{\set}[1]{\left\{ #1 \right\}}
\newcommand{\norm}[1]{\left\| #1 \right\|}
\newcommand{\scal}[2]{\left\langle #1,\, #2 \right\rangle}
\newcommand{\bra}[1]{\left\langle #1 \right|}
\newcommand{\ket}[1]{\left| #1 \right\rangle}
\newcommand{\sub}[1]{_{\text{#1}}}
\DeclareMathOperator{\Ran}{Ran}
\DeclareMathOperator{\Tr}{Tr}
\title[Wannier functions in topological insulators: the 3D case]{On the construction of Wannier functions in\\ topological insulators: the 3D case}
\author{Horia D. Cornean \and Domenico Monaco}
\date{\today, arXiv v2} 
\begin{document}

\begin{abstract}
We investigate the possibility of constructing exponentially localized composite Wannier bases, or equivalently smooth periodic Bloch frames, for $3$-dimensional time-reversal symmetric topological insulators, both of bosonic and of fermionic type, so that the bases in question are also compatible with time-reversal symmetry. 

This problem is translated in the study (of independent interest) of homotopy classes of continuous, periodic, and time-reversal symmetric families of unitary matrices. We identify three $\Z_2$-valued complete invariants for these homotopy classes. When these invariants vanish, we provide an algorithm which constructs a ``multi-step logarithm" that is employed to continuously deform the given family into a constant one, identically equal to the identity matrix. This algorithm leads to a constructive procedure to produce the composite Wannier bases mentioned above.

\bigskip

\noindent \textsc{Keywords.} Wannier functions, Bloch frames, time-reversal symmetry, topological insulators, equivariant homotopy, $\Z_2$ invariants, constructive algorithms.

\medskip

\noindent \textsc{Mathematics Subject Classification 2010.} 81Q30, 81Q70.
\end{abstract}

\maketitle

\tableofcontents


\section{Introduction} \label{sec:intro}

This paper is a follow-up to \cite{CorneanHerbstNenciu16} and \cite{CorneanMonacoTeufel17}. We study families of rank-$m$ projections $\set{P(\kk)}_{\kk \in \R^D}$, $P(\kk) = P(\kk)^2 = P(\kk)^*$, acting on some Hilbert space $\Hi$, which are subject to the following conditions:
\begin{enumerate}
 \item the map $P \colon \R^D \to \mathcal{B}(\Hi)$, $\kk \mapsto P(\kk)$, is smooth (at least of class $C^1$);
 \item the map $P \colon \R^D \to \mathcal{B}(\Hi)$, $\kk \mapsto P(\kk)$, is $\Z^D$-periodic, that is, $P(\kk) = P(\kk + \mathbf{n})$ for all $\mathbf{n} \in \Z^D$;
 \item the family of projections is \emph{time-reversal symmetric}, namely there exists an antiunitary operator $\theta \colon \Hi \to \Hi$, $\theta^2 = \pm \Id$, such that for all $\kk \in \R^D$
 \[ \theta \, P(\kk) \, \theta^{-1} = P(-\kk). \]
\end{enumerate}
We say that the time-reversal symmetry operator $\theta$ is of \emph{bosonic type} if $\theta^2 = + \Id$, while it is of \emph{fermionic type} if instead $\theta^2 = - \Id$. Notice that the latter case forces the rank $m$ of the projections, as well as the dimension of the Hilbert space $\Hi$ in case it is finite-dimensional, to be even.

Such families of projections arise in condensed matter physics from the Bloch-Floquet transform of a periodic, time-reversal symmetric Hamiltonian. We refer the reader to \cite{CorneanHerbstNenciu16, Kuchment93} for more details. Interesting examples of such physical systems come from the $2$- and $3$-dimensional \emph{time-reversal symmetric topological insulators} introduced by Fu, Kane and Mele \cite{FuKane06, FuKaneMele07}.

We address the possibility to construct a \emph{Bloch frame} for the family of projections $\set{P(\kk)}_{\kk \in \R^D}$, according to the following
\begin{definition} \label{def:BlochFrame}
A \emph{Bloch frame} for the family of projections $\set{P(\kk)}_{\kk \in \R^D}$ is a collection of maps $\Xi = \set{\xi_a}_{a=1}^{m}$, with $\xi_a \colon \R^D \to \Hi$, such that the vectors $\set{\xi_a(\kk)}_{a=1}^{m}$ give an orthonormal basis of $\Ran P(\kk)$ for all $\kk \in \R^D$, that is, $P(\kk) = \sum_{a=1}^{m} \ket{\xi_a(\kk)} \bra{\xi_a(\kk)}$.

A Bloch frame $\Xi$ is called
\begin{enumerate}
 \item \label{item:Bloch_a} \emph{continuous} if each map $\xi_a \colon \R^D \to \Hi$, $a \in \set{1, \ldots, m}$, is continuous;
 \item \label{item:Bloch_b} \emph{periodic} if each map $\xi_a \colon \R^D \to \Hi$, $a \in \set{1, \ldots, m}$, is $\Z^D$-periodic, that is, $\xi_a(\kk) = \xi_a(\kk + \mathbf{n})$ for all $\mathbf{n} \in \Z^D$;
 \item \label{item:Bloch_c} \emph{time-reversal symmetric} if for all $\kk \in \R^D$
 \[ \xi_b(-\kk) = \sum_{a=1}^{m} \left[\theta \, \xi_a(\kk)\right] \, \eps_{ab}, \quad b \in \set{1, \ldots, m} \]
 where $\eps = [\eps_{ab}]$ is the identity matrix $\Id$ in the bosonic case, and the standard symplectic matrix $J = \begin{pmatrix} 0 & \Id \\ - \Id & 0 \end{pmatrix}$ in the fermionic case.
\end{enumerate}
\end{definition}

Results concerning the \emph{existence} of such Bloch frames when $D \le 3$ were formulated in \cite{BrouderPanati07, Panati07} for what concerns the bosonic case, and in \cite{MonacoPanati15} for the fermionic case. These results establish that continuous and periodic Bloch frames for time-reversal symmetric families of projections always exist, but the proofs involve abstract methods from bundle theory. Computational physics \cite{SoluyanovVanderbilt12, WinklerSoluyanovTroyer16} motivated instead the need for more algorithmic proofs, which are also able to explicitly exhibiting these Bloch frames; moreover, the question of whether a time-reversal symmetry constraint can be imposed on the frames has been raised. After the pioneering works \cite{HelfferSjostrand89,Nenciu91} who gave the first constructive proofs of the existence of smooth and periodic Bloch frames for the case $m=1$ in any dimension, several proposal were put forward more recently \cite{FiorenzaMonacoPanati16_B, FiorenzaMonacoPanati16, CancesLevittPanatiStoltz17, CorneanHerbstNenciu16, CorneanMonacoTeufel17}, which emphasized how a \emph{topological obstruction} may arise in the fermionic case for $D=2$ and $D=3$. This topological obstruction is encoded in certain $\Z_2$-valued topological invariants, and is in compliance with the predictions of Fu, Kane and Mele in \cite{FuKane06, FuKaneMele07}.

\subsection{Main results}

In this paper we follow the approach already outlined in \cite{CorneanMonacoTeufel17}, where it was applied only to the case $D \le 2$, and extend it to $D=3$. Our strategy to construct a Bloch frame for the $3$-dimensional family of projections $\set{P(\kk)}_{\kk \in \R^3}$ relies on the solution of the same problem for its $2$-dimensional restriction on the plane where the first coordinate $k_1$ of $\kk$ is fixed to zero (compare Section~\ref{sec:Induction}). Notice that in the fermionic case, already the $2$-dimensional problem may be topologically obstructed by a non-zero $\Z_2$ invariant. If $\set{\Xi(0,k_2, k_3)}_{(k_2,k_3) \in \R^2}$ is a continuous, periodic and time-reversal symmetric Bloch frame for $\set{P(0,k_2,k_3)}_{(k_2,k_3) \in \R^2}$, then we can consider its image under the \emph{parallel transport} induced by the Berry connection along one full period in the $k_1$-direction. The parallel-transported frame will differ from $\Xi(0,k_2,k_3)$ by the action of a unitary $m \times m$ matrix $\alpha(k_2, k_3)$. If it is possible to continuously ``rotate'' this family of matrices back to the identity, then the combination of parallel transport and this ``rotation'' will define a frame $\Xi(\kk)$, $\kk \in \R^3$, which is continuous, $\Z^3$-periodic, and time-reversal symmetric (see Theorem~\ref{thm:Bloch} for a precise statement).

The topological obstruction to the existence of $\set{\Xi(\kk)}_{\kk \in \R^3}$ is thus encoded in the possibility to continuously deform $\alpha(k_2, k_3)$ into the identity, without breaking its symmetries (periodicity and a time-reversal symmetry constraint), which are induced by the ones of the frame $\set{\Xi(0,k_2,k_3)}_{(k_2, k_3) \in \R^2}$. Thus, we are naturally led to the  identification of the \emph{equivariant homotopy classes} of $2$-dimensional, continuous, periodic, and time-reversal symmetric families of unitary matrices. The first main result of this paper (Theorem~\ref{thm:2Dhomotopies} and Corollary~\ref{crl:2Dhomotopies}) establishes that, in the case of bosonic time-reversal symmetry, all such families are homotopically trivial (in the equivariant sense), while the equivariant homotopy classes of fermionic families are characterized by three $\Z_2$ invariants (leading to a total of $2^3 = 8$ different classes). These invariants can be obtained by considering the $\Z_2$ indices characterizing the equivariant homotopy classes of any three of the four $1$-dimensional restrictions to the lines $\set{k_2 = 0}$, $\set{k_2 = 1/2}$, $\set{k_3 = 0}$ and $\set{k_3=1/2}$ \cite{CorneanMonacoTeufel17}. If the time-reversal symmetry constraint is relaxed, instead, all families of continuous, periodic, and time-reversal symmetric unitary matrices are homotopic to the identity, regardless of the nature of the time-reversal symmetry operator; this in particular implies the existence of Bloch frames which are continuous and periodic for any family of periodic, time-reversal symmetric projections in $D=3$, and we recover the results of \cite{Panati07, MonacoPanati15}.

A constructive algorithm for the Bloch frames requires however to exhibit an explicit homotopy $\alpha_t$ between the given family of unitary matrices $\alpha$ and the identity. If $\alpha$ had a ``good'' logarithm, namely $\alpha(k_2,k_3) = \eu^{\iu h(k_2, k_3)}$ for a continuous family of self-adjoint matrices $\set{h(k_2,k_3)}_{(k_2,k_3) \in \R^2}$ satisfying the appropriate symmetries (periodicity and possibly time-reversal symmetry), then an homotopy would be simply given by setting $\alpha_t(k_2, k_3) := \eu^{\iu \, t \, h(k_2, k_3)}$. However, it is easily realized that not all families of unitary matrices admit a logarithm: the intuitive idea is that, as the parameters $(k_2,k_3)$ move, the eigenvalues of $\alpha(k_2,k_3)$ could come together and degenerate, thus preventing the possibility to choose a continuous branch cut in the resolvent of $\alpha(k_2,k_3)$ that can then be used to compute the logarithm. The second main result of this paper (Theorems~\ref{thm:generic} and \ref{thm:noTRS}) consists in showing, in a constructive manner, that any family $\alpha$ as above is \emph{arbitrarily close} to a family which \emph{does} admit a continuous and periodic logarithm, and that, when the $\Z_2$ invariants which characterize the equivariant homotopy class of $\alpha$ vanish, this logarithm can be chosen to be time-reversal symmetric as well. This in particular implies (Theorem~\ref{thm:m-s-log}) that $\alpha$ itself admits a \emph{two-step logarithm}, namely one can write
\[ \alpha(k_2, k_3) = \eu^{\iu h_2(k_2,k_3)/2} \eu^{\iu h_1(k_2,k_3)} \eu^{\iu h_2(k_2,k_3)/2} \]
for two continuous families $h_1, h_2$ of self-adjoint matrices satifying the prescribed symmetries. Again, this construction provides the required homotopy between $\alpha$ and $\Id$, from which one can build up the continuous, symmetric $3$-dimensional Bloch frame.

We would like to comment here on the comparison with the existing literature. To the best of our knowledge, the only other constructive approaches to the existence of Bloch frames which can be applied to $3$-dimensional time-reversal symmetric families of projections have been proposed by Fiorenza, Monaco and Panati \cite{FiorenzaMonacoPanati16_B, FiorenzaMonacoPanati16} and by Winkler, Soluyanov and Troyer \cite{WinklerSoluyanovTroyer16}. The main advantage of the approach presented here with respect to the one by Fiorenza, Monaco and Panati is that we are able to construct periodic frames also in the topologically obstructed case, explicitly breaking time-reversal symmetry in the fermionic setting. The technique developed by Winkler, Soluyanov and Troyer, on the contrary, focuses on the unobstructed case, and is not able to enforce the time-reversal symmetry constraint on the frame. Morever, it requires mapping the given system to a topologically trivial one through a path in some parameter space, which can be cumbersome in practice. Our method is more ``self-contained'' in this respect, in that it does not require any data other than the family of projections.

\subsection{Relation with Wannier functions and the Fu--Kane--Mele invariants}

In problems coming from condensed matter physics, modelled by a gapped, periodic, and time-reversal symmetric Hamiltonian $H$, the construction of Bloch frames implies the construction of a \emph{Wannier basis} for the occupied states of $H$, by transforming the frame back from the $\kk$-space representation to the position representation. The importance of \emph{Wannier functions} \cite{Wannier review} in computational solid state physics cannot be overstated: they are an essential tool, for example, for the computation of tight-binding effective dynamics, or to visualize chemical bonding and orbitals in solids.

The key feature of Wannier functions is their rate of decay at infinity. It is by now well-known that the existence of \emph{exponentially localized} Wannier functions is tantamount to the one of an analytic, periodic Bloch frame for the projection $\set{P(\kk)}_{\kk \in \R^D}$ over the occupied Bloch states, in the momentum representation (provided of course that $\kk \mapsto P(\kk)$ is real analytic); if $D \le 3$, time-reversal symmetry, of either bosonic or fermionic type, is then a sufficient condition to guarantee that such an exponentially localized Wannier basis exists \cite{BrouderPanati07}. On the other hand, the regularity of a Bloch frame can be always enhanced to analyticity once there exists a continuous one \cite{Panati07}, for example by convolution with an analytic (and even, in the time-reversal symmetric setting) kernel \cite{CorneanHerbstNenciu16, CorneanMonacoTeufel17}.

Imposing a further symmetry, like time-reversal, on the Wannier functions could potentially be of interest, so that for example the tight-binding description of the generating Hamiltonian preserves this symmetry. However, enforcing a new symmetry can in general lead to new topological obstructions to the localization of Wannier functions \cite{Monaco17}, as illustrated by the case of $2$- and $3$-dimensional fermionic time-reversal symmetric topological insulators. Our results in terms of Bloch frames then immediately translate in the possibility to \emph{construct exponentially localized and time-reversal symmetric Wannier functions in $3$-dimensional systems}
\begin{itemize}
 \item always, if the time-reversal symmetry is of bosonic nature;
 \item or provided four appropriately defined $\Z_2$ invariants vanish, in the fermionic case.
\end{itemize}

To conclude this Introduction, we will compare the four $\Z_2$-valued topological obstructions that we recover from our construction with the $\Z_2$ indices proposed by Fu, Kane and Mele in the context of $3$-dimensional time-reversal symmetric topological insulators \cite{FuKaneMele07}. As was mentioned above, our construction of a continuous and symmetric Bloch frame proceeds inductively on the dimension. As proved {\sl e.g.} in \cite{CorneanMonacoTeufel17}, it is always possible to find such a Bloch frame for a $1$-dimensional family of projections, hence we can construct the frame on the two lines $\set{k_1 = k_2 = 0}$ and $\set{k_1 = k_3 = 0}$: denote them by $\Xi(0,0,k_3)$ and $\Xi(0,k_2,0)$, respectively. By parallel transport of $\Xi(0,0,k_3)$ over one period in the $k_1$- and $k_2$-directions we obtain two $1$-dimensional family of unitary matrices which are continuous, periodic, and time-reversal symmetric; the same is true when we parallel transport $\Xi(0,k_2,0)$ in the $k_1$-direction. Each of these three families of matrices is characterized, up to equivariant homotopy, by one $\Z_2$ invariant: the results in \cite{GrafPorta13, FiorenzaMonacoPanati16, CorneanMonacoTeufel17} allow to identify these invariants with the $2$-dimensional Kane-Mele $\Z_2$ indices \cite{FuKane06} associated to the faces $\set{k_1=0}$, $\set{k_2=0}$ and $\set{k_3=0}$, and hence with the so-called weak invariants proposed by Fu, Kane and Mele for the $3$-dimensional topological insulators \cite{FuKaneMele07}. 

As was mentioned above, even if these three invariants vanish, there might still be topological obstruction to finding a $3$-dimensional continuous and symmetric Bloch frame. The fourth $\Z_2$ invariant to be computed is obtained as follows: provided for example that it is possible to construct the Bloch frame $\Xi(0,k_2,k_3)$ on the plane $\set{k_1=0}$ (starting from example from the already constructed frame $\Xi(0,0,k_3)$ moving in the $k_2$-direction; this requires the appropriate weak invariant computed above to vanish), one still needs to compare it with its parallel-transported version along one full period in the $k_1$-direction. The $2$-dimensional unitary matrix $\alpha(k_2,k_3)$ resulting from this comparison is characterized, up to equivariant homotopy, by its restriction at $\set{k_2=0}$, $\set{k_3=0}$ and $\set{k_2=1/2}$. Now, we have already computed the indices associated to the first two restrictions: these are the other two weak invariants. The index associated to the restriction at $\set{k_2=1/2}$ is a new, independent $\Z_2$ index. The so-called strong invariant of Fu, Kane and Mele can be obtained by multiplying in $\Z_2$ this index together with the one obtained at $\set{k_2=0}$.

\medskip

\noindent \textbf{Acknowledgments.} The authors would like to thank G.~Nenciu and G.~Panati for inspiring discussions. Financial support by Grant 4181-00042 of the Danish Council for Independent Research $|$ Natural Sciences and from the German Science Foundation (DFG) within the GRK 1838 ``Spectral theory and dynamics of quantum systems'' is gratefully acknowledged.


\section{Construction of Bloch frames: induction on the dimension} \label{sec:Induction}

This Section is devoted to the general problem of finding a continuous, $\Z^D$-periodic and time-reversal symmetric Bloch frame for a family of projections $\set{P(k_1,\kk)}_{(k_1,\kk) \in \R^D}$. We single out the first coordinate, as our construction will rely on the possibility to construct such a Bloch frame for the $d$-dimensional family $\set{P(0,\kk)}_{\kk \in \R^d}$, where $d = D-1$. In this sense, the construction of the Bloch frame is by induction on the dimension. 

We will relate the topological obstruction to the existence of the $D$-dimensional frame to an homotopy classification of certain $d$-dimensional families of unitary matrices, which we call \emph{matching matrices}. This classification will be investigated in the next Subsection. We will come back to Bloch frames in Section~\ref{sec:Frames}.

\subsection{Matching matrices, homotopy and logarithms} \label{sec:Matching}

Let $\eps$ denote either the identity matrix $\Id$ (in the bosonic case) or the standard symplectic matrix $J = \begin{pmatrix} 0 & \Id \\ - \Id & 0 \end{pmatrix}$ (in the fermionic case).

\begin{definition} \label{def:matching}
Let $\set{\alpha(\kk)}_{\kk \in \R^d} \subset M_m(\C)$ be a family of $m \times m$ matrices. We call it a \emph{family of matching matrices} if the following hold:
\begin{enumerate}
 \item the matrix $\alpha(\kk)$ is unitary for all $\kk \in \R^d$;
 \item the map $\alpha \colon \R^d \to U(m)$, $\kk \mapsto \alpha(\kk)$, is continuous;
 \item the map $\alpha \colon \R^d \to U(m)$, $\kk \mapsto \alpha(\kk)$, is $\Z^d$-periodic, that is, $\alpha(\kk) = \alpha(\kk + {\bf n})$ for all ${\bf n} \in \Z^d$;
 \item the family is \emph{time-reversal symmetric}, namely the relation
 \begin{equation} \label{alphaTRS}
 \eps \, \alpha(\kk) = \alpha(-\kk)^t \, \eps
 \end{equation}
 holds for all $\kk \in \R^d$.
\end{enumerate}
\end{definition}

\begin{definition} \label{def:equivariant}
Two families of matching matrices $\set{\alpha_0(\kk)}_{\kk \in \R^d}$ and $\set{\alpha_1(\kk)}_{\kk \in \R^d}$ are called \emph{equivariantly homotopic} if there exists a family of matrices $\set{\alpha_t(\kk)}_{t \in [0,1], \kk \in \R^d}$ which is continuous in $t \in [0,1]$, is a family of matching matrices for fixed $t \in [0,1]$, and is such that $\alpha_{t=0} \equiv \alpha_0$ and $\alpha_{t=1} \equiv \alpha_1$. Any family $\set{\alpha(\kk)}_{\kk \in \R^d}$ which is equivariantly homotopic to $\set{\alpha_1(\kk) \equiv \Id}_{\kk \in \R^d}$ is called \emph{equivariantly null-homotopic}.
\end{definition}

The class of null-homotopic matching matrices plays an important role in the construction of Bloch frames, as we will see in the next Subsection. In the following Theorem, we provide several characterizations for this class in arbitrary dimension $d$. In particular, the null-homotopic families of matching matrices are exactly the ones admitting a \emph{multi-step logarithm}, in the sense of Definition~\eqref{def:multilog} below. This result is well-known in the framework of $K$-theory of $C^*$-algebras (see {\sl e.g.} \cite[Exercise~4D]{Wegge-Olsen93}), but we give here a proof adapted to the present context for the reader's convenience.

\begin{definition} \label{def:multilog}
Let $\set{\alpha(\kk)}_{\kk \in \R^d}$ be a family of matching matrices. We say that $\alpha$ admits a \emph{multi-step logarithm} if there exist $M$ families of matrices $\set{h_\ell(\kk)}_{\kk \in \R^d}$ such that for all $\ell \in \set{1, \ldots, M}$
\begin{enumerate}
 \item \label{item:h_a} the matrix $h_\ell(\kk)$ is self-adjoint for all $\kk \in \R^d$;
 \item \label{item:h_b} the map $h_\ell \colon \R^d \to M_m(\C)$, $\kk \mapsto h_\ell(\kk)$, is continuous;
 \item \label{item:h_c} the map $h_\ell \colon \R^d \to M_m(\C)$, $\kk \mapsto h_\ell(\kk)$, is $\Z^d$-periodic, that is, $h_\ell(\kk) = h_\ell(\kk + {\bf n})$ for all ${\bf n} \in \Z^d$;
 \item \label{item:h_d} the family is \emph{time-reversal symmetric}, namely the relation
 \[ \eps \, h_\ell(\kk) = h_\ell(-\kk)^t \, \eps \]
 holds for all $\kk \in \R^d$;
 \item for all $\kk \in \R^d$
 \begin{equation} \label{multilog}
 \alpha(\kk) = \eu^{\iu \, h_M(\kk)/2} \cdots \eu^{\iu \, h_2(\kk)/2} \, \eu^{\iu h_1(\kk)} \, \eu^{\iu \, h_2(\kk)/2} \cdots \eu^{\iu \, h_M(\kk)/2}.
 \end{equation}
\end{enumerate}
\end{definition}

\begin{theorem} \label{TFAE}
Let $\set{\alpha(\kk)}_{\kk \in \R^d}$ be a family of matching matrices. Then the following are equivalent:
\begin{enumerate}
 \item \label{item:homotopy} the family is equivariantly null-homotopic;
 \item \label{item:logarithm} the family admits a multi-step logarithm;
 \item \label{item:beta} there exists a family of matrices $\set{\beta(k_1,\kk)}_{(k_1, \kk) \in \R^D}$, $D=d+1$, such that
 \begin{enumerate}
  \item \label{item:beta_a} the matrix $\beta(k_1,\kk)$ is unitary for all $(k_1,\kk) \in \R^D$;
  \item \label{item:beta_b} the map $\beta \colon \R^D \to U(m)$, $(k_1,\kk) \mapsto \beta(k_1,\kk)$, is continuous;
  \item \label{item:beta_c} for fixed $k_1 \in \R$, the map $\beta(k_1,\cdot) \colon \R^d \to U(m)$, $\kk \mapsto \beta(k_1,\kk)$, is $\Z^d$-periodic, that is, $\beta(k_1,\kk) = \beta(k_1,\kk + {\bf n})$ for all ${\bf n} \in \Z^d$;
  \item \label{item:beta_d} the relation
 \[ \beta(-k_1,-\kk) = \eps^{-1} \, \overline{\beta(k_1,\kk)} \, \eps \]
 holds for all $(k_1,\kk) \in \R^D$;
  \item for all $(k_1,\kk) \in \R^D$
  \begin{equation} \label{alpha_vs_beta}
  \alpha(\kk) = \beta(k_1,\kk) \, \beta(k_1+1,\kk)^{-1}.
  \end{equation}
 \end{enumerate}
\end{enumerate}
\end{theorem}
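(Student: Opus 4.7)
The plan is to prove the cyclic implications $(2) \Rightarrow (1) \Rightarrow (3) \Rightarrow (2)$; the first two are direct, while the substance lies in $(3) \Rightarrow (2)$. For $(2) \Rightarrow (1)$, one simply scales down the exponents in \eqref{multilog}: the family
\[
\alpha_t(\kk) := \eu^{\iu\, t\, h_M(\kk)/2} \cdots \eu^{\iu\, t\, h_2(\kk)/2} \, \eu^{\iu\, t\, h_1(\kk)} \, \eu^{\iu\, t\, h_2(\kk)/2} \cdots \eu^{\iu\, t\, h_M(\kk)/2}, \quad t \in [0,1],
\]
is jointly continuous in $(t,\kk)$, interpolates between $\Id$ at $t=0$ and $\alpha$ at $t=1$, and is a family of matching matrices for each fixed $t$. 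Its time-reversal symmetry follows from that of each $h_\ell$ via the palindrome structure: transposition reverses the order of the factors, which is precisely the symmetry the sandwich enjoys, and the $\eps$-conjugations arising from the TRS of the $h_\ell$'s telescope along the way.

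For $(1) \Rightarrow (3)$ I would use a suspension construction. Given a null-homotopy $\set{\alpha_t(\kk)}_{t\in[0,1]}$, I construct $\beta$ on the fundamental strip $k_1 \in [0,1]$ with $\beta(0,\kk) = \alpha(\kk)$ and $\beta(1,\kk) = \Id$, and extend to all $k_1 \in \R$ by iterating \eqref{alpha_vs_beta}, forcing $\beta(k_1+n,\kk) = \alpha(\kk)^{-n}\,\beta(k_1,\kk)$ for every integer $n$. Properties (a)–(c) of item (3) then follow from the corresponding properties of $\alpha_t$. The TRS condition (d) is not automatic for the naive choice $\beta(k_1,\kk) = \alpha_{1-k_1}(\kk)$; to enforce it, I would modify the definition on the strip (for instance by splicing $\alpha_t$ with its time-reversal image at $k_1 = \tfrac12$) so that the combined relation $\beta(1-k_1,-\kk) = \eps^{-1}\,\overline{\alpha(\kk)\,\beta(k_1,\kk)}\,\eps$, which is equivalent to (d) modulo the cocycle, holds throughout the strip. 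The only non-trivial consistency check, at $k_1 = \tfrac12$ and at the integer boundary, is precisely the TRS of $\alpha$ itself.

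The substantive direction is $(3) \Rightarrow (2)$. Given $\beta$ as in item (3), by uniform continuity on the compact set $[-\tfrac12,\tfrac12]\times[0,1]^d$ I choose a symmetric partition $-\tfrac12=t_{-N}<\cdots<t_0=0<\cdots<t_N=\tfrac12$ with $t_{-j}=-t_j$, fine enough that each factor
\[
u_j(\kk) := \beta(t_{j-1},\kk)\,\beta(t_j,\kk)^{-1}
\]
has spectrum in a half-plane avoiding the negative reals, and hence admits a continuous periodic principal logarithm $u_j=\eu^{\iu g_j}$. Telescoping together with \eqref{alpha_vs_beta} at $k_1=-\tfrac12$ yields $\alpha(\kk)=u_{-N+1}(\kk)\cdots u_N(\kk)$, and the TRS of $\beta$ translates into the mirror-pairing $u_j(-\kk)=\eps^{-1}\,\overline{u_{1-j}(\kk)}\,\eps$, so each $u_j$ is paired with its reflected partner $u_{1-j}$. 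The final step is to recast this ordered product into the palindrome form \eqref{multilog}: starting from the outermost pair $(u_{-N+1},u_N)$ and working inwards, I would apply a Baker–Campbell–Hausdorff argument valid near the identity to replace each mirror pair, together with the inner product it surrounds, by a symmetric sandwich $\eu^{\iu h/2}\cdot(\cdots)\cdot\eu^{\iu h/2}$ in which the self-adjoint family $h$ automatically satisfies $\eps\,h(\kk)=h(-\kk)^t\,\eps$; iterating over all $N$ pairs produces the self-adjoint families $h_1,\ldots,h_M$. The hardest part of the whole argument is precisely this reorganization: turning an ordered product whose factors are only paired under time reversal into a genuine palindrome of individually time-reversal-symmetric exponentials, while simultaneously controlling unitarity, regularity, periodicity, and the $\eps$-twisted symmetry.
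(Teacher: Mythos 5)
Your proof runs the cycle in the reverse direction, $(2)\Rightarrow(1)\Rightarrow(3)\Rightarrow(2)$, and only the first step is complete.

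The implication $(2)\Rightarrow(1)$ is correct and, upon unwinding, coincides with what the paper obtains by composing $(2)\Rightarrow(3)\Rightarrow(1)$: your homotopy $\alpha_t = \eu^{\iu t h_M/2}\cdots\eu^{\iu t h_1}\cdots\eu^{\iu t h_M/2}$ is exactly $\beta(-t/2,\cdot)\,\beta(t/2,\cdot)^{-1}$ with $\beta$ defined by the exponential formula in the paper's step $(2)\Rightarrow(3)$, and the palindrome argument for TRS is correct.

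The implication $(1)\Rightarrow(3)$ has a genuine gap. Your naive choice $\beta(k_1,\kk) = \alpha_{1-k_1}(\kk)$ breaks TRS, and the proposed ``splice at $k_1=\tfrac12$'' is not worked out. If one tries the obvious symmetric definition on $k_1\in[0,\tfrac12]$ and then imposes condition~(d) to extend to $[-\tfrac12,0]$, one finds at the glue point $k_1=0$ a constraint of the form $\beta(0,\kk)=\eps^{-1}\overline{\beta(0,-\kk)}\,\eps$; combined with the cocycle \eqref{alpha_vs_beta} and your boundary values $\beta(0,\cdot)=\alpha$, $\beta(1,\cdot)=\Id$, this reduces (using unitarity and $\eps^2=\pm\Id$) to the false identity $\alpha(\kk)^2=\Id$. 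The constraint at $k_1=\tfrac12$ is similarly nontrivial and is not ``precisely the TRS of $\alpha$ itself''. You need an actual construction, not a pointer to one.

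The implication $(3)\Rightarrow(2)$ is the crux and is not proved. Your increments $u_j(\kk)=\beta(t_{j-1},\kk)\beta(t_j,\kk)^{-1}$ are \emph{not} individually matching matrices; they only satisfy a mirror pairing (your displayed pairing is also off by an inverse: a direct computation gives $u_j(-\kk)=\eps^{-1}\,\overline{u_{1-j}(\kk)}^{\,-1}\eps$). Recasting an ordered product of pairwise-related factors into a palindrome of \emph{individually} TRS exponentials is exactly what is needed, and a vague appeal to Baker--Campbell--Hausdorff does not do it: BCH near the identity would produce self-adjoint generators, but not ones satisfying $\eps\,h(\kk)=h(-\kk)^t\eps$, because the inner product being sandwiched is not fixed as you move inward. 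The paper avoids this entirely by never taking logarithms of the asymmetric increments: it closes the cycle with the one-line $(3)\Rightarrow(1)$, setting $\alpha_t(\kk)=\beta(-t/2,\kk)\,\beta(t/2,\kk)^{-1}$ (which is manifestly a matching matrix for each $t$), and then proves $(1)\Rightarrow(2)$ by an iterated Cayley transform along the homotopy: since each $\alpha_{j/M}$ is itself a matching matrix and the conjugations by $\eu^{-\iu h_\ell/2}$ preserve the TRS structure, every matrix whose logarithm is taken is automatically time-reversal symmetric, and the palindrome \eqref{multilog} falls out by construction. That is the missing idea; without it your $(3)\Rightarrow(2)$ does not close.

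I would therefore recommend replacing your last two steps by the paper's: prove $(2)\Rightarrow(3)$ explicitly via $\beta(k_1,\kk)=\eu^{-\iu k_1 h_M(\kk)}\cdots\eu^{-\iu k_1 h_1(\kk)}$ on $[-\tfrac12,\tfrac12]$ extended by the cocycle, then $(3)\Rightarrow(1)$ via $\alpha_t=\beta(-t/2,\cdot)\beta(t/2,\cdot)^{-1}$, and $(1)\Rightarrow(2)$ by the iterated Cayley transform; your current $(2)\Rightarrow(1)$ then becomes redundant but harmless.
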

\begin{proof}
\noindent \fbox{\ref{item:homotopy} $\Longrightarrow$ \ref{item:logarithm}.} Let $\set{\alpha_t(\kk)}_{\kk \in \R^d}$ be a family of matching matrices depending continuously on $t \in [0,1]$ and such that $\alpha_0(\kk) = \Id$, $\alpha_1(\kk) = \alpha(\kk)$. Since $[0,1]$ is a compact interval and $\alpha_t$ is $\Z^d$-periodic, by uniform continuity there exists $\delta > 0$ such that
\begin{equation} \label{close}
\sup_{\kk \in \R^d} \norm{\alpha_s(\kk) - \alpha_t(\kk)} < 2 \quad \text{whenever} \quad |s-t|< \delta. 
\end{equation}
Let $M \in \N$ be such that $1/M < \delta$. Then in particular
\[ \sup_{\kk \in \R^d} \norm{\alpha_{1/M}(\kk) - \Id} < 2 \]
so that the Cayley transform \cite[Prop.~3.10]{CorneanMonacoTeufel17} provides a logarithm for $\alpha_{1/M}(\kk)$, {\sl i.e.} $\alpha_{1/M}(\kk) = \eu^{\iu h_M(\kk)}$, with $h_M$ satisfying \eqref{item:h_a}, \eqref{item:h_b}, \eqref{item:h_c}, and \eqref{item:h_d} in Definition~\ref{def:multilog}.

Using again \eqref{close} we have that
\[ \sup_{\kk \in \R^d} \norm{\eu^{-\iu \, h_M(\kk)/2} \, \alpha_{2/M}(\kk) \, \eu^{-\iu \, h_M(\kk)/2} - \Id} = \sup_{\kk \in \R^d} \norm{\alpha_{2/M}(\kk) - \alpha_{1/M}(\kk)} < 2 \]
so that by the same argument 
\[ \eu^{-\iu \, h_M(\kk)/2} \, \alpha_{2/M}(\kk) \, \eu^{-\iu \, h_M(\kk)/2} = \eu^{\iu h_{M-1}(\kk)}, \quad \text{or} \quad \alpha_{2/M}(\kk) = \eu^{\iu \, h_M(\kk)/2} \, \eu^{\iu \, h_{M-1}(\kk)} \, \eu^{\iu \, h_M(\kk)/2}. \]
Repeating the same line of reasoning $M$ times, we end up exactly with \eqref{multilog}.

\noindent \fbox{\ref{item:logarithm} $\Longrightarrow$ \ref{item:beta}.} For $k_1 \in [-1/2,1/2]$, set
\[ \beta(k_1,\kk) := \eu^{-\iu \, k_1 \, h_M(\kk)} \cdots \eu^{-\iu \, k_1 \, h_1(\kk)} \]
and extend this definition to $k_1 \in \R$ via
\begin{equation} \label{extend}
\beta(k_1,\kk) := \begin{cases} 
\alpha(\kk)^{-1} \, \beta(k_1-1,\kk) & \text{if } k_1 > 1/2,\\
\alpha(\kk) \, \beta(k_1+1,\kk) & \text{if } k_1 < -1/2.
\end{cases}
\end{equation}
We just need to show that this definition yields a continuous function of $k_1$. Indeed, using \eqref{multilog},
\begin{align*}
\beta(1/2+0,\kk) & = \alpha(\kk)^{-1} \, \beta(-1/2+0,\kk) = \alpha(\kk)^{-1} \,\eu^{\iu \, h_M(\kk)/2} \cdots \eu^{\iu \, h_1(\kk)/2} \\
& = \eu^{-\iu \, h_M(\kk)/2} \cdots \eu^{-\iu \, h_1(\kk)/2} = \beta(1/2-0,\kk).
\end{align*}

\noindent \fbox{\ref{item:beta} $\Longrightarrow$ \ref{item:homotopy}.} The required homotopy $\alpha_t$ between $\Id $ and $\alpha$ is provided by setting
\begin{equation} \label{ExplicitHomotopy}
\alpha_t(\kk) := \beta(-t/2,\kk) \, \beta(t/2,\kk)^{-1}, \quad t \in [0,1], \: \kk \in \R^d. \qedhere
\end{equation}
\end{proof}

\begin{remark}
If the original family of matching matrices is more than continuous (say, smooth or even analytic), the families $h_\ell$ (multi-step logarithms) constructed in Theorem~\ref{TFAE}(\ref{item:logarithm}) via the Cayley transform inherit the same regularity. However, in general we cannot expect more than continuity at $k_1 = p/2$, $p \in \Z$, for the family $\beta$ constructed in Theorem~\ref{TFAE}(\ref{item:beta}).

Indeed, assume that $M=2$ and $\beta(k_1,\kk)=\eu^{-\iu k_1 h_2(\kk)}\eu^{-\iu k_1 h_1(\kk)}$ if $k_1\in [-1/2,1/2]$. When we differentiate this expression at $k_1=1/2-0$ we get
$$\beta'(1/2-0,\kk)=-\iu h_2(\kk) \beta(1/2,\kk) - \iu \beta(1/2,\kk) h_1(\kk) $$
where `prime' denotes derivative with respect to the first variable $k_1$.

If $1/2<k_1<3/2$ we have according to \eqref{extend} that $\beta(k_1,\kk)=\alpha(\kk)^{-1}\beta(k_1-1,\kk)$, hence after differentiation at $k_1=1/2+0$ we get
$$\beta'(1/2+0,\kk)=\alpha(\kk)^{-1}\beta'(-1/2+0,\kk)=-\iu \alpha(\kk)^{-1} h_2(\kk) \beta(-1/2,\kk) -\iu \alpha(\kk)^{-1} \beta(-1/2,\kk) h_1(\kk).$$
Hence $\beta'$ is continuous at $1/2$ if and only if
$$h_2(\kk)\beta(1/2,\kk)= \alpha(\kk)^{-1}h_2(\kk)\beta(-1/2,\kk)$$
or
$$\alpha(\kk) h_2(\kk)=h_2(\kk) \alpha(\kk).$$

Since $\alpha(\kk) =\eu^{\iu h_2(\kk)/2}
\eu^{\iu h_1(\kk)}\eu^{\iu h_2(\kk)/2}$ it follows that $\eu^{\iu h_1(\kk)}$ must commute with $h_2$, hence $h_1$ commutes with $h_2$. This implies that the original family of matching matrices has a ``standard'' logarithm, which is known not to be true in general.
\end{remark}

\subsection{Consequence on Bloch frames} \label{sec:Frames}

Let $\set{P(k_1,\kk)}_{(k_1,\kk) \in \R^D}$, $D=d+1$, be a $D$-dimensional family of rank-$m$ projections which is smooth (at least $C^1$), $\Z^D$-periodic, and time-reversal symmetric (either of bosonic or fermionic type). Assume that the $d$-dimensional restriction $\set{P(0,\kk)}_{\kk \in \R^d}$ admits a smooth, periodic and time-reversal symmetric Bloch frame $\Xi(0,\kk) = \set{\xi_a(0,\kk)}_{1 \le a \le m}$. Let also $T_{\kk}(k_1,0)$ be the parallel transport unitary associated to the family $\set{P(k_1,\kk)}_{(k_1,\kk) \in \R^D}$ along the straight line between the points $(0,\kk)$ and $(k_1,\kk)$ (compare \cite[Sec.~3.1]{CorneanMonacoTeufel17}). Define then the matrix $\alpha(\kk)$ through the relation
\begin{equation} \label{matching}
[\alpha(\kk)]_{ab} := \scal{\xi_a(0,\kk)}{T_{\kk}(1,0) \xi_b(0,\kk)}.
\end{equation}
Notice indeed that $T_{\kk}(1,0) \Xi(0,\kk)$ and $\Xi(0,\kk)$ are orthonormal bases of the same vector space $\Ran P(0,\kk) = \Ran P(1,\kk)$. One easily verifies that $\set{\alpha(\kk)}_{\kk \in \R^d}$ is indeed a family of matching matrices in the sense of Definition~\ref{def:matching} (compare \cite[Prop.~3.2]{CorneanMonacoTeufel17}).

\begin{theorem} \label{thm:Bloch}
For the family of matching matrices $\set{\alpha(\kk)}_{\kk \in \R^d}$ defined by \eqref{matching}, any of the conditions in Theorem~\ref{TFAE} is in turn equivalent to the following:
\begin{enumerate}
 \setcounter{enumi}{3}
 \item there exists a Bloch frame $\Xi(k_1,\kk)$ for $\set{P(k_1,\kk)}_{(k_1,\kk) \in \R^D}$ which is continuous, $\Z^D$-periodic, and time-reversal symmetric.
\end{enumerate}
\end{theorem}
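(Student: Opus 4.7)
The plan is to establish the equivalence between condition~(4) of the present theorem and condition~(\ref{item:beta}) of Theorem~\ref{TFAE}, the latter already being equivalent to (\ref{item:homotopy}) and (\ref{item:logarithm}). Both implications rest on the same dictionary: Bloch frames $\Xi(k_1, \kk)$ are parametrised by the unitary $\beta(k_1, \kk)$ that expresses them in the ``reference'' basis obtained by parallel-transporting $\Xi(0,\kk)$ in the $k_1$-direction, namely
\[
\xi_a(k_1, \kk) = \sum_{b=1}^{m} \bigl[T_{\kk}(k_1, 0) \, \xi_b(0,\kk)\bigr] \, \beta_{ba}(k_1, \kk).
\]

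For the direction (\ref{item:beta}) $\Longrightarrow$ (4), I would take the above identity as the \emph{definition} of $\xi_a(k_1, \kk)$. Orthonormality of $\{\xi_a(k_1,\kk)\}_{a=1}^{m}$ in $\Ran P(k_1,\kk)$ is immediate, since parallel transport is unitary on the range of $P$ and $\beta(k_1, \kk)$ is unitary; joint continuity follows from the continuity of $\beta$, of $\Xi(0, \cdot)$, and of parallel transport. For $\Z^D$-periodicity, the $\kk$-direction is automatic, while for the $k_1$-direction I would use the composition rule $T_{\kk}(k_1+1, 0) = T_{\kk}(k_1+1, 1)\, T_{\kk}(1, 0)$ together with the identity $T_{\kk}(k_1+1, 1) = T_{\kk}(k_1, 0)$, a consequence of the $\Z^D$-periodicity of $P$; combining these with the defining relation $T_{\kk}(1,0)\, \xi_b(0,\kk) = \sum_c \xi_c(0,\kk) \, \alpha_{cb}(\kk)$ from \eqref{matching}, and with equation~\eqref{alpha_vs_beta} rewritten as $\alpha(\kk)\, \beta(k_1+1, \kk) = \beta(k_1, \kk)$, yields $\xi_a(k_1+1,\kk) = \xi_a(k_1,\kk)$. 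Finally, time-reversal symmetry of $\Xi(k_1,\kk)$ reduces, via the $\theta$-equivariance $\theta \, T_{\kk}(k_1,0)\, \theta^{-1} = T_{-\kk}(-k_1, 0)$ of parallel transport and the TRS of $\Xi(0, \cdot)$, to exactly the matrix identity $\eps\, \beta(-k_1, -\kk) = \overline{\beta(k_1,\kk)}\, \eps$, which is precisely item~(\ref{item:beta_d}).

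For the converse direction (4) $\Longrightarrow$ (\ref{item:beta}), both $\{T_{\kk}(k_1, 0) \xi_b(0,\kk)\}_b$ and $\{\xi_a(k_1,\kk)\}_a$ are orthonormal bases of the same finite-dimensional subspace $\Ran P(k_1, \kk)$, so $\beta(k_1, \kk)$ is defined as the unique unitary intertwining them, explicitly
\[
\beta_{ba}(k_1, \kk) := \scal{T_{\kk}(k_1, 0)\, \xi_b(0,\kk)}{\xi_a(k_1, \kk)}.
\]
Items (\ref{item:beta_a})--(\ref{item:beta_d}) of Theorem~\ref{TFAE} follow by running each of the computations above in reverse; equation~\eqref{alpha_vs_beta} is then obtained by comparing the expansions of $\xi_a(k_1,\kk)$ and $\xi_a(k_1+1,\kk)$ in the transported basis, invoking the $\Z^D$-periodicity $\xi_a(k_1+1,\kk) = \xi_a(k_1,\kk)$ together with the definition \eqref{matching} of $\alpha$. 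The main work is not conceptual but a careful bookkeeping of matrix-index conventions; the only ingredients required beyond the definitions are the standard equivariance properties of parallel transport (composition, $\Z^D$-periodicity, and commutation with $\theta$), which I would cite from \cite{CorneanMonacoTeufel17}.
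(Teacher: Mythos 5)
Your proposal is correct and takes essentially the same route as the paper: the paper's proof simply exhibits the two mutually inverse formulas linking $\Xi(k_1,\kk)$ and $\beta(k_1,\kk)$ via parallel transport and defers the verification of the correspondence between the four properties of $\beta$ and the three properties of the Bloch frame to \cite[Prop.~3.3]{CorneanMonacoTeufel17}. You spell out precisely those verifications (composition and periodicity of $T_\kk$, the $\theta$-equivariance of parallel transport, the index bookkeeping), which is the content of the cited Proposition.
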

\begin{proof}
To show that condition \eqref{item:beta}  in Theorem~\ref{TFAE} is equivalent to the one in the present statement, it suffices to set
\[ \xi_a(k_1,\kk) := \sum_{b=1}^{m} \left[ T_{\kk}(k_1,0) \xi_b(0,\kk) \right] [\beta(k_1,\kk)]_{ba} \]
or equivalently
\[ [\beta(k_1,\kk)]_{ba} := \scal{T_{\kk}(k_1,0) \xi_b(0,\kk)}{\xi_a(k_1,\kk)} \]
(compare \cite[Prop.~3.3]{CorneanMonacoTeufel17}).
\end{proof}

\begin{remark} \label{rmk:noTRS}
From the proofs of Theorems~\ref{TFAE} and \ref{thm:Bloch}, one can see that if one drops the hypotheses of time-reversal symmetry (namely \eqref{item:h_d} in Definition~\ref{def:multilog} for the multi-step logarithm, \eqref{item:beta_d} in the statement of Theorem~\ref{TFAE} for the family $\set{\beta(k_1, \kk)}_{(k_1, \kk) \in \R^D}$, and \eqref{item:Bloch_c} in Definition~\ref{def:BlochFrame} for the Bloch frame), then one can still show that the following statements are equivalent:
\begin{enumerate}
 \item the family $\set{\alpha(\kk)}_{\kk \in \R^d}$ in \eqref{matching} is \emph{null-homotopic}, namely there exists a family of matrices $\set{\alpha_t(\kk)}_{t \in [0,1], \kk \in \R^d}$ which is continuous in $t \in [0,1]$, is continuous and $\Z^d$-periodic in $\kk$ for fixed $t \in [0,1]$, and is such that $\alpha_{t=0} \equiv \alpha$ and $\alpha_{t=1} \equiv \Id$ (no time-reversal symmetry is required for $t \in (0,1)$);
 \item the family $\set{\alpha(\kk)}_{\kk \in \R^d}$ admits a continuous and $\Z^d$-periodic multi-step logarithm;
 \item there exists a family $\set{\beta(k_1, \kk)}_{(k_1, \kk) \in \R^D}$ satisfying \eqref{item:beta_a}, \eqref{item:beta_b}, \eqref{item:beta_c} and \eqref{alpha_vs_beta} in the statement of Theorem~\ref{TFAE};
 \item there exists a Bloch frame $\Xi(k_1,\kk)$ for $\set{P(k_1,\kk)}_{(k_1,\kk) \in \R^D}$ which is continuous and $\Z^D$-periodic.
\end{enumerate}
\end{remark}

\subsection{\texorpdfstring{Homotopies of matching matrices in $d=1$ ($D=2$) and $d=2$ ($D=3$)}{Homotopies of matching matrices in d=1 (D=2) and d=2 (D=3)}}

As an illustration of the above concepts, we investigate the equivariant homotopy classes of families of matching matrices for $d \in \set{1,2}$ (corresponding to $D \in \set{2,3}$).

\begin{remark} \label{rmk:Properties}
We recall here a few relevant properties of matching matrices. We refer to \cite{CorneanHerbstNenciu16, CorneanMonacoTeufel17} for the proofs of the following statements.

\begin{itemize}
 \item If $\set{\alpha(\kk)}_{\kk \in \R^d}$ is a family of matching matrices, there exists a continuous, $\Z^d$-periodic, and even function $\phi \colon \R^d \to \R$ such that $\det \alpha(\kk) = \eu^{\iu \phi(\kk)}$ for all $\kk \in \R^d$. The family
 \[ \widetilde{\alpha}(\kk) := \eu^{-\iu \phi(\kk)/m} \alpha(\kk) \]
 is then again a family of matching matrices but takes values in the special unitary group $SU(m)$. At the level of frames, if $\Xi(0,\kk)$ is a continuous, $\Z^d$-periodic, and time-reversal symmetric Bloch frame for the $d$-dimensional family of projections $\set{P(0,\kk)}_{\kk \in \R^d}$, define
\[ \widetilde{\xi}_a(0,\kk) := \eu^{-\iu \phi(\kk)/2m} \, \xi_a(0, \kk), \quad a \in \set{1, \ldots, M}, \: \kk \in \R^d. \]
It is easily verified that the above gives a Bloch frame with the same properties. From \eqref{matching} one then reads that the new family of matching matrices $\widetilde{\alpha}(\kk)$ equals $\eu^{-\iu \phi(\kk)/m} \alpha(\kk)$ and hence has unit determinant.

 Notice that, if $m=1$, the above considerations immediately imply that $\alpha(\kk) = \eu^{\iu \phi(\kk)}$ is equivariantly null-homotopic in any dimension: the function $\phi$ exhibits its (``one-step'') logarithm. This result was obtained already in \cite{Nenciu91}.
 \item Consider a family of \emph{fermionic} time-reversal symmetric matching matrices. Let also $\kk_\sharp$ be a point such that $\kk_\sharp \equiv -\kk_\sharp \bmod \Z^d$ (so $\kk_\sharp = (k_1, \ldots, k_d)$ where each $k_j$ is of the form $p_j/2$ with $p_j \in \Z$). Then the matrix $\alpha(\kk_\sharp)$ has \emph{Kramers degenerate spectrum}, namely each of its eigenvalues has even degeneracy.
 \item Let $d=1$ and consider a family of \emph{fermionic} time-reversal symmetric matching matrices $\set{\alpha(k)}_{k \in \R}$. Then there is a well-defined $\Z_2$-valued index
 \begin{equation} \label{eqn:rueda}
 (-1)^{\rueda(\alpha)} = p(0) \, p(1/2), \quad \text{where} \quad p(k_\sharp) := \frac{\sqrt{\det \alpha(k_\sharp)}}{\mathrm{Pf} \left(\eps \alpha(k_\sharp) \right)}, \; k_\sharp \in \set{0,1/2},
 \end{equation}
 called the \emph{Graf--Porta index} (or \emph{GP-index} for short) \cite{GrafPorta13, CorneanMonacoTeufel17}. Notice that $p(k_\sharp)^2 = 1$.
\end{itemize}
\end{remark}

The next result gives a complete homotopy classification of $1$-dimensional families of matching matrices, as well as a description of their ``generic'' spectral properties.

\begin{theorem} \label{thm:1Dhomotopies}
Assume that $d=1$.
\begin{enumerate}
 \item \label{item:EquivariantB} Let $\eps = \Id$. Then any family of matching matrices $\set{\alpha(k)}_{k \in \R}$ is equivariantly null-homotopic.
 
 Moreover, there exists a sequence $\set{\alpha_n(k)}_{k \in \R}$, $n \in \N$, of families of matching matrices such that 
 \begin{itemize}
 \item $\sup_{k \in \R} \norm{\alpha_n(k) - \alpha(k)} \to 0$ as $n \to \infty$, and
 \item for any $n \in \N$ the spectrum of $\alpha_n(k)$ is completely non-degenerate for all $k \in \R$ .
 \end{itemize}
 \item \label{item:EquivariantF} Let $\eps = J$. Then two families of matching matrices are equivariantly homotopic if and only if their GP-indices are the same; in particular, a family of matching matrices $\set{\alpha(k)}_{k \in \R}$ is equivariantly null-homotopic if and only if $\rueda(\alpha) \in \Z_2$ vanishes. 
 
Moreover, there exists a sequence $\set{\alpha_n(k)}_{k \in \R}$, $n \in \N$, of families of matching matrices such that 
 \begin{itemize}
 \item $\sup_{k \in \R} \norm{\alpha_n(k) - \alpha(k)} \to 0$ as $n \to \infty$, and
 \item for any $n \in \N$ each eigenvalue of $\alpha_n(k)$ is exactly doubly degenerate for $k = p/2$, $p \in \Z$, while the spectrum of $\alpha_n(k)$ is completely non-degenerate for $k \neq p/2$, $p \in \Z$.
 \end{itemize}
\end{enumerate}
\end{theorem}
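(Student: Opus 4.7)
My plan is to address the homotopy classification by constructing multi-step logarithms and invoking Theorem~\ref{TFAE}, and to address the spectral approximation through local perturbation theory with symmetry constraints at the fixed points $k_\sharp \in \set{0, 1/2}$.

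\textbf{Step 1 (Determinant reduction).} Using Remark~\ref{rmk:Properties}, I write $\det \alpha(k) = \eu^{\iu \phi(k)}$ with $\phi$ continuous, periodic, and even, and set $\widetilde\alpha(k) := \eu^{-\iu \phi(k)/m} \alpha(k) \in SU(m)$. The peeled-off scalar factor admits the obvious single-step logarithm $(\phi(k)/m)\Id$, which is equivariantly null-homotopic by Theorem~\ref{TFAE}. In the fermionic case a direct calculation from \eqref{eqn:rueda} (exploiting the parity of $\phi$) shows $\rueda(\widetilde\alpha) = \rueda(\alpha)$. Hence I may assume $\alpha: \R \to SU(m)$ from here on.

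\textbf{Step 2 (Null-homotopy).} I adapt the construction in the proof of implication \ref{item:homotopy}$\Longrightarrow$\ref{item:logarithm} of Theorem~\ref{TFAE}, using the parameter $k$ in place of the deformation parameter $t$. Uniform continuity gives $\delta > 0$ such that partitioning $[-1/2, 1/2]$ into subintervals of length smaller than $\delta$ (with $0$ and $\pm 1/2$ as partition nodes) keeps $\alpha$ within Cayley range of itself on consecutive nodes. The Cayley transform, being a rational operation, commutes with transposition, so the extracted factors $h_\ell(k)$ inherit the TRS condition \eqref{item:h_d} of Definition~\ref{def:multilog} automatically. In the bosonic case this construction proceeds without obstruction, and Theorem~\ref{TFAE} yields null-homotopy. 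In the fermionic case the same procedure requires coherent sign choices for the Pfaffians $\mathrm{Pf}(J \alpha(k_\sharp))$ at the two fixed points; by the definition of $\rueda(\alpha)$ such coherence exists precisely when $\rueda(\alpha) = 0$. Conversely, $\rueda$ varies continuously yet is $\Z_2$-valued along any equivariant homotopy, so it is constant: this both shows it obstructs null-homotopy and implies it is the sole invariant of equivariant homotopy classes of fermionic matching matrices (noting also that the product of two matching matrices with equal GP-indices has trivial index by multiplicativity).

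\textbf{Step 3 (Spectral approximation).} I construct the $\alpha_n$ by local perturbation combined with symmetric patching. Away from the fixed points, standard analytic perturbation theory in $U(m)$ lifts accidental eigenvalue crossings by $C^0$-small unitary perturbations. At a fixed point $k_\sharp$, the perturbation must lie in the symmetric subspace $\set{A \in U(m) : A = A^t}$ in the bosonic case, and in the $J$-self-dual subspace $\set{A \in U(m) : J A = A^t J}$ in the fermionic case. A Wigner--von Neumann codimension count (codimension $2$ within real-symmetric unitaries; codimension $5$ within quaternionic self-dual unitaries) confirms that generic perturbations within these subspaces remove all non-symmetry-forced degeneracies, producing the desired spectral behavior---a totally non-degenerate spectrum in the bosonic case, exactly doubly degenerate Kramers pairs in the fermionic case. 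A partition-of-unity patching on $[0, 1/2]$ followed by TRS-extension to $[-1/2, 0]$ then yields a global matching family within arbitrary sup-distance of $\alpha$.

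\textbf{Main obstacle.} The principal difficulty I anticipate is the simultaneous local lifting of degeneracies and global TRS-preservation, particularly matching the generic interior behavior with the symmetry-constrained perturbation at the two fixed points. I would handle this by first fixing the generic perturbation at $k_\sharp$ and then interpolating through perturbations that are generic on the open half-intervals, using that the bad stratum (eigenvalue crossings) in $U(m)$ has positive codimension. A secondary check---that the approximation does not alter the GP-index in the fermionic case---is automatic, since $\rueda$ is $C^0$-continuous in $\alpha$ and $\Z_2$-valued, hence locally constant under small perturbations.
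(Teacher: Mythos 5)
Your proposal has a genuine gap in Step~2, which is the heart of the homotopy classification.

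First, the "partition $[-1/2,1/2]$ in $k$ and adapt the proof of \eqref{item:homotopy}$\Longrightarrow$\eqref{item:logarithm} from Theorem~\ref{TFAE}" does not deliver a multi-step logarithm. In that implication, the interval $[0,1]$ which is subdivided is the range of the \emph{homotopy parameter} $t$, and the extracted factors $h_\ell$ are genuine \emph{functions of} $\kk$ because at each $t$-slice one has a globally defined family $\alpha_{j/M}(\cdot)$. If instead you partition the $k$-domain itself, the Cayley logs you extract are attached to adjacent $k$-nodes; they are discrete matrices (or at best functions on a subinterval), not globally defined, periodic, time-reversal symmetric families $h_\ell \colon \R \to M_m(\C)$ as Definition~\ref{def:multilog} requires. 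Nothing in the sketch explains how to assemble these local pieces into the required structure, nor how to handle the basepoint $\alpha(0)$ when it is not the identity. In the bosonic case, the actual constructive step (this is essentially \cite[Prop.~2.16]{CorneanHerbstNenciu16}) exploits that $\alpha(k_\sharp)$ is a \emph{symmetric} unitary at the fixed points and hence admits a symmetric self-adjoint logarithm (Takagi/Autonne factorization), after which one propagates and glues; your sketch never invokes that structure.

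Second, in the fermionic case your parenthetical claim ``the product of two matching matrices with equal GP-indices has trivial index by multiplicativity'' is false as stated: the pointwise product of two matching families is generally \emph{not} a matching family, since $\eps\,\alpha(k)\beta(k) = \alpha(-k)^t\,\eps\,\beta(k)=\alpha(-k)^t\beta(-k)^t\eps=(\beta(-k)\alpha(-k))^t\eps\neq(\alpha(-k)\beta(-k))^t\eps$ unless the families commute. So this argument does not establish completeness of the GP-index as an invariant. That $\rueda$ is constant along equivariant homotopies (your continuity-plus-discreteness observation) only gives that the index is an \emph{obstruction}; showing it is a \emph{complete} invariant requires exhibiting a homotopy between two families with equal index, which is the substantive content of \cite[Thm.~5.12]{CorneanMonacoTeufel17}. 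Your sketch does not provide that.

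Step~1 is fine, and Step~3 (local perturbation with Wigner--von Neumann codimension counting at the fixed points, then symmetric patching) is plausible and morally correct, though it is only a sketch; the paper itself simply cites \cite[Lemma~2.18]{CorneanHerbstNenciu16} and \cite[Prop.~5.9(2)]{CorneanMonacoTeufel17} for these approximation statements, so a self-contained argument along these lines would in fact be additional content. But as it stands, the classification part (Step~2) is not a proof.
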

\begin{proof}
The statement on the equivariant homotopy classes in part~\ref{item:EquivariantB} is a consequence of \cite[Prop.~2.16]{CorneanHerbstNenciu16}, while the existence of the required approximants $\set{\alpha_n(k)}_{k \in \R}$ essentially follows from \cite[Lemma~2.18]{CorneanHerbstNenciu16}. 

Part \ref{item:EquivariantF} is instead the content of \cite[Thm.~5.12 and Prop.~5.9(2)]{CorneanMonacoTeufel17}.
\end{proof}

A first result of the present paper is the following generalization of the above statement, giving a characterization of the homotopy classes of families of matching matrices in dimension $d=2$.

\begin{theorem} \label{thm:2Dhomotopies}
Let $d=2$. Then two families of matching matrices $\set{\alpha_0(k_1, k_2)}_{(k_1,k_2) \in \R^2}$ and $\set{\alpha_1(k_1, k_2)}_{(k_1,k_2) \in \R^2}$ are equivariantly homotopic if and only if the four pairs of $1$-dimensional families of matching matrices given by
\begin{gather*}
\set{\alpha_0(0, k_2)}_{k_2 \in \R} \text{ and } \set{\alpha_1(0, k_2)}_{k_2 \in \R}, \\
\set{\alpha_0(1/2, k_2)}_{k_2 \in \R} \text{ and } \set{\alpha_1(1/2, k_2)}_{k_2 \in \R}, \\
\set{\alpha_0(k_1, 0)}_{k_1 \in \R} \text{ and } \set{\alpha_1(k_1, 0)}_{k_1 \in \R}, \\
\set{\alpha_0(k_1, 1/2)}_{k_1 \in \R} \text{ and } \set{\alpha_1(k_1, 1/2)}_{k_1 \in/ \R},
\end{gather*}
are equivariantly homotopic.
\end{theorem}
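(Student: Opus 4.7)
The ``only if'' direction is immediate: restricting a $2$-dimensional equivariant homotopy $\set{\alpha_t(k_1,k_2)}_{t \in [0,1]}$ to any one of the four TRS-invariant lines yields an equivariant $1$-dimensional homotopy between the corresponding restrictions. For the ``if'' direction, fix equivariant $1$-dimensional homotopies $\set{H^{(i)}_t}_{t \in [0,1]}$ on each of the four lines $L_i$ (indexed by $i=1,\ldots,4$ for $\set{k_1=0}$, $\set{k_1=1/2}$, $\set{k_2=0}$, $\set{k_2=1/2}$) connecting $\alpha_0|_{L_i}$ to $\alpha_1|_{L_i}$. My plan is to assemble these into a $2$-dimensional equivariant homotopy in three stages.

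\emph{Synchronization at the TRS-fixed points.} The four lines meet pairwise at the four points $p_j \in \set{(0,0),(1/2,0),(0,1/2),(1/2,1/2)}$, which are the fixed points of the involution $\tau\colon \kk \mapsto -\kk$. At each such $p_j$ the values $\alpha_0(p_j)$ and $\alpha_1(p_j)$ lie in the ``symmetric unitary'' subspace $\set{u \in U(m) : \eps \, u = u^t \, \eps}$, as do $H^{(i)}_t(p_j)$ for every $t$ and every $i$ with $p_j \in L_i$. This subspace is path-connected in both cases (it is diffeomorphic to $U(m)/O(m)$ in the bosonic case and to $U(m)/Sp(m/2)$ in the fermionic case), so one may fix once and for all a reference path $\eta_j(t)$ through it from $\alpha_0(p_j)$ to $\alpha_1(p_j)$. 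The $H^{(i)}_t$ are then modified, by a local correction supported in a small symmetric neighborhood of each endpoint $p_j$ inside $L_i$, so that $H^{(i)}_t(p_j) = \eta_j(t)$ at every intersection point while preserving the boundary conditions $H^{(i)}_0 = \alpha_0|_{L_i}$ and $H^{(i)}_1 = \alpha_1|_{L_i}$ and the $1$-dimensional equivariance.

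\emph{Collar extension and bulk filling.} The now-compatible $H^{(i)}_t$ assemble into a continuous equivariant homotopy defined on the $1$-skeleton $L := L_1 \cup L_2 \cup L_3 \cup L_4$. Using a $\tau$-invariant tubular neighborhood $N$ of $L$, an invariant smooth cutoff $\chi$ equal to $1$ on $L$ and $0$ outside $N$, and Cayley-type logarithms of the discrepancies $\alpha_0^{-1} \, H^{(i)}_t$ on sufficiently thin collars, one builds an equivariant homotopy $\alpha^{(1)}_t$ connecting $\alpha_0$ to a matching matrix $\alpha_0'$ that coincides with $\alpha_1$ on an open neighborhood of $L$. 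The complement $\mathbb{T}^2 \setminus L$ splits into four open squares, grouped by $\tau$ into two orbits; pick one representative $C$ from each. On $C$, the matrices $\alpha_0'|_C$ and $\alpha_1|_C$ are $U(m)$-valued maps agreeing on $\partial C$, so $\alpha_0' \, \alpha_1^{-1}|_C$ represents an element of $\pi_2(U(m)) = 0$ and admits a rel-boundary null-homotopy. Extend this bulk homotopy to the twin cell $\tau(C)$ by the matching rule $\alpha_t(\kk) := \eps^{-1} \, \alpha_t(-\kk)^t \, \eps$, and concatenate with $\alpha^{(1)}_t$ to obtain the desired equivariant $2$-dimensional homotopy.

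\emph{Main obstacle.} The delicate step is the synchronization at the TRS-fixed endpoints in Stage 1: one needs to know that, within the (non-empty, by hypothesis) space of equivariant $1$-dimensional homotopies on $L_i$ with prescribed endpoint families, the evaluation at each $p_j$ surjects onto the relevant connected component of paths in the symmetric locus. I expect this to follow from a direct graft construction that splices a chosen loop in the symmetric locus onto an existing $1$-dimensional homotopy in a symmetric neighborhood of each TRS-fixed endpoint, applying an even cutoff to maintain the $1$-dimensional equivariance; morally, this is a relative version of the argument behind Theorem~\ref{thm:1Dhomotopies}. Once Stage 1 is carried out, Stages 2 and 3 reduce to standard collar-and-extension arguments whose only topological inputs are the connectedness of the symmetric locus and the vanishing of $\pi_2(U(m))$.
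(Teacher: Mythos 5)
Your argument rests on the same crucial topological input as the paper's, namely $\pi_2(U(m)) = 0$, and the two routes are structurally close: the paper extends the map directly over the boundary of the $3$-dimensional fundamental domain $F = [0,1]\times[0,1/2]\times[-1/2,1/2]$ (a $2$-sphere carrying $\alpha_0$, $\alpha_1$ and the three homotopies on $\set{k_1 = 0}$, $\set{k_1 = 1/2}$, $\set{k_2 = \pm 1/2}$ --- note that $\set{k_2 = 0}$ lies in the interior of $F$, so that hypothesis is not actually used there), while you first assemble on the full $1$-skeleton, extend over a collar, and then fill the two cell orbits. What you do that the paper does not is to isolate, as your Stage~$1$, the compatibility requirement at the high-symmetry intersection points: the paper's proof silently treats the $1$-dimensional homotopies as a single continuous function on the union of the four lines, and this compatibility genuinely requires an argument. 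Flagging it is to your credit.

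The mechanism you sketch for Stage~$1$, however, does not preserve the time-reversal constraint. Multiplying $H^{(i)}_t(k)$ on one side by a correction $c_t(k)$, even if $c_t$ is even in $k$ and takes values in the symmetric locus, generically fails $\eps\, \widetilde H_t(k) = \widetilde H_t(-k)^t\,\eps$, because $\eps^{-1}c_t^{\,t}\eps$ would then have to commute with $H_t$. The modification that respects the constraint for \emph{arbitrary} $W_t$ is the sandwich
\[
\widetilde H_t(k) := \eps^{-1}\, W_t(-k)^t\, \eps\; H_t(k)\; W_t(k), \qquad W_0 \equiv W_1 \equiv \Id,
\]
and at the fixed point this gives $\widetilde H_t(p_j) = \eps^{-1} V(t)^t \eps\, H_t(p_j)\, V(t)$ with $V := W_\cdot(p_j)$. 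Matching $\widetilde H_\cdot(p_j)$ to a chosen reference path $\eta_j$ then reduces to lifting both $H_\cdot(p_j)$ and $\eta_j$ along the Cartan map $U(m)\to\set{u : \eps u = u^t\eps}$, $U\mapsto\eps^{-1}U^t\eps U$, and setting $V$ equal to the ratio of lifts. The residual point --- that the lifts can be arranged with $V(0)=V(1)=\Id$, a small $\pi_1$ question about the fibre $O(m)$ or $Sp(m/2)$ --- is exactly the detail that both your proposal and the paper's own proof leave unaddressed; it is harmless in the fermionic case (connected, simply connected fibre) but needs a separate sentence in the bosonic one. Your Stages~$2$ and~$3$ are fine and match the paper's use of $\pi_2(U(m))=0$.
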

\begin{proof}
If the $2$-dimensional families are equivariantly homotopic, then the restriction of the homotopy to the appropriate line in $\R^2$ will give an equivariant homotopy between the above-mentioned $1$-dimensional families, so we must only prove the converse statement.

For $s \in [0,1]$, let
\begin{equation} \label{eqn:1dh}
\set{\alpha_s(0, k_2)}_{k_2 \in \R} , \quad \set{\alpha_s(1/2, k_2)}_{k_2 \in \R} , \quad \set{\alpha_s(k_1, 0)}_{k_1 \in \R} , \quad \set{\alpha_s(k_1, 1/2)}_{k_1 \in \R},
\end{equation}
be equivariant homotopies between the $1$-dimensional families mentioned in the statement. The goal is to construct an equivariant homotopy $\set{\alpha_s(k_1, k_2)}_{(k_1, k_2) \in \R^2}$, $s \in [0,1]$, which extends the above and modifies $\alpha_0$ into $\alpha_1$. We notice that it suffices to construct this homotopy on a fundamental domain
\[ (s, k_1, k_2) \in F := [0,1] \times [0,1/2] \times [-1/2, 1/2], \]
and then impose periodicity and time-reversal symmetry: Indeed, since the homotopies in \eqref{eqn:1dh} are already equivariant, this extension will be continuous on the whole $[0,1] \times \R^2$.

We consider the datum in \eqref{eqn:1dh} as defining a continuous map $\alpha$ on $\partial F$ with values in $U(m)$. Topologically, the boundary $\partial F$ is a $2$-sphere, and hence the map $\alpha$ determines an element of the second homotopy group $\pi_2(U(m))$ by considering its homotopy class. Since the latter homotopy group is trivial \cite[Chap.~8, Sect.~12]{Husemoller94}, $\alpha$ is null-homotopic, or equivalently it extends to the region $F$ which is enclosed by $\partial F$. This extension provides the required map leading to an equivariant homotopy between $\alpha_0$ and $\alpha_1$, as detailed above.
\end{proof}

Combining the above two results, we obtain the following

\begin{corollary} \label{crl:2Dhomotopies}
Assume that $d=2$.
\begin{enumerate}
 \item \label{item:2DEquivariantB} Let $\eps = \Id$. Then any family of matching matrices $\set{\alpha(\kk)}_{\kk \in \R^2}$ is equivariantly null-homotopic.
 \item \label{item:2DEquivariantF} Let $\eps = J$. Then two families of matching matrices $\set{\alpha_0(\kk)}_{\kk \in \R^2}$ and $\set{\alpha_1(\kk)}_{\kk \in \R^2}$ are equivariantly homotopic if and only if
 \begin{gather*}
 \rueda(\alpha_0(0, \cdot)) = \rueda(\alpha_1(0, \cdot)) , \quad \rueda(\alpha_0(1/2, \cdot)) = \rueda(\alpha_1(1/2, \cdot)), \\
 \rueda(\alpha_0(\cdot, 0)) = \rueda(\alpha_1(\cdot, 0)), \quad \text{and} \quad \rueda(\alpha_0(\cdot, 1/2)) = \rueda(\alpha_1(\cdot, 1/2)).
\end{gather*}
In particular, a family of matching matrices $\set{\alpha(\kk)}_{\kk \in \R}$ is equivariantly null-homotopic if and only if the four GP-indices $\rueda(\alpha(0, \cdot))$, $\rueda(\alpha(1/2, \cdot))$, $\rueda(\alpha(\cdot, 0))$ and $\rueda(\alpha(\cdot, 1/2))$ vanish in $\Z_2$. 
\end{enumerate}
\end{corollary}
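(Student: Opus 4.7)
The plan is to derive the corollary by combining Theorem~\ref{thm:1Dhomotopies}, which classifies equivariant homotopy classes of $1$-dimensional families of matching matrices, with Theorem~\ref{thm:2Dhomotopies}, which reduces the equivariant homotopy classification in $d=2$ to the pairwise comparison of the four $1$-dimensional restrictions to the lines $\set{k_1 = 0}$, $\set{k_1 = 1/2}$, $\set{k_2 = 0}$, and $\set{k_2 = 1/2}$. No new technical ingredient is required: the argument amounts to bookkeeping on top of the two preceding theorems, and the only substantive obstacle (indeed a mild one) is verifying that the reference family $\Id$ has vanishing GP-index on each of the four lines, so that it can serve as the canonical representative of the trivial equivariant homotopy class.

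For part~\ref{item:2DEquivariantB}, I would take any bosonic family $\set{\alpha(\kk)}_{\kk \in \R^2}$ and consider its four $1$-dimensional restrictions. By Theorem~\ref{thm:1Dhomotopies}\ref{item:EquivariantB}, each of these restrictions is equivariantly null-homotopic, hence equivariantly homotopic to the corresponding restriction of the constant family $\Id$. Applying Theorem~\ref{thm:2Dhomotopies} with $\alpha_0 = \alpha$ and $\alpha_1 \equiv \Id$ then yields the desired conclusion.

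For part~\ref{item:2DEquivariantF}, the forward implication is immediate: restricting an equivariant homotopy between $\alpha_0$ and $\alpha_1$ to any of the four lines yields an equivariant homotopy of the corresponding $1$-dimensional restrictions, and Theorem~\ref{thm:1Dhomotopies}\ref{item:EquivariantF} forces their GP-indices to agree. For the converse, matching GP-indices on each of the four lines produce, via Theorem~\ref{thm:1Dhomotopies}\ref{item:EquivariantF}, equivariant $1$-dimensional homotopies between the four pairs of restrictions, and Theorem~\ref{thm:2Dhomotopies} assembles them into an equivariant homotopy between $\alpha_0$ and $\alpha_1$. The null-homotopy characterization then follows from the trivial observation that $\rueda(\Id) = 0$: indeed, \eqref{eqn:rueda} yields $p(k_\sharp) = \sqrt{\det \Id}/\mathrm{Pf}(J) = 1$ at both $k_\sharp = 0$ and $k_\sharp = 1/2$, so $(-1)^{\rueda(\Id)} = 1$ and the four GP-indices of $\alpha$ agree with those of $\Id$ precisely when they all vanish.
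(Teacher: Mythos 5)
Your proof is correct and follows exactly the route the paper intends, namely combining Theorem~\ref{thm:1Dhomotopies} with Theorem~\ref{thm:2Dhomotopies} (the paper itself states the corollary with no further proof, as a direct combination of those two results). One small slip: for $\alpha = \Id$ one has $p(k_\sharp) = 1/\mathrm{Pf}(J)$, which is $\pm 1$ depending on the rank $m$, so the individual factors need not equal $1$; the correct observation is that $p(0)\,p(1/2) = (1/\mathrm{Pf}(J))^2 = 1$, which still gives $\rueda(\Id) = 0$ and does not affect your conclusion.
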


\begin{remark}
Notice that the conditions listed in the above statement for a family of fermionic matching matrices to be equivariantly null-homotopic are not independent: if three of the above $1$-dimensional families have vanishing GP-indices, then so does the fourth. 

Indeed, let 
\begin{equation} \label{eqn:high-symmetry}
\kk_1 = (0,0), \quad \kk_2 = (1/2,0) , \quad \kk_3 = (1/2,1/2) \quad \text{and} \quad \kk_4=(1/2,0).
\end{equation}
Then
\begin{gather*}
(-1)^{\rueda(\alpha(\cdot,0))} = p(\kk_1) \, p(\kk_2), \quad (-1)^{\rueda(\alpha(1/2,\cdot))} = p(\kk_2) \, p(\kk_3), \\
(-1)^{\rueda(\alpha(\cdot,1/2))} = p(\kk_3) \, p(\kk_4), \quad (-1)^{\rueda(\alpha(0,\cdot))} = p(\kk_1) \, p(\kk_4),
\end{gather*}
where $p(\kk_i)$ is defined as in \eqref{eqn:rueda}. Since
\[ p(\kk_1) \, p(\kk_4) = p(\kk_1) \, p(\kk_2)^2 \, p(\kk_3)^2 \, p(\kk_4), \]
it follows that
\[ (-1)^{\rueda(\alpha(0,k_2))} = (-1)^{\rueda(\alpha(k_1,0))} \, (-1)^{\rueda(\alpha(1/2,k_2))} \, (-1)^{\rueda(\alpha(k_1,1/2))}. \]
There are then only three independent GP-indices among the ones of $\alpha(k_1,0)$, $\alpha(0,k_2)$, $\alpha(k_1,1/2)$ and $\alpha(1/2,k_2)$.
\end{remark}

\begin{remark} \label{rmk:PeriodicFrames}
If we drop the assumptions of time-reversal symmetry as in Remark~\ref{rmk:noTRS}, then one can show that when $d \in \set{1,2}$ \emph{any} family of matching matrices is null-homotopic, regardless of whether it is of bosonic or fermionic nature. This is done in \cite{CorneanHerbstNenciu16, CorneanMonacoTeufel17} for $d=1$, and in Theorem~\ref{thm:m-s-log} below for $d=2$. In particular, continuous and $\Z^D$-periodic Bloch frames for $D$-dimensional continuous, periodic, and time-reversal symmetric families of projections always exist (and can be explicitly constructed) when $D \le 3$.
\end{remark}

\subsection{Summary}

To summarize the above considerations, we see that in order to construct a Bloch frame for a $D$-dimensional family of projections, we need the following ingredients:
\begin{itemize}
 \item a Bloch frame for the $d$-dimensional restriction of the family on the hyperplane $\set{k_1 = 0} \subset \R^D$ ($d=D-1$);
 \item a multi-step logarithm (in the sense of Definition~\ref{def:multilog}) for the corresponding family of matching matrices, defined via \eqref{matching}.
\end{itemize}
Theorem~\ref{thm:1Dhomotopies} shows that the second condition is in general topologically obstructed. Nonetheless, in the unobstructed case it is possible to provide an explicit algorithm to produce the required multi-step logarithm. For example, the case $d=1$ (corresponding to $D=2$) was analysed thoroughly in \cite{CorneanHerbstNenciu16} for the bosonic case and in \cite{CorneanMonacoTeufel17} for the fermionic case.

In the following we study the case $d=2$ (and correspondingly $D=3$). The construction of a (multi-step) logarithm for a family of unitary matrices requires in general its approximation by matrices that lift any spectral degeneracy which is not dictated by symmetry (as is the case for Kramers degeneracy). As an illustrative example, we treat the case of families of $2 \times 2$ matching matrices in the next Section, both for $d=1$ and $d=2$. This already displays all the issues to be faced in the general setting. The case of matching matrices of arbitrary rank in $d=2$ will be addressed in Section~\ref{sec:2d}.

\begin{remark}
In what follows, we will implicitly assume (unless otherwise stated) that families of matching matrices, as well as their families of multi-step logarithms, are \emph{smooth} (at least of class $C^1$). The general case of \emph{continuous} families of matching matrices can be recovered by first taking a convolution with a smooth, even kernel (compare \cite[Lemma~2.3]{CorneanHerbstNenciu16} and \cite[Lemma~A.2]{CorneanMonacoTeufel17}).
\end{remark}


\section{Construction of the multi-step logarithm: rank 2} \label{sec:m=2}

Throughout this Section, $\set{\alpha(\kk)}_{\kk \in \R^d}$, $d \le 2$, is a family of $2 \times 2$ matching matrices. Without loss of generality, we moreover assume that $\alpha(\kk) \in SU(2)$ (compare Remark~\ref{rmk:Properties}). 

Recall that any smooth, $\Z^d$-periodic map $\R^d \ni \kk \mapsto \alpha(\kk) \in SU(2)$ can be represented as
\begin{equation} \label{SU(2)}
\alpha(\kk) = m(\kk)\Id + \iu \sum_{j=1}^{3} F_j(\kk) \, \sigma_j,
\end{equation}
where $m$ and $F_j$, $j \in \set{1, 2, 3}$, are smooth, $\Z^d$-periodic and real-valued functions satisfying
\[ m(\kk)^2 +\sum_{j=1}^{3} F_j(\kk)^2 = 1, \]
where $\set{\sigma_1, \sigma_2, \sigma_3}$ are the Pauli matrices.

The general strategy to find a multi-step logarithm goes as follows: One first constructs an approximation $\widetilde{\alpha}$ of $\alpha$ which has non-degenerate spectrum and lies sufficiently close to $\alpha$ in the norm topology. Due to the non-degeneracy of the spectrum of $\widetilde{\alpha}$, it is possible to find a branch cut for the logarithm which always lies in its resolvent set, and hence $\widetilde{\alpha}$ has a ``good'' logarithm in the sense of Definition~\ref{def:multilog}, namely $\widetilde{\alpha}(\kk) = \eu^{\iu \, h_2(\kk)}$. The fact that $\widetilde{\alpha}$ is close to $\alpha$ implies that $\eu^{-\iu \, h_2(\kk)/2} \, \alpha(\kk) \, \eu^{-\iu \, h_2(\kk)/2}$ is close to $\Id$ uniformly in $\kk$, and hence it admits a logarithm $h_1(\kk)$. The combination of $h_1$ and $h_2$ as in \eqref{multilog} gives the desired multi-step logarithm of $\alpha$.

From the above discussion it becomes clear that we need to lift the spectral degeneracies of $\alpha$.  One can show that the spectrum of $\alpha(\kk)$ as in \eqref{SU(2)} is then given by
\[ \sigma(\alpha(\kk)) = \set{m(\kk) \pm \iu \norm{{\bf F}(\kk)}}, \quad {\bf F}(\kk) := \left[ F_1(\kk), F_2(\kk), F_3(\kk) \right], \]
so that points $\kk$ for which $\alpha(\kk)$ has degenerate spectrum coincide with zeroes of the vector field ${\bf F}(\kk)$. Thus, the construction of a multi-step logarithm for $\alpha$ will be achieved by perturbing the vector field ${\bf F}$ so that it avoids zero.

\subsection{\texorpdfstring{The case $d=1$ ($D=2$)}{The case d=1 (D=2)}}

We start by considering the $1$-dimensional bosonic case. Time-reversal symmetry of $\alpha(k)\in SU(2)$ reads then $\alpha(k)=\alpha(-k)^t$. One can check that $\sigma_1 = \sigma_1^t$ and $\sigma_3 = \sigma_3^t$, while $\sigma_2 =- \sigma_2^t$. Thus the time-reversal symmetry of $\alpha(k)$ implies
\[ m(-k) = m(k), \quad F_1(-k)= F_1(k),\quad F_3(-k)= F_3(k), \quad \text{and}\quad F_2(-k)= -F_2(k) \]
for the functions $m$ and $F_j$ appearing in \eqref{SU(2)}.

We see that $F_2(0)=0$. We want to slightly perturb $\alpha$ so that the perturbed matrix $\alpha_s$ has non-degenerate spectrum and is still a family of matching matrices. Because the map $k \mapsto [F_1(k), F_3(k)]$ traces a closed smooth curve in $\R^2$, the origin is not an interior point of this curve by Sard's lemma. 
Thus given $s>0$ we may find a vector ${\bf v}^{(s)}=[v_1^{(s)},v_3^{(s)}]\in \R^2$ with $\norm{\bf v} =s$ such that 
$$\inf_{k\in \R} \norm{ \left[ F_1(k)+v_1^{(s)},F_3(k)+v_3^{(s)} \right] } >0.$$

 Define $F_{1,s}(k)=F_1(k)+v_1^{(s)}$, $F_{2,s}(k)=F_2(k)$, $F_{3,s}(k)=F_3(k)+v_3^{(s)}$, and 
$${\bf F}_s(k)=[F_{1,s}(k),F_{2,s}(k),F_{2,s}(k)].$$ 
If $s$ is small enough we have $m(k)^2 +\norm{{\bf F}_s(k)}^2\geq 1/2$ for all $k$ and we can define
\begin{equation}\label{eqn:1Dalpha_eps}
\alpha_s(k):=\frac{1}{\sqrt{m(k)^2 +\norm{{\bf F}_s(k)}^2 }}\left (m(k)\Id + \iu \sum_{j=1}^{3} F_{j,s}(k) \, \sigma_j\right ).
\end{equation}
We see that ${\bf F}_s(k)$ can never be zero, hence the matrix $\alpha_s$ has non-degenerate spectrum. Moreover, it converges to $\alpha$ as $s \to 0$, it is smooth, periodic, and time-reversal symmetric. Thus $\alpha$ admits a two-step logarithm. 

\bigskip

We now come to the fermionic case. This case is more involved, first of all because not all fermionic families of matching matrices are equivariantly null-homotopic, and hence admit a multi-step logarithm (Theorem~\ref{TFAE}).

Homotopy classes of fermionic matching matrices in this dimension are described by Theorem~\ref{thm:1Dhomotopies}\eqref{item:EquivariantF}, and in particular the equivariant homotopy class of $\alpha$ is characterized by its GP-index $\rueda(\alpha)$. In the case of $2 \times 2$ matrices of unit determinant, this index is easily computable. Indeed, one just needs to look at the Kramers degenerate spectrum of $\alpha(0)$ and $\alpha(1/2)$. Combining Kramers degeneracy with the constraint $\det \alpha(k)=1$ we obtain that the degenerate eigenvalues of $\alpha(0)$ and $\alpha(1/2)$ must be either $1$ or $-1$. Then one easily checks that $\rueda(\alpha)=0$ if and only if the spectra of $\alpha(0)$ and $\alpha(1/2)$ coincide.

In the following, we construct a multi-step logarithm (with moreover a number of steps $M \le 3$) for $\alpha$, under the assumption that both $\alpha(0)$ and $\alpha(1/2)$ have the same spectrum. Let us stress once again that the \emph{existence} of the multi-step logarithm is guaranteed by the general argument in Theorem~\ref{TFAE} (together with the characterization of the equivariant homotopy classes in Theorem~\ref{thm:1Dhomotopies}), but we look for a \emph{constructive algorithm} to produce it.

Since we assumed that the two matrices have the same spectrum, then by the previous considerations $\alpha(0)=\alpha(1/2)=\pm \Id$. Let us show that the case $\alpha(0)=\alpha(1/2)=- \Id$ can be reduced to the other one. Indeed, in this case we define $h_1(k):=\pi\Id$ and introduce the family
\[ \alpha_1(k) = \eu^{-\iu \, h_1(k)/2} \, \alpha(k) \, \eu^{-\iu \, h_1(k)/2}. \]
We see that $\alpha_1(k)$ remains a family of fermionic $2 \times 2$ matching matrices, and moreover $\alpha_1(0)= \alpha_1(1/2)=\Id$. Hence we can assume without loss of generality that $\alpha(0)=\alpha(1/2)= \Id$.

Looking at the representation \eqref{SU(2)} for $\alpha(k)$, fermionic time-reversal symmetry implies this time
\[ m(-k) = m(k) \quad \text{and} \quad F_j(-k)= -F_j(k), \: j \in \set{1,2,3}. \]
In particular, $m(0) = m(\pm 1/2) = 1$ and $F_j(0) = F_j(\pm 1/2) = 0$, and hence there exists a closed interval $I\subset (0,1/2)$ such that 
\[ m(k) \geq 0, \quad  k \in [0,1/2] \setminus I. \]

The vector field ${\bf F}(k)=[F_1(k),F_2(k),F_3(k)]$ is smooth, $\Z$-periodic, and odd. The restriction of ${\bf F}(k)$ to $I$ defines a smooth curve in $\R^3$. The origin in $\R^3$ is not an interior point of the range of ${\bf F}$ by Sard's lemma, hence given $s>0$ one can find a vector ${\bf v}^{(s)}$ such that $\norm{{\bf v}^{(s)}}=s$ and $\norm{{\bf F}(k)+{\bf v}^{(s)}}>0$ on $I$. Because $I$ is compact, we can find $c_s>0$ such that
\[ \inf_{k\in I} \norm{ {\bf F}(k)+{\bf v}^{(s)} } \geq c_s > 0. \]
Let $0\leq \chi \leq 1$ be a smooth function which equals $1$ on $I$ and has support in $(0,1/2)$. Define 
\[ {\bf F}_s(k):={\bf F}(k)+ (\chi(k)-\chi(-k)) {\bf v}^{(s)}, \quad k\in [-1/2,1/2], \]
and extend it to $\R$ by periodicity:
\[ {\bf F}_s(k)={\bf F}(k)+ \sum_{n\in\Z}(\chi(k-n)-\chi(-k-n)) {\bf v}^{(s)},\quad k\in \R. \]
Clearly, ${\bf F}_s(-k) = -{\bf F}_s(k)$. Also, $m(k)^2 + \norm{{\bf F}_s(k)}^2 \geq 1/2$ if $s$ is small enough, and we can define $\alpha_s$ as in \eqref{eqn:1Dalpha_eps}. Denote by
\[ m_s(k) := \frac{m(k)}{\sqrt{m(k)^2+\norm{{\bf F}_s(k)}^2}} . \]
We see that $m_s$ can never be $-1$ on $[-1/2,1/2]$ if $s$ is small enough: if $k\not \in I\cup (-I)$ then $m$ is positive, while if $k \in I\cup (-I)$ then $\norm{{\bf F}_s(k)}$ is bounded from below by a positive number. This implies that $-1$ is never in the spectrum of $\alpha_s$, and we can define
\[ h_2(k) := \frac{\arccos m_s(k)}{\norm{{\bf F}_s(k)}} \sum_{j=1}^{3} F_{s,j}(k) \, \sigma_j, \quad \Tr h_2(k) = 0, \quad \alpha_s(k)=\eu^{\iu h_2(k)}. \]
We observe that $h_2$ is smooth (because $\frac{\arccos(x)}{\sqrt{1-x^2}}$ is $C^\infty$ on $(-1,1]$) and obeys the properties listed in Definition~\ref{def:multilog}. Since $\alpha_s(k)$ converges in norm to $\alpha(k)$, we conclude by the considerations at the beginning of this Section that $\alpha$ admits a multi-step logarithm.  

\subsection{\texorpdfstring{The case $d=2$ ($D=3$)}{The case d=2 (D=3)}}

We now move to the $2$-dimensional case, and as before we start by considering a bosonic family of matching matrices in the form \eqref{SU(2)}. Given $s_1>0$ we can find a vector ${\bf v}^{(s_1)}= [v_1^{(s_1)}, v_3^{(s_1)}] \in \R^2$ with $\norm{{\bf v}^{(s_1)}}=s_1$ such that 
$$\inf_{k_2\in \R} \norm{ \left[F_1(0,k_2)+v_1^{(s_1)},F_3(0,k_2)+v_3^{(s_1)} \right]} >0.$$
Let $0\leq \eta \leq 1$ be a $C_0^\infty(\R)$ \emph{even} function that equals $1$ on $[-1/10,1/10]$, and has support in $(-1/5,1/5)$. Extend $\eta$ to $\R$ by $\Z$-periodicity:
\begin{equation} \label{eqn:etap}
\eta\sub{p}(x) := \sum_{n\in \Z} \eta (x-n).
\end{equation}
We have $\eta\sub{p}(x)=\eta\sub{p}(-x)$ and $\eta\sub{p}(x+1)=\eta\sub{p}(x)$. Define 
$${\bf F}_{s_1}(\kk):= \left[ F_1(\kk)+\eta_p(k_1) \, v_1^{(s_1)} , F_2(\kk) , F_3(\kk)+\eta_p(k_1)\, v_3^{(s_1)} \right].$$
There exists a small strip of width $\delta_1<1/10$ around the line $k_1=0$ such that $\norm{{\bf F}_{s_1}(\kk)}$ is bounded from below by a positive constant if $\kk$ belongs to this strip. 

Now we perturb ${\bf F}_{s_1}(\kk)$ (only its first and third components) around the line $k_1=1/2$, so that the new ${\bf F}_{s_2}(\kk)$ will be away from zero on narrow strips around both $k_1=0$ and $k_1=1/2$, and then make  it periodic and symmetric as before. Finally, we perturb again around a narrow horizontal strip around $k_2=1/2$ and get ${\bf F}_{s_3}(\kk)$, where we have to make sure that $s_3$ is small enough so that we do not destroy the non-vanishing property on the ``vertical'' strips. By periodicity, the same property will hold near $k_2=-1/2$. 

In order to simplify notation, we may assume that the original ${\bf F}(\kk)$ is non-zero near the boundary of $\Omega'=(0,1/2)\times (-1/2,1/2)$. Let $K$ be a compact included in $\Omega'$ such that ${\bf F}(\kk)$ is away from zero on the compact set $\overline{\Omega'\setminus K}$. Let $0\leq \chi\leq 1$ be a smooth function which equals $1$ on $K$ and $0$ outside $\Omega'$. Given $s>0$ we can find a $3$-dimensional vector ${\bf v}^{(s)} \in \R^3$ with $\norm{{\bf v}^{(s)}} =s$ such that 
$$\inf_{\kk\in K}\norm{{\bf F}(\kk)+{\bf v}^{(s)}} >0.$$
Then if $s$ is small enough (in order not to destroy the non-vanishing property of ${\bf F}$ near the boundary of $\Omega'$), the function 
$${\bf F}_s(\kk)={\bf F}(\kk)+\sum_{{\bf n}\in \Z^2} \left \{(\chi(\kk-{\bf n})+\chi(-\kk-{\bf n})) \left[v_1^{(s)},0,v_3^{(s)} \right] + (\chi(\kk-{\bf n})-\chi(-\kk-{\bf n})) \left[0,v_2^{(s)},0\right]\right \}$$
will be periodic, never zero, and obeying the necessary symmetry. Hence we can construct a multi-step logarithm as before. 

\bigskip

We switch now to the case of a $2$-dimensional family of fermionic $2 \times 2$ matching matrices. In this case, by virtue of Corollary~\ref{crl:2Dhomotopies} we have to look at the GP-indices of the four families $\set{\alpha(0,k_2)}_{k_2 \in \R}$, $\set{\alpha(1/2,k_2)}_{k_2 \in \R}$, $\set{\alpha(k_1,0)}_{k_1 \in \R}$, and $\set{\alpha(k_1,1/2)}_{k_1 \in \R}$ in order to ensure that the family $\alpha$ is null-homotopic and hence admits a multi-step logarithm. Notice that these restrictions are indeed 1-dimensional families of matching matrices, hence their GP-indices are well-defined. When all these indices vanish, we are able to construct a multi-step logarithm. Similarly to the $1$-dimensional case, the vanishing of these indices is equivalent to the fact that the matrices $\alpha(0,0)$, $\alpha(1/2,0)$,  $\alpha(0,1/2)$ and $\alpha(1/2,1/2)$ have the same spectrum. Moreover, because of the Kramers degeneracy of their spectrum, the four matrices listed above are all simultaneously equal to $\pm \Id$. By the same reduction argument as in $d=1$, we can consider that they equal $\Id$. 

Consider the family $\gamma_1(k_1):=\alpha(k_1,0)$, with $k_1\in \R$. By the argument provided in the previous Subsection for the fermionic $1$-dimensional case, we can find two families $g_{11}(k_1)$ and $g_{12}(k_1)$ obeying the properties listed in Definition~\ref{def:multilog} and 
\[ \eu^{-\iu \, g_{12}(k_1)/2} \eu^{-\iu \, g_{11}(k_1)/2} \gamma_1(k_1) \eu^{-\iu \, g_{11}(k_1)/2} \eu^{-\iu \, g_{12}(k_1)/2} = \Id. \]
Moreover, the $g$'s are traceless and equal zero at $0$ and $\pm 1/2$. 

Let $\eta\sub{p}(x)$ be the function in \eqref{eqn:etap}. The families 
\[ \widetilde{g}_{11}(k_1,k_2) := g_{11}(k_1)\,  \eta\sub{p}(k_2) \quad \text{and} \quad \widetilde{g}_{12}(k_1,k_2) := g_{12}(k_1)\, \eta\sub{p}(k_2) \]
obey the properties of a multi-step logarithm. Define
\[ \alpha_1(\kk) := \eu^{-\iu \, \widetilde{g}_{12}(\kk)/2} \eu^{-\iu \, \widetilde{g}_{11}(\kk)/2} \alpha(\kk) \eu^{-\iu \, \widetilde{g}_{11}(\kk)/2} \eu^{-\iu \, \widetilde{g}_{12}(\kk)/2}. \]
This new family will again be a family of matching matrices, and in addition $\alpha_1(k_1,0) = \Id$ for all $k_1 \in \R$. Moreover, $\alpha_1(\pm 1/2,0)=\alpha_1(0,\pm 1/2) = \alpha_1(\pm 1/2,\pm 1/2) = \Id$. 

Define $\gamma_2(k_1) := \alpha_1(k_1,\pm 1/2)$. Repeating the previous argument we can construct two smooth families ${g}_{21}(k_1)$ and ${g}_{22}(k_1)$ as in Definition~\ref{def:multilog}, which equal zero at $k_1=0$ and $k_1=\pm 1/2$ and such that
\[ \eu^{-\iu \, g_{22}(k_1)/2} \eu^{-\iu \, g_{21}(k_1)/2} \gamma_2(k_1) \eu^{-\iu \, g_{21}(k_1)/2} \eu^{-\iu \, g_{22}(k_1)/2} = \Id. \]
Using the same function $\eta\sub{p}$ as before we can construct the corresponding 
\[ \widetilde{g}_{21}(k_1,k_2) = g_{11}(k_1) \, \eta\sub{p}(k_2-1/2) \quad \text{and} \quad \widetilde{g}_{22}(k_1,k_2) = g_{12}(k_1)\, \eta\sub{p}(k_2-1/2), \]
and define
\[ \alpha_2(\kk) := \eu^{-\iu \, \widetilde{g}_{22}(\kk)/2} \, \eu^{-\iu \, \widetilde{g}_{21}(\kk)/2}\, \alpha_1(\kk) \, \eu^{-\iu \, \widetilde{g}_{21}(\kk)/2} \, \eu^{-\iu \, \widetilde{g}_{22}(\kk)/2}. \]
This new family will again be a family of matching matrices, and
\[ \alpha_2(k_1,0) = \alpha_2(k_1,\pm 1/2) = \Id, \quad k_1\in \R. \]

Let
\[ K_1:=(0,1/2) \times (0,1/2), \quad K_2:=(0,1/2) \times (-1/2,0). \]
By two other successive constructions, we arrive at $\alpha_4(k_1,k_2)$ which equals the identity on the contour defined by $\partial K_1 \cup \partial (-K_1)\cup \partial K_2 \cup \partial (-K_2)$. Then we can write $\alpha_4(\kk) = m(\kk)\Id + \sum_{j=1}^{3} F_j(\kk) \, \sigma_j$ with $m\equiv1$ on the above contour. 

Let $I\subset (K_1\cup K_2)$ be a compact set such that $m(\kk)\geq 0$ on $\overline{K_1\cup K_2} \setminus I$. The vector field ${\bf F}(\kk)=[F_1(\kk),F_2(\kk),F_3(\kk)]$ is smooth, $\Z^2$-periodic, and odd. The restriction of ${\bf F}(\kk)$ to $I$ defines a smooth $2$-dimensional surface in $\R^3$. The origin in $\R^3$ is not an interior point of the range of ${\bf F}$, hence given $s>0$ one can find a vector ${\bf v}^{(s)}\in \R^3$ such that $\norm{{\bf v}^{(s)}}=s$ and $\norm{{\bf F}(\kk)+{\bf v}^{(s)}}>0$ on $I$. Because $I$ is compact, we can find $c_s>0$ such that
\[ \inf_{\kk\in I} \norm{{\bf F}(\kk)+{\bf v}^{(s)}} \geq c_s > 0. \]
Let $\chi\in C_0^\infty(\R^2)$ such that $0\leq \chi\leq 1$, $\chi=1$ on $I$ and ${\rm supp}(\chi)\subset (K_1\cup K_2)$. Define 
\[ {\bf F}_s(\kk) := {\bf F}(\kk) + (\chi(\kk)-\chi(-\kk)) {\bf v}^{(s)}, \quad \kk\in \bigcup_{j=1}^{2} \left (\overline{K_j}\cup \overline{-K_j}\right ), \]
and extend it to $\R^2$ by periodicity:
\[ {\bf F}_s(\kk)={\bf F}(\kk)+ \sum_{{\bf n}\in\Z^2}(\chi(\kk-{\bf n})-\chi(-\kk-{\bf n})){\bf v}^{(s)},\quad \kk\in \R^2. \]
Now we can apply the construction in \eqref{eqn:1Dalpha_eps} to $\alpha_4$ and obtain the desired multi-step logarithm for $\alpha$ in two additional steps.


\section{Construction of the multi-step logarithm: general rank} \label{sec:2d}

We come back to the general case of a $2$-dimensional family of matching matrices $\set{\alpha(\kk)}_{\kk \in \R^2}$. The aim of this Section is to show how to explicitly construct a multi-step logarithm for $\alpha$, assuming it is equivariantly null-homotopic (compare Theorem~\ref{TFAE}). We will actually prove also a stronger statement, which gives a continuous and periodic (but in general not time-reversal symmetric) multi-step logarithm for \emph{any} family of matching matrices, regardless of its equivariant homotopy class (compare Remarks~\ref{rmk:noTRS} and \ref{rmk:PeriodicFrames}).

In order to proceed with the construction of the multi-step logarithm, we will need to know what is the ``generic'' form of the spectrum of such families of matching matrices, much in the spirit of Theorem~\ref{thm:1Dhomotopies}. 

\begin{definition} \label{def:generic}
Let $\set{\alpha(\kk)}_{\kk \in \R^2}$ be an equivariantly null-homotopic family of matching matrices. We say that $\alpha$ is in \emph{generic form} if
\begin{enumerate}
 \item \label{item:genericB} $\eps = \Id$, and the spectrum of $\alpha(\kk)$ is completely non-degenerate for all $\kk \in \R^2$; or
 \item \label{item:genericF} $\eps = J$, and the spectrum of $\alpha(\kk)$ is completely non-degenerate in any compact set not containing the high-symmetry points $\kk_\sharp \in \R^2$ such that $\kk_\sharp \equiv -\kk_\sharp \bmod \Z^2$, it is \emph{doubly degenerate} at those points, and consists of clusters of two eigenvalues which are at distance at least $A$ from each other uniformly in the open balls of radius $R$ around each of the $\kk_\sharp$'s, where $A, R > 0$ are two positive constants (uniform in $\kk$).
\end{enumerate}
\end{definition}

The following result shows the origin of the terminology ``generic form''.

\begin{theorem} \label{thm:generic}
Assume that $d=2$. Let $\set{\alpha(\kk)}_{\kk \in \R^2}$ be an equivariantly null-homotopic family of matching matrices. Then one can construct a sequence $\set{\alpha_n(\kk)}_{\kk \in \R^2}$, $n \in \N$, of families of matching matrices such that
\begin{itemize}
 \item $\sup_{\kk \in \R^2} \norm{\alpha_n(\kk) - \alpha(\kk)} \to 0$ as $n \to \infty$, and
 \item for any $n \in \N$ the family $\set{\alpha_n(\kk)}_{\kk \in \R^2}$ is in generic form.
\end{itemize}
\end{theorem}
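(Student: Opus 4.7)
The plan is to construct the approximating sequence $\set{\alpha_n(\kk)}_{\kk \in \R^2}$ through a perturbative argument which generalizes the rank-2 analysis of Section~\ref{sec:m=2}, with the key geometric input being a codimension count for eigenvalue degeneracies. Specifically, the locus $\Delta \subset U(m)$ of unitary matrices admitting at least one pair of coinciding eigenvalues is a semi-algebraic subvariety of real codimension $3$ (the classical Wigner--von Neumann / Arnold transversality theorem for level crossings). Since the parameter space has dimension $2 < 3$, one expects a generic smooth $2$-parameter family to miss $\Delta$ entirely in the bosonic case; in the fermionic case Kramers degeneracy at the four high-symmetry points $\kk_\sharp$ is forced, but accidental collisions \emph{between} distinct Kramers pairs form a higher-codimension subset which a generic perturbation still avoids.

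To make this constructive and symmetry-preserving, I would first smooth out $\alpha$ by convolution with an even, $\Z^2$-periodic mollifier (as explicitly allowed by the remark closing Section~\ref{sec:Induction}), so that we may work in $C^\infty$. The perturbation is then written in the form
\[ \alpha_n(\kk) = \eu^{\iu \, \delta_n \, H_n(\kk)/2} \, \alpha(\kk) \, \eu^{\iu \, \delta_n \, H_n(\kk)/2}, \]
with $\delta_n \to 0$ and $H_n$ smooth, $\Z^2$-periodic and satisfying $\eps \, H_n(\kk) = H_n(-\kk)^t \, \eps$. This Ansatz automatically preserves unitarity, periodicity, and the matching relation \eqref{alphaTRS}, so the only remaining task is to choose $H_n$ cleverly.

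In the bosonic case, I would cover a symmetric fundamental domain by a finite collection of open balls and, by Kato's analytic perturbation theory, smoothly follow in each ball those pairs of eigenvalues of $\alpha(\kk)$ whose mutual distance can become small. The restriction of $\alpha(\kk)$ to each such pair takes values (up to an overall phase) in an $SU(2)$-subgroup, to which the vector-field perturbation in $\R^3$ of Section~\ref{sec:m=2} can be transplanted so as to lift the crossing. Gluing these local perturbations by a partition of unity which is itself symmetric under $\kk \mapsto -\kk$ and $\Z^2$-shifts yields the required $H_n$. The fermionic case proceeds identically on a compact set avoiding a small symmetric neighborhood of the $\kk_\sharp$'s; near each $\kk_\sharp$, I would separately arrange that the $m/2$ Kramers pairs have distinct eigenvalues (a transversality statement in the $J$-symmetric subspace of self-adjoint matrices at that point), after which Kato's stability of spectral clusters furnishes the uniform constants $A, R > 0$ appearing in Definition~\ref{def:generic}.

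The principal obstacle I anticipate is verifying that the symmetry constraints on $H_n$ do not obstruct transversality. The codimension statements above refer to unconstrained unitaries, and one must check that the $\eps$-symmetric perturbations still span enough directions to push $\alpha(\kk)$ off $\Delta$. Away from the high-symmetry points the relation $\eps \, H_n(\kk) = H_n(-\kk)^t \, \eps$ couples $H_n(\kk)$ to $H_n(-\kk)$ without constraining $H_n(\kk)$ individually, so full transversality survives there; at $\kk = \kk_\sharp$ the symmetry does constrain $H_n(\kk_\sharp)$, but exactly the forced structure of the spectrum (Kramers pairing) matches this constraint, leaving enough symmetric directions to separate distinct clusters. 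A finite inductive scheme that lifts one problematic degeneracy stratum at a time, each step using a symmetric partition of unity, then terminates and produces an $\alpha_n$ of the required generic form.
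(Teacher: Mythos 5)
Your overall strategy --- a Wigner--von Neumann codimension-$3$ transversality argument implemented through local reductions and glued by a symmetric partition of unity --- captures the spirit of the paper's proof, and the ansatz $\alpha_n = \eu^{\iu \delta_n H_n/2}\,\alpha\,\eu^{\iu \delta_n H_n/2}$ is a legitimate way to package a symmetry-compatible perturbation. There are, however, two substantive gaps.

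First, your sketch never invokes the hypothesis that $\alpha$ is equivariantly null-homotopic, whereas the paper's proof uses it essentially. The construction there proceeds in three stages: (i) split the spectrum at the high-symmetry points $\kk_\sharp$ (Lemma~\ref{lemma:splitting}); (ii) make the spectrum non-degenerate in a neighbourhood of the four coordinate lines $\set{k_j = p_j/2}$; (iii) split the remaining ``bulk'' degeneracies. Stage (ii) requires a smooth, periodic, \emph{time-reversal symmetric} one-dimensional logarithm $h_1$ of (an approximation of) the restrictions $\alpha(\cdot,0)$, $\alpha(0,\cdot)$, etc., so that one can sandwich $\alpha(k_1,k_2)$ between $\eu^{-\iu h_1(k_1)/2}$ and obtain a near-identity family along the line. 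By Theorem~\ref{thm:1Dhomotopies}\eqref{item:EquivariantF}, such a symmetric $1$D logarithm exists precisely when the relevant GP-index vanishes, which is exactly where null-homotopy enters via Corollary~\ref{crl:2Dhomotopies}. Your proposal is silent on how time-reversal symmetry is enforced \emph{on} those lines, where $\kk$ and its symmetric partner lie in the same half-cell and the constraint genuinely bites; there the perturbation is not free, and transversality alone cannot circumvent the topological obstruction encoded in the GP-index.

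Second, the reduction of a ``near-degenerate pair of eigenvalues'' to an effective $SU(2)$ subgroup, transplanting the rank-$2$ vector-field argument of Section~\ref{sec:m=2}, is too weak when a degeneracy has order $n > 2$: there is then no isolated pair to follow smoothly through the crossing, so the rank-$2$ spectral projection you want to track is not well defined. The paper instead isolates a whole $n$-cluster with a uniform gap $A>0$ (the $\epsilon$-cluster machinery of Lemmas~\ref{lemma1}--\ref{lemma3}), reduces it via a Kato--Nagy unitary to an effective $SU(n)$-valued family, and perturbs its three components $F_j\Sigma_j$ by a constant vector chosen via Sard's lemma (Proposition~\ref{prop1}); Proposition~\ref{prop2} then removes several simultaneous $n$-clusters, and a finite induction descends in $n$ and in the number of clusters. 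This explicit bookkeeping is what guarantees that successive localized perturbations, each small relative to the previous one, do not reintroduce crossings --- the very point your scheme acknowledges only in one heuristic sentence about ``a finite inductive scheme.'' Without the cluster analysis, the claim that the partition-of-unity gluing terminates and produces the uniform constants $A,R$ of Definition~\ref{def:generic} is not justified.
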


A stronger form of the above statement holds, if we drop the requirement of time-reversal symmetry for the approximants.

\begin{theorem} \label{thm:noTRS}
Assume that $d=2$. Let $\set{\alpha(\kk)}_{\kk \in \R^2}$ be a family of matching matrices. Then one can construct a sequence $\set{\widehat{\alpha}_n(\kk)}_{\kk \in \R^2}$, $n \in \N$, of continuous and $\Z^d$-periodic families of unitary matrices such that
\begin{itemize}
 \item $\sup_{\kk \in \R^2} \norm{\widehat{\alpha}_n(\kk) - \alpha(\kk)} \to 0$ as $n \to \infty$, and
 \item for any $n \in \N$ the matrix $\widehat{\alpha}_n(\kk)$ has completely non-degenerate spectrum for all $\kk \in \R^2$, which is moreover invariant under the exchange $\kk \mapsto -\kk$.
\end{itemize}
\end{theorem}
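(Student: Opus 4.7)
The plan is to first smooth $\alpha$ by mollification and then to perform two perturbations: a time-reversal-symmetric transversality correction that lifts the eigenvalue degeneracies away from the four high-symmetry points $\kk_\sharp$ with $\kk_\sharp \equiv -\kk_\sharp \bmod \Z^2$, and then a local, non-time-reversal-symmetric perturbation near each $\kk_\sharp$ that lifts the remaining Kramers degeneracies without destroying the spectral symmetry under $\kk \mapsto -\kk$.

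Convolving $\alpha$ with a smooth, even, $\Z^2$-periodic mollifier as in Lemma~2.3 of \cite{CorneanHerbstNenciu16}, I may assume $\alpha$ itself is smooth. The subset $\Delta \subset U(m)$ of unitaries with at least two coinciding eigenvalues is a real algebraic subvariety of codimension at least $3$, and for non-fixed $\kk_0$ the matching matrix condition $\eps\alpha(\kk) = \alpha(-\kk)^t\eps$ leaves $\alpha(\kk_0)$ unconstrained in $U(m)$ (it only determines $\alpha(-\kk_0)$). A standard transversality argument within the Banach manifold of smooth matching matrices therefore provides, for each $n$, an arbitrarily small matching perturbation $\alpha_n'$ whose image avoids $\Delta$ outside small balls $B_R(\kk_\sharp)$ around the four $\kk_\sharp$'s. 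In the bosonic case ($\eps = \Id$), the matching condition at $\kk_\sharp$ only forces $\alpha_n'(\kk_\sharp)$ to be a symmetric unitary, whose spectrum is generically non-degenerate (as the example $\eu^{\iu H}$ with $H$ real symmetric and non-degenerate shows), so the transversality may be performed all the way, giving $\widehat\alpha_n = \alpha_n'$.

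In the fermionic case ($\eps = J$), the matching condition forces each eigenvalue of $\alpha_n'(\kk_\sharp)$ to be Kramers-degenerate, so a further step is needed inside each $B_R(\kk_\sharp)$. For $R$ small, continuity makes the spectrum of $\alpha_n'(\kk)$ split inside $B_R(\kk_\sharp)$ into $m/2$ well-separated clusters of two eigenvalues each, carried by smooth rank-$2$ spectral subbundles; on each such bundle, $\alpha_n'$ is $U(2)$-valued and admits the Pauli representation $\eu^{\iu\phi_0(\kk)}(m(\kk)\Id + \iu \sum_{j=1}^3 F_j(\kk)\sigma_j)$ as in Section~\ref{sec:m=2}, with ${\bf F}(\kk_\sharp) = 0$ and ${\bf F}(-\kk) = -{\bf F}(\kk)$. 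Since the cluster eigenvalues are $\eu^{\iu\phi_0(\kk)}(m(\kk) \pm \iu\norm{{\bf F}(\kk)})$, both the spectrum symmetry under $\kk \mapsto -\kk$ and the lifting of Kramers degeneracy at $\kk_\sharp$ are controlled by the single scalar $\norm{{\bf F}(\kk)}$. By an additional transversality step, modifying $\alpha_n'$ only inside $B_R$ by an arbitrarily small amount, I arrange that $d{\bf F}(\kk_\sharp)\colon\R^2 \to \R^3$ is injective, with image a $2$-plane $V_\sharp$. I then pick ${\bf v}_0 \in V_\sharp^\perp$ of unit length and set
\[
{\bf v}(\kk) := {\bf v}_0 - \frac{{\bf v}_0 \cdot {\bf F}(\kk)}{\norm{{\bf F}(\kk)}^2}\,{\bf F}(\kk) \text{ for } \kk \ne \kk_\sharp, \qquad {\bf v}(\kk_\sharp) := {\bf v}_0.
\]
This field is continuous at $\kk_\sharp$ because ${\bf v}_0 \cdot {\bf F}(\kk) = O(\norm{\kk-\kk_\sharp}^2)$ while $\norm{{\bf F}(\kk)} \sim \norm{\kk-\kk_\sharp}$, it is even in $\kk$ because ${\bf F}(-\kk) = -{\bf F}(\kk)$ preserves the line spanned by ${\bf F}(\kk)$, and it is pointwise orthogonal to ${\bf F}$ by construction.

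With an even bump $\chi$ supported in $B_R(\kk_\sharp)$ and with $\chi(\kk_\sharp) = 1$, the replacement ${\bf F}_s(\kk) := {\bf F}(\kk) + s\,\chi(\kk)\,{\bf v}(\kk)$ yields
\[
\norm{{\bf F}_s(\kk)}^2 = \norm{{\bf F}(\kk)}^2 + s^2\,\chi(\kk)^2\,\norm{{\bf v}(\kk)}^2,
\]
which is manifestly even in $\kk$ and strictly positive at $\kk_\sharp$. Substituting ${\bf F}_s$ for ${\bf F}$ inside each cluster and each ball $B_R(\kk_\sharp)$, renormalizing the $SU(2)$-block so that $m^2 + \norm{{\bf F}_s}^2 = 1$, and leaving the rest of $\alpha_n'$ unchanged, defines $\widehat\alpha_n$; letting $s = s_n \to 0$ together with the transversality corrections tending to $0$ produces the required uniformly convergent sequence. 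The main obstacle to this plan is arranging the cluster decomposition and the injectivity of $d{\bf F}(\kk_\sharp)$ without the null-homotopy hypothesis used in Theorem~\ref{thm:generic}: this has to be achieved by iterating the matching-matrix transversality, cluster by cluster and high-symmetry point by high-symmetry point, each step modifying the family only inside a small ball so as not to spoil the structures already in place.
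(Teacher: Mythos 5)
Your strategy (mollify, then abstract transversality to push the image off the degeneracy variety $\Delta \subset U(m)$ away from the high-symmetry points, then a local non-TRS fix at each $\kk_\sharp$) is a genuinely different route from the paper's. The paper instead proceeds explicitly in three stages: the Local Splitting Lemma (Lemma~\ref{lemma:splitting}, Step~3) breaks the Kramers degeneracy at $\kk_\sharp$ by Kato--Nagy intertwiners and linear interpolation of the principal arguments along rays through $\kk_\sharp$; the Proposition in Section~5.2 (see the subsequent Remark, which invokes the second part of Lemma~\ref{lemma:splitting} together with \cite[Prop.~5.4(1)]{CorneanMonacoTeufel17}) extends non-degeneracy across thin slabs around the lines $\set{k_j=p_j/2}$; Proposition~\ref{propolinie} and the cluster machinery of Section~5.3 then remove all residual degeneracies in the bulk. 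Your orthogonal-field trick near $\kk_\sharp$ --- taking ${\bf v}$ pointwise orthogonal to ${\bf F}$ so that $\norm{{\bf F}_s}^2 = \norm{{\bf F}}^2 + s^2\chi^2\norm{{\bf v}}^2$ is manifestly even --- is a clean and distinct mechanism for preserving spectral $\kk\mapsto-\kk$-symmetry, and is not what the paper does; it is an attractive alternative to the ray-interpolation argument.

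There is, however, a genuine gap in the transversality step as you state it. You only claim that $\alpha_n'$ avoids $\Delta$ \emph{outside} small balls $B_R(\kk_\sharp)$, but your ${\bf v}$-construction needs more: it requires ${\bf F}$ to vanish \emph{only} at $\kk_\sharp$ on all of $B_R(\kk_\sharp)$, otherwise the ratio in the definition of ${\bf v}(\kk)$ is $0/0$ at other interior zeros of ${\bf F}$, and even after taking limits the correction $s^2\chi^2\norm{{\bf v}}^2$ can degenerate exactly at those points. (And of course without the genericity you also do not get $d{\bf F}(\kk_\sharp)$ injective, which is indispensable for the continuity of ${\bf v}$ at $\kk_\sharp$.) Both are \emph{achievable} by a careful transversality statement --- since $\Delta$ has codimension $3$, a generic matching family meets $\Delta$ only at the forced Kramers loci, and ${\bf F}$ should generically vanish transversally there --- but as written the argument proves less than it uses. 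You also acknowledge but do not carry out the ``cluster by cluster, point by point'' bookkeeping; this is where the paper spends most of Section~5.3 (Lemmas~\ref{lemma1}--\ref{lemma3} and Propositions~\ref{prop1}--\ref{prop2}), guaranteeing that each local fix does not reintroduce degeneracies where they were already lifted. Finally, note the paper aims for \emph{constructive} perturbations (Sard-type arguments produce explicit ${\bf v}^{(s)}$'s); invoking Sard--Smale on a Banach manifold of matching families is a softer existence argument and is at some tension with the algorithmic emphasis of the paper, though not logically wrong.
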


The proofs of the above Theorems are constructive but rather technical, and we defer them to the next Section. We conclude this Section by exhibiting the required multi-step logarithm. In the following we may assume that $\alpha(\kk) \in U(m)$ with $m \ge 3$, since in view of Remark~\ref{rmk:Properties} any family of matching matrices with $m=1$ admits a ``one-step'' logarithm in any dimension, and we have already treated the case $m=2$ in Section~\ref{sec:m=2}.

In the following, we denote by $\widehat{\alpha}$ any of the approximants provided by Theorem~\ref{thm:generic} or \ref{thm:noTRS} such that
\begin{equation} \label{eqn:<2}
\sup_{\kk \in \R^2} \norm{\widehat{\alpha}(\kk) - \alpha(\kk)} < 2.
\end{equation}

\begin{proposition} \label{prop:label}
Let $\Omega \subset \R^2$ be a star-shaped compact domain such that the family of unitary matrices $\widehat{\alpha}(\kk)$ has non-degenerate spectrum for $\kk \in \Omega$. Then the eigenvalues of $\widehat{\alpha}$ can be labeled so that they define smooth functions on $\Omega$.
\end{proposition}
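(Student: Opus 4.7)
The plan is to combine holomorphic functional calculus, which supplies smooth local branches of the eigenvalues wherever the spectrum is non-degenerate, with the star-shaped hypothesis, which provides a canonical way of comparing labels globally on $\Omega$ and rules out monodromy.

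First, I fix a center $\kk_0 \in \Omega$ of the star and label the (pairwise distinct) eigenvalues of $\widehat{\alpha}(\kk_0)$ arbitrarily as $\lambda_1^0, \ldots, \lambda_m^0$. For each $\kk \in \Omega$ the radial segment $s \mapsto (1-s)\kk_0 + s\kk$, $s \in [0,1]$, lies entirely in $\Omega$ by the star-shaped assumption, and along it $\widehat{\alpha}$ is smooth with uniformly non-degenerate spectrum. Classical perturbation theory (in the spirit of Kato, Ch.~II) then produces a unique continuous lift $s \mapsto (\lambda_1(\kk, s), \ldots, \lambda_m(\kk, s))$ of the unordered spectrum with $\lambda_j(\kk, 0) = \lambda_j^0$. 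Setting $\lambda_j(\kk) := \lambda_j(\kk, 1)$ yields a well-defined global labeling on $\Omega$, continuous at each $\kk_1$ because the radial paths vary continuously with $\kk$ and the spectrum is non-degenerate all along them.

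To upgrade continuity to smoothness at an arbitrary $\kk_1 \in \Omega$, I invoke Riesz projections. Set $\delta := \tfrac{1}{3} \min_{i \neq j} |\lambda_i(\kk_1) - \lambda_j(\kk_1)| > 0$ and let $\Gamma_j \subset \C$ be the positively oriented circle of radius $\delta$ centered at $\lambda_j(\kk_1)$. By continuity of $\widehat{\alpha}$ and of the spectrum, there is a neighborhood $V$ of $\kk_1$ in $\Omega$ such that, for all $\kk \in V$, each $\Gamma_j$ lies in the resolvent set of $\widehat{\alpha}(\kk)$ and encloses exactly one eigenvalue. Then
\[ P_j(\kk) := \frac{1}{2\pi\iu} \oint_{\Gamma_j} (z \, \Id - \widehat{\alpha}(\kk))^{-1} \, \di z \qquad \text{and} \qquad \mu_j(\kk) := \Tr\!\left(\widehat{\alpha}(\kk) P_j(\kk)\right) \]
depend smoothly on $\kk \in V$, since the resolvent is smooth jointly in $(z, \kk)$ on a neighborhood of $\Gamma_j \times V$. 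Shrinking $V$ further so that the continuous label $\lambda_j(\kk)$ remains within the disk bounded by $\Gamma_j$, I conclude $\lambda_j \equiv \mu_j$ on $V$, whence $\lambda_j$ is smooth at $\kk_1$; since $\kk_1$ was arbitrary, $\lambda_j \in C^\infty(\Omega)$.

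I expect the delicate step to be the verification that the radial labeling is globally well-defined and continuous on $\Omega$: this is exactly where the star-shaped hypothesis is used, both to keep each segment inside $\Omega$ (so that the non-degeneracy assumption applies along the tracking path) and, via simple-connectedness, to preclude any monodromy of the continuous eigenvalue lift which could otherwise obstruct a globally consistent choice of labels.
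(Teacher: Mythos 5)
Your proposal follows essentially the same route as the paper: fix a center $\kk_0$, transport a labeling of $\sigma(\widehat\alpha(\kk_0))$ radially to every $\kk\in\Omega$ using the star-shaped hypothesis, and then use Riesz projections $P_j$ and traces $\Tr(\widehat\alpha(\kk)P_j(\kk))$ to upgrade continuity to smoothness. The paper is a bit more explicit precisely in the step you flag as delicate — it verifies that the radial labels for nearby endpoints $\kk$, $\widetilde\kk$ agree by a backward-in-$t$ bootstrap comparison of the two radial paths — but the overall argument and conclusion coincide.
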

\begin{proof}
Assume that every point $\kk$ of $\Omega$ can be connected to a fixed $\kk_0$ through a straight segment. Assume that $\widehat{\alpha}(\cdot)$ has $m$ non-degenerate eigenvalues. There exists a minimal distance $A>0$ between any two eigenvalues of $\widehat{\alpha}(\cdot)$ (compare Lemma~\ref{lemma2} below for $n=1$). 

We can label the eigenvalues of $\widehat{\alpha}(\kk_0)$ using the increasing order of their arguments $0\leq \phi_1(\kk_0) < \cdots < \phi_m(\kk_0) < 2\pi$.  Consider the family $\gamma(t) := \widehat{\alpha}((1-t) \, \kk_0 + t\, \kk)$, with $0\leq t\leq 1$. It is uniformly continuous on $[0,1]$ because $\widehat{\alpha}$ is uniformly continuous on $\Omega$. Hence there exists $\delta(A)>0$ such that 
\[ \norm{\gamma(t)-\gamma(t')} \leq A/100 \quad \text{whenever} \quad |t-t'| \leq \delta(A). \]
The spectrum of $\gamma(t)$ lies at a distance less than $A/10$ from the spectrum of $\gamma(t')$, for every $|t-t'| \leq \delta(A)$. Note that $\delta(A)$ does not depend on $\kk$.

Consider the Riesz projections of $\gamma(0)$ given by 
$$P_j(0)=\frac{1}{2\pi \iu} \int_{|z-\lambda_j(0)|=A/2}(z \Id -\gamma(0))^{-1} dz, \quad 1\leq j\leq N.$$ 
The spectrum of $\gamma(t)$ lies at a distance less than $A/10$ from the spectrum of $\gamma(0)$, for every $0 \leq t \leq \delta(A)$. Then the formulas
\[ P_j(t) = \frac{1}{2 \pi \iu} \int_{|z-\lambda_j(0)| = A/2} (z \Id- \gamma(t))^{-1} \, \di z, \quad 1 \leq j \leq m, \quad 0 \leq t \leq \delta(A) \]
give a smooth extension of the spectral projections of $\gamma(t)$, and $\lambda_j(t):=\Tr (P_j(t) \gamma(t))$ are its labeled eigenvalues. We can then repeat the construction on the interval $(\delta(A),2\delta(A)]$ starting from the labeling at $t=\delta(A)$. After a finite number of steps we reach $t=1$. 

Now define $\widetilde{\gamma}(t) = \widehat{\alpha}((1-t) \, \kk_0 + t \, \widetilde{\kk})$. If $\|\kk-\widetilde{\kk}\|$ is small enough, then 
\[ \sup_{0\leq t\leq 1} \norm{\gamma(t)-\widetilde{\gamma}(t)} \leq A/100. \]
This shows that around each eigenvalue of $\gamma(t)$ there exists exactly one eigenvalue of $\widetilde{\gamma}(t)$ which is situated sufficiently close to it. If $\lambda_j(1) \in \sigma(\gamma(1))$ and $\widetilde{\lambda}_{j'}(1) \in \sigma(\widetilde{\gamma}(1))$ are closer than $A/10$, then
\[
\left| \lambda_j(1-\delta) - \widetilde{\lambda}_{j'}(1-\delta) \right| \le \left| \lambda_j(1-\delta) - \lambda_{j}(1) \right| + \left| \lambda_{j}(1) - \widetilde{\lambda}_{j'}(1) \right| + \left| \widetilde{\lambda}_{j'}(1) - \widetilde{\lambda}_{j'}(1-\delta) \right| < \frac{3 A}{10}
\]
so that $\lambda_j(1-\delta)$ and $\widetilde{\lambda}_{j'}(1-\delta)$ are also close for the same choice of indices $j$ and $j'$. Since we know that there is only one eigenvalue of $\widetilde{\gamma}(1-\delta)$ which is close to the $j$-th eigenvalue of $\gamma(1-\delta)$, we deduce that it must be the $j'$-th, and we can then deduce that they are actually closer than $A/10$. We can then bootstrap the above argument all the way to $t=0$, where then $\lambda_j(0)$ and $\widetilde{\lambda}_{j'}(0)$ must coincide, thus implying that $j'=j$. Hence the labeling coincides for all $t \in [0,1]$ because it coincided at $t=0$.

In this way we also globally label the (Riesz) spectral projections, and they are as smooth as $\widehat{\alpha}$. Thus $\lambda_j(\kk)={\rm Tr}(P_j(\kk)\widehat{\alpha}(\kk))$, $j \in \set{1, \ldots, m}$, are also smooth.
\end{proof}

\begin{proposition} \label{prop:cut}
Let $\set{\widehat{\alpha}(\kk)}_{\kk \in \R^2}$ be an approximation of $\alpha$ as in Theorem~\ref{thm:generic} such that \eqref{eqn:<2} holds. Then one can construct a continuous and $\Z^2$-periodic function $\Lambda \colon \R^2 \to \R$ such that $\eu^{\iu \, \Lambda(\kk)}$ always lies in the resolvent set of $\alpha_n(\kk)$ for all $\kk \in \R^2$. If instead $\set{\widehat{\alpha}(\kk)}_{\kk \in \R^2}$ is as in Theorem~\ref{thm:noTRS}, the function $\Lambda$ can be chosen to be even as well.
\end{proposition}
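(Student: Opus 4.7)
The plan is to construct $\Lambda$ by piecing together local branch-cut choices on a finite open cover of a fundamental domain of the $\Z^2$-action on $\R^2$, exploiting the uniform spectral separation furnished by the generic form of $\widehat{\alpha}$. By $\Z^2$-periodicity, compactness of the fundamental domain, and the generic form (Theorem~\ref{thm:generic}) or the completely non-degenerate approximation (Theorem~\ref{thm:noTRS}), there exists a uniform $\eta > 0$ such that for every $\kk \in \R^2$ the complement of $\sigma(\widehat{\alpha}(\kk))$ in the unit circle contains an open arc of length at least $\eta$: indeed, with at most $m$ eigenvalues, or $m/2$ well-separated clusters in the fermionic case near the high-symmetry points $\kk_\sharp$, the pigeonhole principle gives a gap of size $\geq 2\pi/m$ at each $\kk$, and Hausdorff-continuity of the spectrum plus compactness upgrades this to a uniform bound. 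For each $\kk_0$, I would pick $\theta_0 \in \R$ centered on a maximal such arc; by continuity of $\widehat{\alpha}$, $\eu^{\iu \theta_0}$ stays at positive distance from $\sigma(\widehat{\alpha}(\kk))$ on a whole open neighborhood $U_{\kk_0}$. Extract a finite cover $\{U_j\}_{j=1}^N$ of the fundamental domain with associated angles $\theta_j$, and extend the whole datum $\Z^2$-periodically to $\R^2$.

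The main technical step is to patch these local constants into a single continuous, $\Z^2$-periodic $\Lambda$. The obstacle is that a naive convex combination $\sum_j \chi_j \theta_j$ need not avoid the spectrum, since different $\theta_j$'s may select different arcs of $S^1 \setminus \sigma(\widehat{\alpha}(\kk))$ on the overlaps. I would resolve this by refining the cover so that on every non-empty $U_i \cap U_j$ the lifts $\theta_i, \theta_j \in \R$ (adjusted by integer multiples of $2\pi$ if necessary) lie in the same connected component of $S^1 \setminus \sigma(\widehat{\alpha}(\kk))$ for every $\kk$ in the overlap, with the whole segment $[\theta_i, \theta_j] \subset \R$ inside this component. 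Proposition~\ref{prop:label} applied on each simply connected $U_j$ provides a continuous labeling of the eigenvalues, and hence of the gaps between them; shrinking the $U_j$'s until the spectrum shifts by less than, say, $\eta/4$ across each cell ensures that the relevant arc can be tracked coherently across overlaps and that compatible lifts $\widetilde{\theta}_j$ can be chosen. With $\{\chi_j\}$ a $\Z^2$-equivariant partition of unity subordinate to the periodically extended cover, the function
\[ \Lambda(\kk) := \sum_{j} \chi_j(\kk)\, \widetilde{\theta}_j(\kk) \]
is at each $\kk$ a convex combination of points lying in a common arc of $S^1 \setminus \sigma(\widehat{\alpha}(\kk))$, so $\eu^{\iu \Lambda(\kk)}$ avoids $\sigma(\widehat{\alpha}(\kk))$; continuity and $\Z^2$-periodicity are built in by construction.

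For the even variant corresponding to Theorem~\ref{thm:noTRS}, one exploits the symmetry $\sigma(\widehat{\alpha}(\kk)) = \sigma(\widehat{\alpha}(-\kk))$ by selecting the cover, the partition of unity, and the lifts $\widetilde{\theta}_j$ in a $\kk \mapsto -\kk$ symmetric manner, then replaces $\Lambda$ by the symmetrization $\tfrac{1}{2}(\Lambda(\kk) + \Lambda(-\kk))$; the same convexity argument inside a common arc shows this is still a valid branch cut. The hardest part of the whole construction is the coherent arc-selection in the patching step, which amounts to finding a continuous section of the locally-trivial fibration whose fiber over $\kk$ is the finite set of arcs of $S^1 \setminus \sigma(\widehat{\alpha}(\kk))$; the uniform spectral gap coming from the generic form is precisely what makes this local-to-global problem tractable on a sufficiently fine cover.
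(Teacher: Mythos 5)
Your proposal takes a genuinely different route from the paper's. The paper applies Proposition~\ref{prop:label} \emph{once}, on the star-shaped compact set $\Omega = [0,1/2]\times[-1/2,1/2]$ (minus the balls $B_R(\kk_i)$ in the fermionic case), to obtain globally labeled continuous arguments $\phi_1(\kk) \le \cdots \le \phi_m(\kk)$ for the eigenvalues, and then simply sets $\Lambda(\kk) := \tfrac12\bigl(\phi_2(\kk) + \phi_3(\kk)\bigr)$ — the midpoint of one fixed labeled arc. Evenness and periodicity are then automatic from the symmetry of the spectrum, and the fermionic case only requires extending $\Lambda$ into the balls $B_R(\kk_i)$ while staying between two Kramers clusters. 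You instead patch local constant branch-cut choices via a partition of unity, which introduces exactly the difficulty that the paper's single global application of Proposition~\ref{prop:label} sidesteps.

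This is where the gap lies. The assertion that ``shrinking the $U_j$'s ensures the relevant arc can be tracked coherently across overlaps'' is not a proof: a uniform spectral gap guarantees local triviality of the arc fibration, but says nothing about the existence of a global section on a non-simply-connected base — the arcs can permute around loops. Two facts rescue the construction, and your proposal uses neither explicitly. First, the half unit cell with the boundary identifications dictated by periodicity and evenness is a simply connected pillowcase, so the monodromy is trivial once one works there (this is implicit in the paper's ``By symmetry, it suffices to define $\Lambda$ on $[0,1/2]\times[-1/2,1/2]$''). Second, and this is the part your argument would actually fail on: in the fermionic case the paper invokes the vanishing of the GP-indices of the four $1$-dimensional restrictions of $\widehat{\alpha}$ (inherited from the null-homotopy of $\alpha$ via \cite[Prop.~5.2]{CorneanMonacoTeufel17}) to conclude that the eigenvalue pairs clustering at one high-symmetry point remain paired consistently at the others; without that input, the inter-cluster gap need not persist from one ball $B_R(\kk_i)$ to the next, and a global branch cut need not exist. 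Your sketch mentions the clusters but never touches the null-homotopy hypothesis at this stage. Finally, replacing $\Lambda$ by $\tfrac12(\Lambda(\kk)+\Lambda(-\kk))$ is only a valid branch cut if you already know $\Lambda(\kk)$ and $\Lambda(-\kk)$ lie in the same arc of $S^1\setminus\sigma(\widehat{\alpha}(\kk))$; the cleaner route is to make $\Lambda$ even by construction, which the evenness of the labeled eigenvalues gives for free.
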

\begin{proof}
By symmetry, it suffices to define $\Lambda$ continuously on the half unit cell $[0,1/2] \times [-1/2,1/2]$. Evenness and periodicity dictate how the function should be extended to the whole $\R^2$.

The bosonic case and the case in which there is no time-reversal symmetry are simpler to treat, since in these cases $\widehat{\alpha}$ has non-degenerate spectrum everywhere. Using the continuous labeling of the eigenvalues provided by Proposition~\ref{prop:label}, we can simply define
\[ \Lambda(\kk) := \frac{\phi_2(\kk) + \phi_3(\kk)}{2}, \]
where $\phi_j(\kk)$, $j \in \set{2,3}$ are continuous choices of the arguments for $\lambda_j(\kk)$.

The fermionic case is instead slightly more involved. The above definition works in the compact, star-shaped region
\[ \Omega := \left( \left[0,\frac12\right] \times \left[-\frac12,\frac12\right] \right) \setminus \bigcup_{i=1}^{4} B_R(\kk_i), \]
where the points $\kk_i$ are defined in \eqref{eqn:high-symmetry} and $R>0$ is as in Definition~\ref{def:generic}. In $\Omega$, the eigenvalues of $\widehat{\alpha}(\kk)$ are non-degenerate and can be labelled continuously. Instead, in the balls of radius $R$ around the points $\kk_i$, pairs of eigenvalues will cluster and become the doubly degenerate eigenvalues of $\widehat{\alpha}(\kk_i)$, but the clusters themselves stay at a positive distance $A>0$ from each other. Now, since $\widehat{\alpha}$ and $\alpha$ are close, also the GP-indices of the four restrictions of $\widehat{\alpha}$ which determine its homotopy classes will vanish \cite[Prop.~5.2]{CorneanMonacoTeufel17}, in view of the hypothesis in Theorem~\ref{thm:generic} that $\alpha$ is equivariantly null-homotopic (compare Theorem~\ref{thm:2Dhomotopies}). This implies in particular that, moving along these four directions, the eigenvalues which ``move out'' of one of the balls $B_R(\kk_i)$ have to come together again and form another cluster in any of the other balls $B_R(\kk_j)$. Thus, we may assume that the extensions to these balls of the eigenvalues labeled in $\Omega$ as $\set{\lambda_1, \lambda_2}$ form one cluster, while $\set{\lambda_3, \lambda_4}$ form a different cluster. As a consequence, the function $\Lambda$ defined above approaches the balls $B_R(\kk_i)$ while staying between two separate clusters: since there is a minimal positive distance $A>0$ between clusters, its definition can be extended continuously inside $B_R(\kk_i)$ while staying in the resolvent set of $\widehat{\alpha}(\kk)$, for example by staying at fixed distance from one of the clusters.
\end{proof}

With the tools above, we can finally prove

\begin{theorem} \label{thm:m-s-log}
Let $\set{\alpha(\kk)}_{\kk \in \R^2}$ be a family of matching matrices. Then one can construct a continuous and $\Z^2$-periodic multi-step logarithm for $\alpha$.

If moreover $\alpha$ is equivariantly null-homotopic, then the multi-step logarithm can be chosen to be also time-reversal symmetric.
\end{theorem}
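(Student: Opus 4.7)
The plan is to realize the two-step construction outlined at the beginning of Section~\ref{sec:m=2}: approximate $\alpha$ by a family $\widehat{\alpha}$ whose spectrum admits a globally continuous branch of the logarithm, use that branch to produce a self-adjoint $h_2$ with $\widehat{\alpha}(\kk) = \eu^{\iu h_2(\kk)}$, and then apply the Cayley transform to the sandwich $\eu^{-\iu h_2(\kk)/2}\, \alpha(\kk) \, \eu^{-\iu h_2(\kk)/2}$ to extract the complementary piece $h_1$. The two together yield the two-step logarithm
\[
\alpha(\kk) = \eu^{\iu h_2(\kk)/2} \, \eu^{\iu h_1(\kk)} \, \eu^{\iu h_2(\kk)/2},
\]
which satisfies Definition~\ref{def:multilog} with $M=2$ and is therefore a multi-step logarithm. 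For the first, unsymmetric assertion I would draw $\widehat{\alpha}$ from Theorem~\ref{thm:noTRS}, which yields non-degenerate spectrum everywhere at the price of losing time-reversal symmetry. For the second assertion (under the null-homotopy hypothesis), I would invoke Theorem~\ref{thm:generic} instead, so as to preserve time-reversal symmetry while allowing doubly-degenerate clusters only near the high-symmetry points. In either case I arrange $\sup_{\kk} \norm{\widehat{\alpha}(\kk) - \alpha(\kk)} < 2$, as in \eqref{eqn:<2}.

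Next, Proposition~\ref{prop:cut} supplies a continuous, $\Z^2$-periodic branch-cut function $\Lambda \colon \R^2 \to \R$ (even in the null-homotopic case) such that $\eu^{\iu \Lambda(\kk)}$ lies in the resolvent set of $\widehat{\alpha}(\kk)$ for every $\kk$. I would define $h_2(\kk)$ by holomorphic functional calculus as $-\iu \log_\kk \widehat{\alpha}(\kk)$, where $\log_\kk$ is the branch of the logarithm on $\C \setminus \set{r \, \eu^{\iu \Lambda(\kk)} : r \ge 0}$ whose imaginary part lies in $(\Lambda(\kk) - 2\pi, \, \Lambda(\kk))$. Continuity in $\kk$ follows from the continuity of $\widehat{\alpha}$ and $\Lambda$ by a standard Riesz-contour argument (the contour enclosing $\sigma(\widehat{\alpha}(\kk))$ can be taken locally constant); periodicity is inherited from that of $\widehat{\alpha}$ and $\Lambda$; and self-adjointness of $h_2(\kk)$ follows from unitarity of $\widehat{\alpha}(\kk)$ together with the fact that $-\iu \log_\kk$ is real-valued on the unit circle. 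In the null-homotopic setting, the TRS relation $\eps \widehat{\alpha}(\kk) = \widehat{\alpha}(-\kk)^t \eps$ together with the evenness of $\Lambda$ propagates through functional calculus (since $\log_\kk$ commutes with transposition and with similarity) to give $\eps \, h_2(\kk) = h_2(-\kk)^t \, \eps$. By construction $\widehat{\alpha}(\kk) = \eu^{\iu h_2(\kk)}$.

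Finally, combining $\widehat{\alpha}(\kk) = \eu^{\iu h_2(\kk)}$ with \eqref{eqn:<2} gives
\[
\sup_{\kk \in \R^2} \norm{\eu^{-\iu h_2(\kk)/2} \, \alpha(\kk) \, \eu^{-\iu h_2(\kk)/2} - \Id} = \sup_{\kk \in \R^2} \norm{\alpha(\kk) - \widehat{\alpha}(\kk)} < 2,
\]
so \cite[Prop.~3.10]{CorneanMonacoTeufel17} produces a continuous, $\Z^2$-periodic family of self-adjoint matrices $h_1(\kk)$ with $\eu^{\iu h_1(\kk)} = \eu^{-\iu h_2(\kk)/2} \, \alpha(\kk) \, \eu^{-\iu h_2(\kk)/2}$, which rearranges to the desired two-step logarithm. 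The principal-branch Cayley logarithm inherits the TRS property of its argument (by the same functional-calculus reasoning as above), so $h_1$ is time-reversal symmetric in the null-homotopic case. The substantive content of the theorem has really been absorbed into Theorems~\ref{thm:generic} and \ref{thm:noTRS} and into Proposition~\ref{prop:cut}; the main obstacle lies in the branch-cut construction of Proposition~\ref{prop:cut}, which in the symmetric case requires the vanishing of the four GP-indices of $\widehat{\alpha}$ (inherited from those of $\alpha$ via perturbative stability of the GP-index and Theorem~\ref{thm:2Dhomotopies}) in order to steer $\Lambda$ continuously through the pinched eigenvalue clusters at the high-symmetry points. Once $h_2$ is in hand, the passage to the full two-step logarithm is routine.
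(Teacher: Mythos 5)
Your proof is correct and takes essentially the same route as the paper's: choose $\widehat{\alpha}$ from Theorem~\ref{thm:noTRS} (or \ref{thm:generic} in the equivariant case) with the bound \eqref{eqn:<2}, obtain the branch-cut function $\Lambda$ from Proposition~\ref{prop:cut}, extract $h_2$ via a $\Lambda$-shifted logarithm, and then obtain $h_1$ from the Cayley transform of $\eu^{-\iu h_2/2}\alpha\,\eu^{-\iu h_2/2}$, which is uniformly close to $\Id$. The only (cosmetic) difference is that you use a $\kk$-dependent branch of the logarithm directly, whereas the paper first phase-shifts $\widehat{\alpha}$ by $\eu^{-\iu\Lambda}$, applies the Cayley transform, and adds $\Lambda\Id$ back; these produce the same $h_2$ up to the choice of branch interval.
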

\begin{proof}
As above, we denote by $\widehat{\alpha}$ any of the approximants provided by Theorem~\ref{thm:generic} or \ref{thm:noTRS} such that \eqref{eqn:<2} holds. From Proposition~\ref{prop:cut} we end up with a continuous, periodic, and possibly even function $\Lambda \colon \R^2 \to \R$ such that $\eu^{\iu \Lambda(\kk)}$ always lies in the resolvent set of $\widehat{\alpha}(\kk)$, for all $\kk \in \R^2$. This means that $-1$ is always in the resolvent set of the family of unitary matrices defined by $\widetilde{\alpha}(\kk) := \eu^{-\iu \Lambda(\kk)} \widehat{\alpha}(\kk)$, and consequently one can write $\widetilde{\alpha}(\kk) = \eu^{\iu \widetilde{h}(\kk)}$ for $\{\widetilde{h}(\kk)\}_{\kk \in \R^2}$ a continuous and $\Z^2$-periodic family of self-adjoint matrices, which moreover satisfies the time-reversal symmetry constraint \eqref{item:h_d} in Definition~\ref{def:multilog} if the original $\widehat{\alpha}$ is as in Theorem~\ref{thm:generic}. We conclude that
\[ \widehat{\alpha}(\kk) = \eu^{\iu h_2(\kk)}, \quad \text{with} \quad h_2(\kk) := \widetilde{h}(\kk) + \Lambda(\kk) \Id. \]
The family $\set{h_2(\kk)}_{\kk \in \R^2}$ still obeys the properties listed in Definition~\ref{def:multilog} (possibly with the exception of time-reversal symmetry), and moreover
\[ \sup_{\kk \in \R^2} \norm{\Id - \eu^{-\iu h_2(\kk)/2} \alpha(\kk) \eu^{-\iu h_2(\kk)/2}} < 2 \]
by \eqref{eqn:<2}. In turn this implies that $-1$ lies in the resolvent set of $\eu^{-\iu h_2(\kk)/2} \alpha(\kk) \eu^{-\iu h_2(\kk)/2}$ for all $\kk \in \R^2$, which gives that
\[ \eu^{-\iu h_2(\kk)/2} \alpha(\kk) \eu^{-\iu h_2(\kk)/2} = \eu^{\iu h_1(\kk)}, \quad \text{or equivalently} \quad \alpha(\kk) = \eu^{\iu h_2(\kk)/2} \eu^{\iu h_1(\kk)} \eu^{\iu h_2(\kk)/2} \]
for a family of matrices $\set{h_1(\kk)}_{\kk \in \R^2}$ as in Definition~\ref{def:multilog} (again with the possible exception of time-reversal symmetry). We recognize that the above is exactly \eqref{multilog} for $M=2$, and hence the desired multi-step logarithm for $\alpha$ has been constructed.
\end{proof}


\section{Technicalities about the generic form}

This Section is devoted to the proofs of Theorems~\ref{thm:generic} and \ref{thm:noTRS}. We let $\set{\alpha(\kk)}_{\kk \in \R^2}$ be a $2$-dimensional family of matching matrices, and we want to construct approximants to $\alpha$ which are in generic form. Due to the $\Z^2$-periodicity and time-reversal symmetry of families of matching matrices, which reflect in periodicity and evenness of their spectra, it suffices to perform the construction on the half unit cell $C = [0,1/2] \times [-1/2,1/2]$ in such a way that a periodic, time-reversal symmetric extension results in a smooth family.

\subsection{Local splitting lemma}

The first step to put $\alpha$ in generic form is to control the spectrum at the high-symmetry points $\kk_\sharp$ such that $\kk_\sharp \equiv - \kk_\sharp \bmod \Z^2$. This is achieved through the following Lemma, which is valid in any dimension and generalizes \cite[Lemma~A.1]{CorneanMonacoTeufel17}.

\begin{lemma}[Local Splitting Lemma] \label{lemma:splitting}
Let $R>0$ and $\kk_\sharp \in \R^d$ be a high-symmetry point. Denote by $B_R(\kk_\sharp)$ the open ball of radius $R$ centered at $\kk_\sharp$. Let $\set{\alpha(\kk)}_{\kk \in B_R(\kk_\sharp)}$ be a continuous and time-reversal symmetric family of unitary matrices. Then it is possible to construct a sequence $\set{\alpha_n(\kk)=\eu^{\iu h_n(\kk)}}_{\kk \in B_{R'}(\kk_\sharp)}$, possibly for $0 < R' \le R$,  of continuous and time-reversal symmetric families of unitary matrices, with $\kk \mapsto h_n(\kk)$ as in Definition~\ref{def:multilog}, such that
\[ \lim_{n \to \infty} \sup_{\kk \in B_{R'}(\kk_\sharp)} \norm{\alpha_n(\kk) - \alpha(\kk)} = 0, \]
and
\begin{itemize}
 \item the spectrum of $\alpha_n(\kk_\sharp)$ is completely non-degenerate if $\eps = \Id$, or 
 \item each eigenvalue of $\alpha_n(\kk_\sharp)$ is \emph{doubly degenerate} if $\eps = J$.
\end{itemize}

Moreover, it is possible to construct a sequence $\set{\alpha_n(\kk)=\eu^{\iu \widehat{h}_n(\kk)}}_{\kk \in B_{R'}(\kk_\sharp)}$, possibly for $0 < R' \le R$,  of continuous families of unitary matrices, with $\kk \mapsto \widehat{h}_n(\kk)$ satisfying \eqref{item:h_a}, \eqref{item:h_b} and \eqref{multilog} in Definition~\ref{def:multilog}, and such that
\begin{itemize}
 \item $\sup_{\kk \in B_{R'}(\kk_\sharp)} \norm{\widehat{\alpha}_n(\kk) - \alpha(\kk)} \to 0$ as $n \to \infty$, and
 \item the spectrum of $\widehat{\alpha}_n(\kk)$ is completely non-degenerate and even-symmetric with respect to $\kk_\sharp$ for $\kk \in B_{R'}(\kk_\sharp)$.
\end{itemize}
\end{lemma}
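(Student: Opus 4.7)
The plan is to reduce to the case $\kk_\sharp = 0$ by a shift of variables (which leaves the TRS condition in the same form, since $-\kk_\sharp \equiv \kk_\sharp \bmod \Z^d$), and then locally realize $\alpha$ as the exponential of a self-adjoint matrix satisfying the same TRS.

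First I would construct a logarithm of $\alpha$ on a smaller ball. Since $\alpha(0)$ is a unitary matrix with finite spectrum, one can pick $\Lambda \in \R$ such that $\eu^{\iu \Lambda}$ lies in the resolvent set of $\alpha(0)$; by continuity the same holds on some ball $B_{R_1}(0)$ with $0 < R_1 \le R$. Holomorphic functional calculus then produces a continuous map $h \colon B_{R_1}(0) \to M_m(\C)$ with $\alpha(\kk) = \eu^{\iu h(\kk)}$, and the contour integral representation makes it immediate to verify that $h$ is self-adjoint and inherits the relation $\eps h(\kk) = h(-\kk)^t \eps$ from $\alpha$. At $\kk = 0$ this reduces to $h(0) = h(0)^t$ in the bosonic case (so $h(0)$ is real symmetric) and to $J h(0) = h(0)^t J$ in the fermionic case (so $h(0)$ is quaternionic self-adjoint).

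Perturbation theory in the corresponding symmetry classes tells us that matrices with simple spectrum (in the bosonic case) or with spectrum only doubly degenerate by Kramers (in the fermionic case) are dense. Hence I can pick a sequence of TRS-invariant self-adjoint matrices $\delta_n$ with $\norm{\delta_n} \to 0$ such that $h(0) + \delta_n$ has the desired spectral structure, with a positive gap $A_n > 0$ between clusters of eigenvalues in the fermionic case. Then I would choose an even smooth cutoff $\chi \in C_c^\infty(B_{R_1}(0))$ with $\chi \equiv 1$ near the origin, and set
\[
h_n(\kk) := h(\kk) + \chi(\kk)\, \delta_n, \qquad \alpha_n(\kk) := \eu^{\iu h_n(\kk)}.
\]
The evenness of $\chi$ combined with the TRS of $\delta_n$ guarantees that $h_n$ is TRS. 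By uniform continuity of $h_n$ near the origin, the spectral structure of $\alpha_n(0)$ persists in a smaller ball $B_{R'}(0)$, and $\sup_\kk \norm{\alpha_n(\kk) - \alpha(\kk)} \to 0$ by Lipschitz continuity of the matrix exponential on bounded sets.

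For the second part, in the bosonic case the approximants $\alpha_n$ already have non-degenerate spectrum, and TRS automatically implies even-symmetric spectrum, so one may simply take $\widehat{\alpha}_n := \alpha_n$ and $\widehat{h}_n := h_n$. In the fermionic case the Kramers degeneracy still needs to be broken. Starting from the TRS approximant $\alpha_n = \eu^{\iu h_n}$ of the first part, on a sufficiently small ball $B_{R''}(0)$ the Riesz projections onto each cluster of two eigenvalues of $h_n(\kk)$ depend smoothly on $\kk$ (since the clusters are separated by a uniform positive gap), and within each rank-$2$ projection one can choose a smooth orthonormal basis. This gives a smooth unitary $V(\kk)$ that block-diagonalizes $h_n(\kk)$ into $2 \times 2$ Kramers blocks with smooth eigenvalues $\lambda_j(\kk)$. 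Then set
\[
\widehat{h}_n(\kk) := h_n(\kk) + V(\kk) \left( \bigoplus_{j} \mu_j\, \sigma_z \right) V(\kk)^* \, \chi(\kk),
\]
for small $\mu_j \to 0$ with all the values $\set{\lambda_j(0) \pm \mu_j}$ distinct. The spectrum of $\widehat{\alpha}_n(\kk) = \eu^{\iu \widehat{h}_n(\kk)}$ is then $\set{\eu^{\iu(\lambda_j(\kk) \pm \mu_j \chi(\kk))}}$, which is completely non-degenerate; since $\lambda_j(-\kk) = \lambda_j(\kk)$ by the TRS of $h_n$ and $\chi$ is even, the spectrum is even-symmetric.

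The main obstacle is the density argument in the fermionic case, namely that quaternionic self-adjoint matrices with only Kramers degeneracy are dense in their symmetry class with uniform control on the cluster gap. A related subtlety is the smooth choice of Kramers basis in the second part, which requires the Riesz projections onto Kramers pairs to be definable on the whole ball $B_{R''}(0)$; this works because in $h_n$ the doubly degenerate eigenvalues are separated by a positive distance at $\kk = 0$, and by continuity this separation survives on a small enough ball.
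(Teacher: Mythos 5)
Your general plan agrees with the paper's in its broad strokes — shift to $\kk_\sharp = \mathbf{0}$, take a local self-adjoint logarithm $h$, then add a small perturbation supported near $\mathbf{0}$ to split degeneracies — but there are two places where your argument either punts on a point the paper must make explicit, or outright doesn't go through as written.

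First, in the first part you invoke, as a black box, the density (in the appropriate symmetry class of self-adjoint matrices) of matrices with simple/Kramers-only spectrum. You correctly flag this as the main obstacle. The Lemma promises a \emph{construction}, so one cannot simply appeal to genericity; the paper's proof gives an explicit perturbation. After peeling off the cluster phases, it decomposes each spectral projection $\Pi_j(\mathbf{0})$ of $\alpha(\mathbf{0})$ as $\Pi_j(\mathbf{0}) = \bigoplus_{l_j} P_{j,l_j}(\mathbf{0})$ into rank-1 (bosonic) or rank-2 Kramers-compatible (fermionic) pieces — the latter by iterating the choice $\mathbf{v}_{2i} = \Theta \mathbf{v}_{2i-1}$ — and then perturbs by $A_j(\mathbf{0}) := \sum_{l_j} (l_j-1) P_{j,l_j}(\mathbf{0})$, which manifestly splits the eigenvalue into equispaced values while remaining TRS. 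That explicit $A_j$ is exactly what is missing from your argument; without it you have existence, not a construction.

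Second, the final paragraph about lifting the residual Kramers degeneracy does not work as stated. After conjugating by your smooth block-diagonalizing unitary $V(\kk)$, the $j$-th $2\times 2$ block of $h_n$ is a generic self-adjoint matrix $B_j(\kk)$ equal to $\lambda_j(\mathbf{0})\Id_2$ only at $\kk = \mathbf{0}$; away from $\mathbf{0}$ it need not be a scalar. Consequently the spectrum of $B_j(\kk) + \mu_j\chi(\kk)\sigma_z$ is \emph{not} $\set{\lambda_j(\kk) \pm \mu_j\chi(\kk)}$: writing $B_j(\kk) = \lambda_j(\kk)\Id_2 + {\bf b}_j(\kk)\cdot\boldsymbol{\sigma}$, the splitting is governed by $\norm{{\bf b}_j(\kk) + \mu_j\chi(\kk)\hat z}$, which can vanish for $\kk \ne \mathbf{0}$ if ${\bf b}_j$ cancels the added $\hat z$ component. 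One can salvage non-degeneracy by shrinking the ball until $\norm{{\bf b}_j}$ is dominated by $|\mu_j|$, but since $\mu_j \to 0$ this makes the radius $n$-dependent, which is not what the statement allows. Moreover, because the perturbation explicitly breaks fermionic TRS, the even-symmetry of the resulting spectrum is no longer automatic, and your appeal to ``$\lambda_j(-\kk) = \lambda_j(\kk)$ by TRS of $h_n$'' refers to the unperturbed eigenvalues, not to those of $\widehat h_n$. The paper's Step~3 proceeds differently: it works radially from the boundary sphere $S_s$ (where the spectrum is already simple) towards $\mathbf{0}$, interpolating the rank-1 Riesz projections via Kato--Nagy unitaries $V_s(t\kk)$ and interpolating the eigenvalue arguments linearly in $|t|$ between a hand-picked split at $\mathbf{0}$ and the true arguments $\phi_j(\kk)$ at the boundary. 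The dependence on $|t|$ alone makes even-symmetry manifest, and the linear interpolation of arguments keeps the eigenvalues separated for all $t$. This radial reconstruction is what makes the non-degeneracy and the even-symmetry hold uniformly on the ball.
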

\begin{proof}
By a shift $\kk \to \kk-\kk_\sharp$, we can always assume $\kk_\sharp = \mathbf{0}$. Notice indeed that the combination of periodicity and time-reversal symmetry implies that the spectrum of a family of matching matrices is even-symmetric around any of the high-symmetry points (that is, under the transformation $\kk_\sharp + \mathbf{q} \mapsto \kk_\sharp - \mathbf{q}$).

\smallskip

\noindent \textsl{Step 1: The local logarithm.} Assume that the spectrum of $\alpha(\mathbf{0})$ consists of $1\leq p_0\leq  n$ (possibly even degenerate, due to Kramers degeneracy) eigenvalues labeled as $\{\lambda_1(\mathbf{0}),\ldots,\lambda_{p_0}(\mathbf{0})\}$ in the increasing order of their principal arguments. If $s>0$ is sufficiently small and $\norm{\kk}<s$, due to the continuity of $\kk \mapsto \alpha(\kk)$ we know that the spectrum of $\alpha(\kk)$ will also consist of well separated clusters of eigenvalues. Let $\Pi_j(\kk)$ be the spectral projection of $\alpha(\kk)$ corresponding to the $j$-th cluster:
\[ \Pi_j(\kk)=\frac{1}{2\pi \iu}\int_{C_j}(z \Id -\alpha(\kk))^{-1}\,\di z,\quad \eps \, \Pi_j(\kk)= \Pi_j(-\kk)^t\,\eps,\quad \norm{\kk}<s. \]
The matrix $\alpha(\kk)$ is block diagonal with respect to the decomposition $\mathbb{C}^{m}=\bigoplus_{j=1}^{p_0} \Pi_j(\kk)\mathbb{C}^{m}$, {\sl i.e.}  $\alpha(\kk)=\sum_{j=1}^{p_0} \Pi_j(\kk)\alpha(\kk)\Pi_j(\kk)$, $\norm{\kk}<s$. 

Define $$\widetilde{\alpha}(\kk):=\sum_{j=1}^{p_0} \lambda_j(\mathbf{0})\Pi_j(\kk)=\exp \left(\iu \, \sum_{j=1}^{p_0} {\rm Arg}(\lambda_j(\mathbf{0}))\Pi_j(\kk) \right),\quad \norm{\kk}<s.$$
The matrix $\widetilde{\alpha}(\kk)$ is unitary, commutes with $\alpha(\kk)$, and $\eps \, \widetilde{\alpha}(\kk) = \widetilde{\alpha}(-\kk)^t \,\eps$ if $\norm{\kk}<s$. Define $\gamma(\kk):=\widetilde{\alpha}^{-1}(\kk)\alpha(\kk)$; we have that $\gamma(\kk)$ is unitary, commutes with $\alpha(\kk)$,  $\eps \, \gamma(\kk) = \gamma(-\kk)^t \, \eps$ if $\norm{\kk}<s$ and $\lim_{\kk\to \mathbf{0}}\gamma(\kk)=\Id$. In particular, $-1$ is never in the spectrum of $\gamma(\kk)$. Going through the Cayley transform we can find a self-adjoint matrix $\widetilde{h}(\kk)$ such that 
\[ \gamma(\kk)=\eu^{\iu \widetilde{h}(\kk)},\quad \widetilde{h}(\mathbf{0})=0,\quad \eps \, \widetilde{h}(\kk) = \widetilde{h}(-\kk)^t \, \eps,\quad [\Pi_j(\kk),\widetilde{h}(\kk)]=0,\quad \norm{\kk}<s. \]
We obtain
\[ \alpha(\kk)=\exp \left( \iu\left(\widetilde{h}(\kk) + \sum_{j=1}^{p_0} {\rm Arg}(\lambda_j(\mathbf{0}))\Pi_j(\kk)\right) \right),\quad \norm{\kk}<s. \]

In the following, we will then assume that $\alpha(\kk)=\eu^{\iu h(\kk)}$ on $B_R(\mathbf{0})$, at the expense of taking a smaller $R$.

\smallskip

\noindent \textsl{Step 2: Splitting of spurious degeneracies.} Let us use the generic notation $\Pi$ for any of the spectral projections onto one of the eigenvalues of $\alpha(\mathbf{0})$; let $p$ denote the dimension of the range of $\Pi$ ({\sl i.e.} the degeneracy of the eigenvalue). 

We distinguish now the bosonic and the fermionic case. When time-reversal symmetry is of bosonic type, the choice of any \emph{real} orthonormal basis for $\Pi = \Pi^t$ gives a decomposition
\begin{equation}\label{noTRS}
\Pi=\bigoplus_{j=1}^{p} P_{j},\quad  \dim \Ran P_{j}=1,\quad P_{j}=P_{j}^2=P_j^*=P_{j}^t.
\end{equation}

When we have instead fermionic time-reversal symmetry, we proceed as follows. Denote by $C$ the complex conjugation with respect to the basis in which $\eps$ is represented by the symplectic matrix $J$, and by $\Theta:=\eps C$; then $\Theta$ is an antiunitary operator satisfying $\Theta^2=-\Id$ on $\Ran \Pi \simeq \C^{m}$. From \eqref{alphaTRS} we see that we have the property
\[ \Theta \Pi =\Pi \Theta,\quad \text{or}\quad \eps \Pi= \Pi^t\eps,\quad \dim\Ran \Pi =2r. \]
We want to prove the following decomposition formula:
\begin{equation}\label{now4}
\Pi=\bigoplus_{j=1}^{r} P_{j},\quad  \dim \Ran P_{j}=2,\quad P_{j}=P_{j}^*=P_{j}^2,\quad \eps P_{j}=P_{j}^t\eps.
\end{equation}
Start by choosing an arbitrary unit vector $\mathbf{v}_1\in \Ran \Pi$. Define $\mathbf{v}_2=\Theta \mathbf{v}_1$. We have that $\Pi \mathbf{v}_2=\Pi\Theta \mathbf{v}_1=\Theta \Pi \mathbf{v}_1=\mathbf{v}_2$, hence $\mathbf{v}_2$ also belongs to $\Ran \Pi$. We also know that $\scal{\mathbf{v}_1}{\mathbf{v}_2}=0 $ and $\mathbf{v}_1=-\Theta \mathbf{v}_2$. Define 
 $$P_1=\ket{\mathbf{v}_1} \bra{\mathbf{v}_1} + \ket{\mathbf{v}_2} \bra{\mathbf{v}_2}.$$
Let $\mathbf{f}\in \C^{m}$. Then
\begin{align*}
\Theta P_1 \mathbf{f} &=\mathbf{v}_2 \overline{\scal{\mathbf{v}_1}{\mathbf{f}} }-\mathbf{v}_1\overline{\scal{ \mathbf{v}_2}{\mathbf{f}} }=
-\mathbf{v}_1\scal{\mathbf{f}}{\mathbf{v}_2} +\mathbf{v}_2 \scal{\mathbf{f}}{\mathbf{v}_1}=-\mathbf{v}_1\scal{ \Theta \mathbf{v}_2}{\Theta \mathbf{f}} +\mathbf{v}_2\scal{ \Theta \mathbf{v}_1}{\Theta \mathbf{f}}\\
&=\mathbf{v}_1\scal{ \mathbf{v}_1}{\Theta \mathbf{f}}+\mathbf{v}_2\scal{ \mathbf{v}_2}{\Theta \mathbf{f}}=P_1\Theta \mathbf{f}.
\end{align*}
This shows that $\Theta P_1=P_1\Theta$, or $\eps P_{1}=P_{1}^t\eps$. If $r>1$ we continue inductively. Let $\mathbf{v}_3$ be an arbitrary unit vector orthogonal to $\Ran P_1$ in $\Ran \Pi$. Define $\mathbf{v}_4=\Theta \mathbf{v}_3$. As before, we can show that $\Pi \mathbf{v}_4=\mathbf{v}_4$ and $\scal{ \mathbf{v}_3}{\mathbf{v}_4} =0$.  Moreover, using the properties of $\Theta$ we can show that $\mathbf{v}_4$ is also orthogonal on both $\mathbf{v}_1$ and $\mathbf{v}_2$. Define $P_2=\ket{\mathbf{v}_3}\bra{\mathbf{v}_3} + \ket{\mathbf{v}_4}\bra{\mathbf{v}_4}$. The proof of $\Theta P_2=P_2\Theta$ is the same as the one for $P_1$. We now continue inductively until we exhaust the range of $\Pi$. We conclude that \eqref{now4} is proved. 

Now let us apply now \eqref{noTRS} (respectively \eqref{now4}) to each spectral projection $\Pi_j(\mathbf{0})$ of $\alpha(\mathbf{0})$. Assume that the multiplicity of the eigenvalue $\lambda_j(\mathbf{0})$ ({\sl i.e.} the dimension of $\Ran \Pi_j(\mathbf{0})$) equals $p_j(\mathbf{0})$, where $1\leq p_j(\mathbf{0})\leq m$. In presence of fermionic time-reversal symmetry, we have that this multiplicity is even, $p_j(\mathbf{0}) = 2 r_j(\mathbf{0})$.
 
Then \eqref{noTRS} leads to
\begin{equation} \label{zumba9noTRS}
\Pi_j(\mathbf{0})=\bigoplus_{l_j=1}^{p_j(\mathbf{0})} P_{j,l_j}(\mathbf{0}),\quad \dim\Ran P_{j,l_j}(\mathbf{0})=1,\quad P_{j,l_j}(\mathbf{0})=P_{j,l_j}(\mathbf{0})^*=P_{j,l_j}(\mathbf{0})^2=P_{j,l_j}(\mathbf{0})^t.
\end{equation}
Respectively \eqref{now4} leads to
\begin{align}\label{zumba9}
\Pi_j(\mathbf{0})=\bigoplus_{l_j=1}^{r_j(\mathbf{0})} P_{j,l_j}(\mathbf{0}),\quad \dim\Ran P_{j,l_j}(\mathbf{0})=2,\quad P_{j,l_j}(\mathbf{0})=P_{j,l_j}(\mathbf{0})^*=P_{j,l_j}(\mathbf{0})^2,\quad \eps P_{j,l_j}(\mathbf{0})=P_{j,l_j}(\mathbf{0})^t\eps.
\end{align}
Define
\[ A_j(\mathbf{0}):=\sum_{l_j=1}^{m_j(\mathbf{0})} (l_j-1)\; P_{j,l_j}(\mathbf{0}) \]
where $m_j(\mathbf{0}) := p_j(\mathbf{0})$ in the bosonic case, and $m_j(\mathbf{0}) := r_j(\mathbf{0})$ in the fermionic case. From \eqref{zumba9noTRS} and \eqref{zumba9}, we also have $\eps A_j(\mathbf{0})=A_j(\mathbf{0})^t\eps$, where $\eps = \Id$ or $\eps = J$ depending on the type of time-reversal symmetry. Seen as an operator acting on $\Ran\Pi_j(\mathbf{0})$, the spectrum of $A_j(\mathbf{0})$ consists of (doubly degenerate in the fermionic case) eigenvalues given by $\{0,1,...,m_j(\mathbf{0})-1\}$. Of course, if $r_j(\mathbf{0})=1$ then $A_j(\mathbf{0})=0$. 

Now let $g_s\colon\R^d\to [0,1]$ be smooth, even, $g_s(\mathbf{0})=1$, and $\mathrm{supp}(g_s)\subset B_{s/2}(\mathbf{0})$. Define
\[ v_s(\kk):= s \, g_s(\kk) \left (\sum_{j=1}^{p_0} A_j(\mathbf{0}) \right) \]
where $p_0$ is the total number of distinct eigenvalues of $\alpha(\mathbf{0})$. Then the support of $v_s$ is contained in $B_s(\mathbf{0})$ and obeys time-reversal symmetry. Define $\alpha_s(\kk)$ to be $ \eu^{\iu(h(\kk)+v_s(\kk))}$ if $\kk\in B_s(\mathbf{0})$, and let  $\alpha_s(\kk)=\alpha(\kk)$ outside $B_s(\mathbf{0})$. The family $\alpha_s$ obeys all the required properties, at the price of taking possibly a smaller $s$ to avoid overlapping of the eigenvalues at $\mathbf{0}$, and converges uniformly in norm to $\alpha$ when $s\to 0$. 

\smallskip

\noindent \textsl{Step 3: Complete splitting.} We come to the final part of the statement. By virtue of the above constructions, this is of relevance only if the original family of matching matrices $\alpha$ is of fermionic nature, and we will assume that $\alpha(\kk)$ is defined on a ball $B = B_s(\mathbf{0})$ of radius $s>0$ around ${\bf 0}$, has doubly degenerate eigenvalues at $\mathbf{0}$, and has non-degenerate spectrum on the surface $S_s$ of this ball. In view of time-reversal symmetry, the spectrum will be also even-symmetric. For $\kk \in S_s$, write the spectral decomposition of $\alpha(\kk)$ as
\[ \alpha(\kk) = \sum_{j=1}^{m} \lambda_j(\kk) \, P_j(\kk), \quad \dim \Ran P_j(\kk) = 1. \]
Notice that $\lambda_j(\kk) = \lambda_j(-\kk)$ by time-reversal symmetry. 

Say that the eigenvalues $\lambda_1, \lambda_2$ belong to the cluster which originates from some doubly degenerate eigenvalue $\lambda = \eu^{\iu \phi}$ of $\alpha(\mathbf{0})$, and assume that their are labelled by the increasing order of the values of their principal arguments%
\footnote{Notice that the choice of principal arguments together with time-reversal symmetry imply $\phi_j(\kk) = \phi_j(-\kk)$, $j \in \set{1,2}$, $\kk \in S_s$.} %
$\phi_1 < \phi_2$. Consider the spectral projection $\Pi(\mathbf{0})$ of $\alpha(\mathbf{0})$ relative to the eigenvalue $\lambda$, and call $\Pi(\kk) = P_1(\kk) \oplus P_2(\kk)$ for $\kk \in S_s$. If $s$ is sufficiently small, then $\norm{\Pi(\kk) - \Pi(\mathbf{0})} < 1$, and the two projections are related by a Kato-Nagy unitary $U_s(\kk)$. Moreover, since $U_s(\kk)$ is close to the identity, we can write $U_s(\kk) = \eu^{\iu H_s(\kk)}$ for a smooth family of self-adjoint matrices $H_s(\kk)$. The same is true when we replace $\kk$ with $-\kk$. Let 
\[ V_s(t \kk) := \begin{cases} 
\eu^{\iu t H_s(\kk)} & \text{if } t \in [0,1] \\
\eu^{\iu |t| H_s(-\kk)} & \text{if } t \in [-1,0] \\
\end{cases} \] 
be a smooth path of unitaries joining $U_s(\kk)$ to $U_s(-\kk)$, passing through the identity at $\kk = \mathbf{0}$.

A choice of a basis in $\Ran \Pi(\mathbf{0})$ corresponds to a splitting $\Pi(\mathbf{0}) = P_1(\mathbf{0}) \oplus P_2(\mathbf{0})$, with $\dim \Ran P_j(\mathbf{0}) = 1$. Setting
\[ P_j^{(s)}(t\kk) := V_s(t\kk) P_j(\mathbf{0}) V_s(t\kk)^{-1}, \quad j \in {1,2}, \quad t \in [-1,1]\]
gives a path of $1$-dimensional projections joining $P_j(-\kk)$ with $P_j(\kk)$, and coinciding with the chosen $P_j(\mathbf{0})$ at $t=0$. 

Finally, if $g>0$ denotes the minimal distance between the degenerate eigenvalues of $\alpha(\mathbf{0})$, denote by $\lambda_j^{(s)}(t\kk) = \eu^{\iu \, \phi_j^{(s)}(t \kk)}$, where
\[ \phi_j^{(s)}(t \kk) := (1-|t|) \left(\phi + (-1)^j \dfrac{g}{10} \right) + |t| \, \phi_j(\kk), \quad j \in \set{1,2}, \quad t \in [-1,1]. \]
For example, the function $\lambda_2^{(s)}(t \kk)$ interpolates between $\eu^{\iu (\phi + g/10)}$ and $\lambda_2(\kk)$ for positive $t \in [0,1]$, and between $\lambda + g/10$ and $\lambda_2(-\kk)$ for negative $t \in [-1,0]$; the function $\lambda_1^{(s)}(t \kk)$ does the same, but starting from $\eu^{\iu (\phi - g/10)}$. Since at the level of principal arguments the interpolations are linear, we have that $\phi_1^{(s)}(t \kk) < \phi_2^{(s)}(t \kk)$ for all $t \in [-1,1]$.

We repeat the above construction for all the $n = m/2$ pairs of spectral subspaces originating from the doubly degenerate eigenvalues of $\alpha(\mathbf{0})$. We set then for $\kk \in S_s$
\[ \alpha_s(t \kk) : = \sum_{j=1}^{m} \lambda_j^{(s)}(\kk) P_j^{(s)}(t\kk), \quad t \in [-1,1] \]
on the ray joining $\kk$ to $-\kk$ through $\mathbf{0}$. The above gives the required approximation of $\alpha$ with non-degenerate, even-symmetric spectrum.
\end{proof}

\subsection{Extending non-degeneracy to 1D} 

The next step in the construction of the approximants requires to remove multiple eigenvalues which are not due to Kramers degeneracy. In this Subsection we will do so on a ``quasi $1$-dimensional'' set of points in $\R^2$, for which one of the two coordinates is very close to a half-integer.

In view of the Local Splitting Lemma \ref{lemma:splitting}, we can assume that the matrices $\alpha(\kk_\sharp)$, for $\kk_\sharp$ a high-symmetry point, have non-degenerate spectrum (in the bosonic case) or only doubly degenerate eigenvalues (in the fermionic case). Moreover, since by the assumption in Theorem~\ref{thm:generic} the family $\alpha$ is equivariantly null-homotopic, we know that its restrictions to the four lines $\set{k_{1,2} = 0}$ are $\set{k_{1,2} = 1/2}$ are equivariantly null-homotopic (compare Theorem~\ref{thm:2Dhomotopies}); by periodicity the same is true on any line where one of the coordinates is equal to a half-integer. In particular, in the fermionic case this implies that the GP-indices of the restrictions to these lines vanish (Corollary~\ref{crl:2Dhomotopies}).
 
\begin{proposition}
Under the above assumptions, there exists a sequence of families of matching matrices $\set{\alpha_\ell(\kk)}_{\kk \in \R^2}$ such that
\[ \lim_{\ell \to \infty} \sup_{\kk \in \R^2} \norm{\alpha_\ell(\kk) - \alpha(\kk)} = 0, \]
and such that for all $\ell \in \N$
\begin{itemize}
 \item the restriction of the family $\set{\alpha_\ell(\kk)}_{\kk \in \R^2}$ to the lines $\set{k_j = p_j/2}$, $j \in \set{1,2}$, $p_j \in \Z$, is completely non-degenerate if $\eps = \Id$, or
 \item the same restrictions have completely non-degenerate spectrum on any given compact interval not containing the double Kramers degeneracies if $\eps = J$.
\end{itemize}
\end{proposition}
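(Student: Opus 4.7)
The plan is to reduce the two-dimensional problem to four applications of the one-dimensional Theorem~\ref{thm:1Dhomotopies}, one per critical line, and then to glue the resulting modifications back into a smooth two-dimensional family while preserving $\Z^2$-periodicity and time-reversal symmetry. By periodicity it suffices to construct $\alpha_\ell$ on the fundamental cell $C=[0,1/2]\times[-1/2,1/2]$; within $C$ the critical lines $\set{k_j=p_j/2}$ reduce to four segments. Each 1D restriction $\alpha|_L$ is a family of matching matrices which, by hypothesis and Theorem~\ref{thm:2Dhomotopies}, is equivariantly null-homotopic, so Theorem~\ref{thm:1Dhomotopies} yields for each $\ell$ a 1D approximant $\alpha_\ell^{(L)}$ uniformly $O(1/\ell)$-close to $\alpha|_L$ and having the desired spectral structure. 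Thanks to the preliminary application of the Local Splitting Lemma~\ref{lemma:splitting}, which has already put $\alpha(\kk_\sharp)$ in generic form at each high-symmetry corner of $C$, one may further arrange that $\alpha_\ell^{(L)}$ coincides with $\alpha|_L$ at these four corners.

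To transport each 1D modification to a thin tube around its line while preserving all symmetries, we express the 1D approximant in the symmetric-product form
\[ \alpha_\ell^{(L)}(k_\parallel) = \eu^{\iu\, h_\ell^{(L)}(k_\parallel)/2}\, \alpha|_L(k_\parallel)\, \eu^{\iu\, h_\ell^{(L)}(k_\parallel)/2}, \]
with $h_\ell^{(L)}$ a small 1D self-adjoint field obeying the one-dimensional matching relation $\eps\, h_\ell^{(L)}(k_\parallel)=h_\ell^{(L)}(-k_\parallel)^t\eps$ and vanishing at the corners. Existence of $h_\ell^{(L)}$ for $\ell$ large follows from an implicit-function argument applied to the map $h\mapsto\eu^{\iu h/2}\alpha|_L\eu^{\iu h/2}$ on the Banach space of matching-symmetric self-adjoint 1D fields; its linearisation at $h=0$, namely $h\mapsto h\,\alpha|_L+\alpha|_L\,h$, is invertible as long as $\alpha|_L$ has no pair of antipodal eigenvalues, a generic condition which can be ensured by a preliminary harmless perturbation. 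We then extend $h_\ell^{(L)}$ into a tubular neighborhood of $L$ via $\widetilde h_\ell^{(L)}(\kk):=h_\ell^{(L)}(k_\parallel)\,\eta(k_\perp-p_\perp/2)$, where $\eta$ is a smooth, periodic, \emph{even} cutoff supported in a narrow interval with $\eta(0)=1$; the evenness of $\eta$ together with the 1D matching symmetry of $h_\ell^{(L)}$ ensures the 2D matching symmetry $\eps\widetilde h_\ell^{(L)}(\kk)=\widetilde h_\ell^{(L)}(-\kk)^t\eps$. Choosing the tube widths so small that distinct tubes intersect only in arbitrarily small neighborhoods of the four corners, where all $\widetilde h_\ell^{(L)}$ vanish, the sum $H_\ell:=\sum_L\widetilde h_\ell^{(L)}$ remains small, matching-symmetric, and periodic; the family
\[ \alpha_\ell(\kk) := \eu^{\iu\, H_\ell(\kk)/2}\, \alpha(\kk)\, \eu^{\iu\, H_\ell(\kk)/2} \]
is then a family of matching matrices, uniformly close to $\alpha$, and restricting on each critical line to the prescribed 1D approximant, hence having the required spectral structure there.

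The main obstacle is compatibility between multiplicative 1D perturbations and the 2D matching-matrix time-reversal symmetry: the naive multiplicative ratio $\alpha_\ell^{(L)}(\alpha|_L)^{-1}$ does not itself satisfy the matching relation (a transpose intervenes on the wrong side), which forces the use of the symmetric-product form together with the accompanying implicit-function argument to produce the field $h_\ell^{(L)}$. A secondary technicality is ensuring compatibility of the four corrections at their intersections, which is handled by exploiting the Local Splitting Lemma to force each $h_\ell^{(L)}$ to vanish at the four corners.
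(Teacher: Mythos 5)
Your overall strategy — approximate each 1D restriction by a non-degenerate family using Theorem~\ref{thm:1Dhomotopies}, then transport the modification into a thin tube around each critical line via a cutoff, with the tubes interacting only in small neighborhoods of the high-symmetry corners — is the same as the paper's. The gluing mechanism, however, differs in a way that opens a genuine gap.

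You write the 1D approximant in the symmetric-product form $\alpha_\ell^{(L)}=\eu^{\iu h/2}\,\alpha|_L\,\eu^{\iu h/2}$ and invoke an implicit-function argument to produce $h$. The linearisation of this map at $h=0$ is indeed $\delta h\mapsto\tfrac{\iu}{2}\left(\delta h\,\alpha|_L+\alpha|_L\,\delta h\right)$, and it is invertible precisely when $\alpha|_L(k)$ has no pair of antipodal eigenvalues. You wave this away as a ``generic condition which can be ensured by a preliminary harmless perturbation,'' but no harmless perturbation compatible with the constraints is exhibited, and the obvious candidates fail: multiplying by a constant phase $\eu^{\iu\theta}$ sends every antipodal pair $\{\lambda,-\lambda\}$ to another antipodal pair $\{\eu^{\iu\theta}\lambda,-\eu^{\iu\theta}\lambda\}$, and any nonconstant phase or additive perturbation must respect periodicity and the matching relation \eqref{alphaTRS} as well as coexist with the later use of the 1D theorem on the modified family. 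Moreover, you need the IFT to hold uniformly in $k$ along the whole line; the radius of solvability degenerates as the antipodal condition is approached, so even if the set of ``bad'' $k$ were empty you would still need a quantitative lower bound that is not supplied. This is not a cosmetic point — it is the sole justification for the central object $h_\ell^{(L)}$ in your construction.

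The paper's proof sidesteps this entirely. Theorem~\ref{thm:1Dhomotopies} is quoted so as to deliver not only a non-degenerate approximant $\widetilde{\alpha}_1$ to $\alpha(\cdot,0)$ but a full one-step periodic matching-symmetric logarithm $h_1$, $\widetilde{\alpha}_1=\eu^{\iu h_1}$. One then conjugates the whole 2D family: $\gamma_1(\kk):=\eu^{-\iu h_1(k_1)/2}\,\alpha(\kk)\,\eu^{-\iu h_1(k_1)/2}$. Since $\widetilde{\alpha}_1\approx\alpha(\cdot,0)$, $\gamma_1$ is uniformly close to $\Id$ in a thin strip around $k_2=0$; the Cayley transform there gives $\gamma_1=\eu^{\iu H_1}$ with $H_1$ matching-symmetric, with no eigenvalue condition whatsoever beyond closeness to $\Id$. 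The modified family is then $\alpha'(\kk)=\eu^{\iu h_1(k_1)/2}\eu^{\iu(1-G_\delta(k_2))H_1(\kk)}\eu^{\iu h_1(k_1)/2}$, which equals $\widetilde{\alpha}_1$ exactly on $\{k_2=0\}$, equals $\alpha$ outside the strip, and satisfies all the constraints; the procedure is then iterated line by line, shrinking $\delta$ at each step so that earlier segments of non-degeneracy are undisturbed. Thus the paper needs no solvability of any implicit equation, and the ``antipodal eigenvalue'' issue simply never arises. If you insist on a product form, you should take $h_\ell^{(L)}$ to be the \emph{one-step logarithm of the approximant} (the object the 1D theorem already provides), not a correction to $\alpha|_L$ — then the conjugation-plus-Cayley-transform route of the paper goes through and your IFT is unnecessary.

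A secondary point: your claim that the 1D approximants can be arranged to coincide with $\alpha|_L$ at the four corners — so that the $h_\ell^{(L)}$ vanish there and the tube contributions add harmlessly — is plausible after the Local Splitting Lemma (which makes $\alpha$ generic near the corners, so the 1D modification can be supported away from them), but it is not a stated consequence of Theorem~\ref{thm:1Dhomotopies} as cited and would need to be argued. The paper avoids requiring coincidence by only claiming non-degeneracy on compact sets $I$ away from the high-symmetry points, which is all the Proposition demands in the fermionic case, and by noting that near the corners the modified family is a small perturbation of an already non-degenerate $\alpha$, which settles the bosonic case.
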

\begin{proof}
Fix a compact set $I\subset (-1/2,0)\cup (0,1/2)$ symmetric with respect to $0$. Since $\set{\alpha(k_1,0)}_{k_1 \in \R}$ is equivariantly null-homotopic, it can be approximated arbitrarily well with a smooth $1$-dimensional family of matching matrices $\set{\widetilde{\alpha}_1(k_1)}_{k_1 \in \R}$ which has completely non-degenerate spectrum in $I$ and which admits a traceless, smooth and periodic logarithm $h_1(k_1)$ such that $\widetilde{\alpha}_1(k_1)=\eu^{\iu h_1(k_1)}$ and $\eps \, h_1(k_1) = h_1(-k_1)^t \, \eps^{-1}$: this is Theorem~\ref{thm:1Dhomotopies}\eqref{item:EquivariantB} (compare \cite[Lemma~2.18]{CorneanHerbstNenciu16}) in the bosonic case, where $I$ can be actually taken to be the whole interval $[-1/2,1/2]$, while it is Theorem~\ref{thm:1Dhomotopies}\eqref{item:EquivariantF} (compare \cite[Prop.~5.4(2)]{CorneanMonacoTeufel17}) in the fermionic case, under the assumption mentioned above that $\rueda(\alpha(\cdot,0)) = 0 \in \Z_2$. 

Define the unitary matrix
\[ \gamma_1(k_1,k_2):=\eu^{-\iu  h_1(k_1)/2}\alpha(k_1,k_2)\eu^{-\iu  h_1(k_1)/2}. \]
If $k_2$ lies closer than some $\delta_0>0$ from any given integer, then  $\gamma_1(k_1,k_2)$ is close to the identity matrix. Via the Cayley transform we can construct a traceless selfadjoint $H_1(k_1,k_2)$ with $\eps \, H_1(k)=H_1(-k)^t \, \eps$ when $k_2$ is closer than $\delta_0$ to an integer. The family $H_1$ is periodic in $k_1$ and also in $k_2$ near the integers. 

Now let  $g_\delta \colon \R \to \R$, $0\leq g_\delta\leq 1$, be smooth, even, equal to $1$ on $[-\delta,\delta]$ and supported in the interval 
 $(-2\delta,2\delta)$ with $\delta$ small. Define $G_\delta(x)=\sum_{n\in \Z}g_\delta(x-n)$. If $\delta\leq \delta_0/10$ we define 
\[ \alpha'(k_1,k_2):=\eu^{\iu  h_1(k_1)/2}\eu^{\iu (1-G_\delta(k_2))H_1(k_1,k_2)}\eu^{\iu  h_1(k_1)/2} \]
if $k_2$ is closer than $3\delta$ from $\Z$, and $\alpha'(k_1,k_2)=\alpha(k_1,k_2)$ otherwise. 
 
The matrix $\alpha'(k_1,0)$ coincides with $\widetilde{\alpha}_1(k_1)$, hence it is non-degenerate on $k_1\in I$. If $\delta$ is small, $\alpha'(k_1,k_2)$ will continue to be equivariantly null-homotopic and we continue by investigating $\alpha'$. When we modify $\alpha'$ near $k_1=0$ by a similar construction as above,  we see that the affected region in $k_1$ is of order $\delta$, so by choosing $\delta$ small enough we do not change anything  in the segment $\set{k_1\in I, \:k_2=0}$, so the previous non-degeneracy is not affected. We thus obtain $\alpha''$ which has complete non-degeneracy on two segments: $\set{ k_1\in I, \: k_2=0}$ and $\set{k_1=0, \: k_2\in I}$. After four steps and putting $\delta=1/\ell$ (starting with some large enough $\ell_0$) we finish the construction of $\alpha_\ell$. 
\end{proof}

\begin{remark}
The proof above can be adapted, with minor modifications, to show the existence of approximants $\set{\widehat{\alpha}_\ell(\kk)}_{\kk \in \R^2}$ for \emph{any} (not necessarily equivariantly null-homotopic) family of matching matrices $\alpha$ which are continuous, $\Z^d$-periodic, and whose spectrum is completely non-degenerate on the lines $\set{k_j = p_j/2}$, $j \in \set{1,2}$, $p_j \in \Z$. One just needs to use the second part of the statement of Lemma~\ref{lemma:splitting} (together with \cite[Prop.~5.4(1)]{CorneanMonacoTeufel17}) in the fermionic case, to lift the Kramers degeneracies at the cost of breaking time-reversal symmetry. This goes into the proof of Theorem~\ref{thm:noTRS}.
\end{remark}

From now on we may assume that $\alpha$ is completely non-degenerate on some thin slabs centered around the lines $(k_1,0)$, $(0,k_2)$, $(k_1,1/2)$ and $(1/2,k_2)$, possibly not including some small open balls containing the crossing points of these lines in the fermionic case. Inside these balls, only double degeneracies are allowed. Due to the periodicity of $\alpha$, this property is extended around all the lines of the type  $\set{k_j = p_j/2}$,  with $j \in \set{1,2}$ and $p_j\in\Z$. Notice that it is only on these slabs that time-reversal symmetry plays the role of a compatibility condition.

In the following, we proceed with our construction of the approximants of $\alpha$ in generic form on the quadrant $Q = [0,1/2] \times [0,1/2]$; a similar construction works for $Q' = [0,1/2] \times [-1/2,0]$. Since the half unit cell $C = Q \cup Q'$, this will suffice in view of the considerations at the beginning of this Section.
  
Let $[a,b] \subset (0,1/2)$ be the interval obtained by projection on one of the axes of the complement of the thin slabs where $\alpha$ has non-degenerate spectrum. Let us consider the four segments given by 
\[ S_1:=[a,b]\times \{0\},\quad S_2:=[a,b]\times \{1/2\},\quad S_3:= \{0\}\times [a,b],\quad S_4:=\{1/2\}\times [a,b]. \]
The previous Proposition allows us to assume that $\alpha(k)$ is completely non-degenerate on a thin neighbourhood of these segments. We now we want to close the contour by keeping complete non-degeneracy. 

\begin{proposition}\label{propolinie}
Let $\Lambda$ be the segment joining the endpoint $(0,b)$ of $S_3$ with the endpoint $(a,1/2)$ of $S_2$. Then there exists a sequence of families of matching matrices $\set{\alpha_n(\kk)}_{\kk \in Q}$ such that
\begin{itemize}
 \item $\sup_{\kk \in Q} \norm{\alpha_n(\kk) - \alpha(\kk)} \to 0$ as $n \to \infty$, and
 \item for all $n \in \N$ the spectrum of $\alpha_n(k)$ is completely non-degenerate on a thin ($n$-dependent) compact neighbourhood of $\Lambda$.
\end{itemize}
\end{proposition}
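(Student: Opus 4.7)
The plan is to perturb $\alpha$ only on a thin tubular neighbourhood $N$ of $\Lambda$, chosen so that $N\cap(-N)=\emptyset \bmod \Z^2$ and that $N$ is contained in the complement of the slabs around $S_1,S_4$ where non-degeneracy is already established. Because $\Lambda$ lies strictly in the interior of $Q$ and avoids every high-symmetry point $\kk_\sharp$ with $\kk_\sharp\equiv-\kk_\sharp\bmod\Z^2$, the time-reversal relation $\eps\alpha(\kk)=\alpha(-\kk)^t\eps$ places no local constraint on $\alpha|_N$: any perturbation carried out on $N$ can be mirrored on $-N$ and then extended by $\Z^2$-periodicity to yield a family of matching matrices on all of $\R^2$.

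The construction on $N$ mimics the Local Splitting Lemma~\ref{lemma:splitting}. First, cover the compact segment $\Lambda$ by finitely many open balls $B_1,\dots,B_M\subset N$ centred at points $\kk_i\in\Lambda$, so small that the Riesz projections of $\alpha$ associated to the spectral clusters of $\alpha(\kk_i)$ depend continuously on $\kk\in B_i$ and that $\alpha(\kk)=\exp(\iu h_i(\kk))$ there, following the Local Logarithm step. Since $\kk_i$ is a non-high-symmetry point, no Kramers constraint is active, and I may split each cluster projection at $\kk_i$ into rank-one pieces; this produces a Hermitian splitter $A_i$, which I would then propagate to a continuous family $A_i(\kk)$ on $B_i$ via a Kato--Nagy intertwiner.

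Choosing a partition of unity $\{\chi_i\}$ subordinate to $\{B_i\}$ on an open set containing $\Lambda$, and a smooth cutoff $\rho\colon N\to[0,1]$ that vanishes near $\partial N$ and near the endpoints of $\Lambda$ (so as to preserve non-degeneracy already secured on $S_2$ and $S_3$), I would define
\[
\alpha_n(\kk):=\exp\!\Bigl(\iu\, s_n\, \rho(\kk)\sum_i \chi_i(\kk)\, A_i(\kk)\Bigr)\,\alpha(\kk)\quad\text{on } N,
\]
extend by the TRS relation to $-N$, by $\Z^2$-periodicity to $\R^2$, and keep $\alpha_n=\alpha$ elsewhere. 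With $s_n\to 0$ this yields a sequence of continuous, $\Z^2$-periodic, time-reversal symmetric families of unitaries converging uniformly to $\alpha$.

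The crucial point is that for $s_n$ small, $\alpha_n$ has non-degenerate spectrum on a thin neighbourhood of $\Lambda$. By analytic perturbation theory, inside each spectral cluster of $\alpha(\kk)$ the exponential factor acts at first order as $\iu s_n\sum_i\chi_i(\kk)A_i(\kk)$; on each $B_i$ the matrix $A_i(\kk)$ is, by construction, a splitter of the cluster structure of $\alpha(\kk)$ in a neighbourhood of $\kk_i$. The main obstacle is ensuring that the weighted sum $\sum_i\chi_i(\kk)A_i(\kk)$ remains a splitter on the overlaps of the $B_i$; this is handled by taking the balls fine enough so that the cluster structure is essentially constant on each $B_i$ and that splitters with respect to this common cluster structure form an open set whose convex combinations remain splitters. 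After possibly a generic further perturbation of the individual $A_i$'s, this openness yields the desired non-degeneracy, and by continuity it extends to a thin neighbourhood of $\Lambda$.
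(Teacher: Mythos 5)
Your approach is genuinely different from the paper's, and it contains a gap at precisely the point you yourself flag as "the main obstacle." You want to cover $\Lambda$ by small balls $B_i$, construct on each a Hermitian splitter $A_i(\kk)$, and then glue them with a partition of unity $\{\chi_i\}$, arguing that $\sum_i\chi_i(\kk)A_i(\kk)$ is still a splitter because "splitters with respect to this common cluster structure form an open set whose convex combinations remain splitters." This last claim is false: the set of Hermitian matrices with simple spectrum on a fixed cluster subspace is open but \emph{not} convex. For a rank-$2$ cluster, $A_i=\mathrm{diag}(1,0)$ and $A_j=\mathrm{diag}(0,1)$ are both splitters, yet their midpoint is $\tfrac12\Id$, which is degenerate. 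The problem is compounded because your $A_i(\kk)$ are transported by distinct Kato--Nagy intertwiners $U_i(\kk)$ on overlapping balls, so on $B_i\cap B_j$ they are conjugates of fixed splitters in \emph{different} bases; and the cluster structure of $\alpha(\kk_i)$ and $\alpha(\kk_j)$ need not even match on overlaps, since eigenvalue groups can refine or merge along $\Lambda$. Your closing appeal to "a generic further perturbation of the individual $A_i$'s" is exactly the assertion that needs a proof, and it is not clear how to make it: you would have to choose the whole tuple $(A_1,\dots,A_M)$ so that every convex combination appearing in the glueing remains a splitter in the right (moving) basis, which is a global compatibility constraint, not a generic one.

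The paper sidesteps this entirely by working globally along $\Lambda$ rather than locally. It restricts $\alpha$ to the one-dimensional segment $\Lambda$, invokes the already-established $1$D approximation theorem to produce a single smooth family $\eu^{\iu h_n(x_\lambda)}$ with completely non-degenerate spectrum on all of $\Lambda$ and a uniform spectral gap $A>0$, and then conjugates: $\gamma_n=\eu^{-\iu h_n/2}\,\alpha\,\eu^{-\iu h_n/2}$ is uniformly close to $\Id$ on a thin tube $\Lambda'_{r_n}$, hence has a small global logarithm $H_n=\mathcal O(1/n)$. The interpolation $\beta_n=\eu^{\iu h_n/2}\eu^{\iu(1-\chi_n)H_n}\eu^{\iu h_n/2}$ (with a single scalar cutoff $\chi_n$) is then non-degenerate on $\Lambda$ because on the region where $\chi_n<1$ the perturbation $H_n$ is $\mathcal O(1/n)$, far smaller than the $n$-independent gap $A$ of $h_n$; there is no convex combination of competing splitters to worry about. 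If you want to repair your argument, you would effectively have to prove that one \emph{single} splitter can be transported coherently along $\Lambda$ — which is exactly what the $1$D approximation theorem packages for you — so it is simpler to use it from the start, as the paper does.
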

 \begin{proof}
We introduce some natural global coordinates $(x_\lambda,y_\lambda)$ induced by $\Lambda \subset \R^2$ such that a point belonging to $\Lambda$ is represented by $y_\lambda=0$ and $x_\lambda$ belongs to an interval. 
  
When we restrict $\alpha(\kk)$ to the closed segment $\Lambda$ we obtain a $1$-dimensional family of unitary matrices. We know that we can approximate it arbitrarily well with a smooth family of unitary matrices with no degeneracies on $\Lambda$, which also admits a smooth  logarithm $h_n(x_\lambda)$, {\sl i.e.} 
\[ \sup_{\kk\in \Lambda} \norm{\eu^{-\iu h_n(x_\lambda)/2} \, \alpha(\kk(x_\lambda, 0)) \, \eu^{-\iu h_n(x_\lambda)/2}-\Id} \leq \frac{1}{n}. \]
We know that $\alpha$ is non-degenerate at the endpoints, hence by continuity it will remain non-degenerate on a small neighbourhood of them. Thus we may find a smaller open segment  $\Lambda'\subset \Lambda$ such that $\alpha(\kk)$ is non-degenerate on $\Lambda\setminus \Lambda'$. Then $\eu^{\iu h_n(x_\lambda)}$ will also be completely non-degenerate on $\Lambda\setminus \Lambda'$ if $n$ is large enough, while the minimal distance between its eigenvalues is bounded from below by some $n$-independent positive constant $A>0$. Of course, this might no longer be true inside $\Lambda'$. 
 
 Denote by $\Lambda'_r$ the convex open set containing all the points $\kk$ such that ${\rm dist}(\kk,\Lambda')<r$.  There exists some $r_n$ small enough such that
\begin{equation}\label{ianua1}
\sup_{\kk \in \overline{\Lambda'_{r_n}}} \norm{ \eu^{-\iu h_n(x_\lambda)/2}\, \alpha(\kk(x_\lambda, y_\lambda)) \, \eu^{-\iu h_n(x_\lambda)/2}-\Id} \leq \frac{2}{n}.
\end{equation}
We can also assume that $\overline{\Lambda'_{r_n}}$ is always included in the open square $(0,1/2)\times (0,1/2)$.

Consider the unitary operator
$$\gamma_n(\kk(x_\lambda,y_\lambda)):=\eu^{-\iu h(x_\lambda)/2}\, \alpha(\kk(x_\lambda, y_\lambda)) \,\eu^{-\iu h(x_\lambda)/2},\quad \quad \kk(x_\lambda, y_\lambda)\in\Lambda'_{r_n} . $$
From \eqref{ianua1} we conclude that $ \gamma_n(\kk(x_\lambda,y_\lambda))$ is close to the identity operator and it admits a smooth logarithm $H_n(\kk(x_\lambda,y_\lambda))$, hence 
$$\gamma_n(\kk(x_\lambda,y_\lambda))=\eu^{\iu  H_n(\kk(x_\lambda,y_\lambda))},\quad \kk(x_\lambda,y_\lambda)\in \Lambda'_{r_n},\quad \norm{H_n}=\mathcal{O}(1/n).$$

Consider a smooth function $0\leq \chi_n\leq 1$ which equals $1$ on  $\overline{\Lambda'_{r_n/10}}$ and has support on  $\Lambda'_{r_n/5}$. Define the following unitary matrix in the closed rectangle $Q$:
\[
\beta_n(\kk(x_\lambda,y_\lambda)):= \begin{cases}
  \eu^{\iu h_n(x_\lambda)/2}
 \eu^{\iu [1- \chi_n(\kk(x_\lambda,y_\lambda))] H_n(\kk(x_\lambda,y_\lambda))}\eu^{\iu h_n(x_\lambda)/2} & \text{if } \kk(x_\lambda,y_\lambda) \in  \Lambda'_{r_n} \\
   \alpha(k(x_\lambda,y_\lambda)) & \text{if } \kk(x_\lambda,y_\lambda)\in Q \setminus \Lambda'_{r_n}. \end{cases}
\]
We see that $\beta_n$ differs from $\alpha$ only on $\Lambda'_{r_n/5}$, and $\beta_n-\alpha =\mathcal{O}(1/n)$ anywhere in $Q$. Also, $\beta_n$ is smooth. 

Now let us show that $\beta_n$ is completely non-degenerate on $\Lambda$. We write
\[
\beta_n(\kk(x_\lambda,0)):= \begin{cases}
  \eu^{\iu h_n(x_\lambda)/2}
 \eu^{\iu [1-\chi_n(k(x_\lambda,0))] H_n(k(x_\lambda,0))}\eu^{\iu h_n(x_\lambda)/2} & \text{if }  \kk(x_\lambda,0) \in  \Lambda'_{r_n} \\
   \alpha(k(x_\lambda,0)) &  \text{if }  \kk(x_\lambda,0)\in  \Lambda\setminus \Lambda'_{r_n}.  \end{cases}
\]
We see that the only problems might come from the region corresponding to 
$$\Lambda \cap \{\Lambda'_{r_n/5}\setminus \overline{\Lambda'_{r_n/10}}\}\subset \Lambda\setminus \Lambda'.$$ 
In that region, $H_n$ is of order $1/n$ while the logarithm $h_n$ has a positive minimal splitting which is independent of $n$. Thus if $n$ is large enough, no degeneracies can be induced on $\Lambda\setminus \Lambda'$. Also, by continuity, the non-degeneracy on $\Lambda$ can be extended to an $n$ dependent slab containing $\Lambda$. This concludes the proof.
\end{proof}
 
We can now join the extremities of the fours segments $S_1, \ldots, \: S_4$ in order to obtain a closed polygonal line inside the rectangle $Q$. A similar construction as in Proposition \ref{propolinie} provides us with an $\alpha$ which is completely non-degenerate around a thin neighbourhood of this polygonal line. 

The next step is to avoid any eigenvalue crossings inside this polygon. 

\subsection{Eigenvalue splitting in the ``bulk''}

Let $\Omega\subset \R^2$ be a compact set and let $\set{\alpha(\kk)}_{\kk \in \Omega}$ be a smooth family of $m \times m$ unitary matrices. We want to show in this final Subsection that degeneracies of the eigenvalues of $\alpha$ can be lifted, and that one can find a family of unitary matrices arbitrarily close to $\alpha$ which has completely non-degenerate spectrum.

We begin with a formal definition of a cluster of eigenvalues.

\begin{definition}
Let $\epsilon>0$. We say that $c_n(\kk)\subset \sigma(\alpha(\kk))$ is an \emph{$\epsilon$-cluster of $n$ eigenvalues} of $\alpha(\kk)$ (counting multiplicities) if for every $\lambda,\mu\in c_n(\kk)$ we have $|\lambda-\mu|<\epsilon$. 
\end{definition}

The next Lemma shows that $\epsilon$-clusters are stable.

\begin{lemma}\label{lemma1}
If $\alpha(\kk_0)$ has an $\epsilon$-cluster $c_n(\kk_0)$, then $\alpha(\kk)$ has at least one $\epsilon$-cluster with $n$ eigenvalues whenever $\kk\in B_r(\kk_0)$ and $r$ is small enough. If $\alpha(\kk_0)$ does not have any $\epsilon$-clusters containing $n$ eigenvalues, then $\alpha(\kk)$ cannot have $\epsilon/2$-clusters with $n$ eigenvalues when $\kk\in B_r(\kk_0)$ and $r$ is small enough.
\end{lemma}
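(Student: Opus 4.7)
The plan is to reduce both statements to the elementary fact that the eigenvalues of $\alpha(\kk)$, counted with algebraic multiplicity, depend continuously on $\kk$ in a suitable pairing sense. Concretely, I would first record the following (standard) stability property: since $\alpha$ is continuous and $\alpha(\kk_0)$ is an $m\times m$ matrix, for every $\eta>0$ there exists $r>0$ such that, for all $\kk\in B_r(\kk_0)$, the eigenvalues $\{\lambda_i(\kk_0)\}_{i=1}^{m}$ and $\{\lambda_i(\kk)\}_{i=1}^{m}$ (listed with multiplicity) can be paired through a permutation $\sigma_\kk$ so that $|\lambda_i(\kk)-\lambda_{\sigma_\kk(i)}(\kk_0)|<\eta$ for all $i$. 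This can be obtained either from continuity of roots of the characteristic polynomial $z\mapsto \det(z\,\Id-\alpha(\kk))$ or, equivalently and perhaps more transparently in our setting, from the stability of Riesz projections: if $\{\mu_j\}$ are the distinct eigenvalues of $\alpha(\kk_0)$ with algebraic multiplicities $\{m_j\}$, then the Riesz projection of $\alpha(\kk)$ onto the disk $\overline{B_\eta(\mu_j)}$ (which makes sense for $\eta$ smaller than half the minimal separation of the $\mu_j$'s) has rank exactly $m_j$ for $\kk$ close to $\kk_0$, because the resolvent $(z\,\Id-\alpha(\kk))^{-1}$ is uniformly close to $(z\,\Id-\alpha(\kk_0))^{-1}$ on the boundary circles.

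For the first statement, I would set $d:=\max_{\lambda,\mu\in c_n(\kk_0)}|\lambda-\mu|<\epsilon$, choose $\eta<(\epsilon-d)/2$, and invoke the pairing above to define $c_n(\kk)$ as the multisubset of size $n$ of $\sigma(\alpha(\kk))$ paired with $c_n(\kk_0)$. The triangle inequality then gives, for any two elements of $c_n(\kk)$, a pairwise distance bounded by $d+2\eta<\epsilon$, so $c_n(\kk)$ is the desired $\epsilon$-cluster.

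For the second statement, I would argue by contradiction. Assume no $\epsilon$-cluster of $n$ eigenvalues exists at $\kk_0$, and suppose there is a sequence $\kk_k\to\kk_0$ for which $\alpha(\kk_k)$ admits an $\epsilon/2$-cluster $c_n(\kk_k)$. Apply the pairing property with $\eta<\epsilon/4$; then the eigenvalues of $\alpha(\kk_0)$ paired to the elements of $c_n(\kk_k)$ form a multisubset of $\sigma(\alpha(\kk_0))$ of size $n$ whose pairwise diameter is at most $\epsilon/2+2\eta<\epsilon$, contradicting the hypothesis. Hence the desired $r>0$ must exist.

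The only non-trivial point is really the continuous pairing of spectra with multiplicity, which is where I expect the main technical care is needed — the sharp separation $\epsilon$ versus $\epsilon/2$ in the statement is precisely what provides the slack to absorb the perturbation $\eta$ in both directions, so once the pairing property is in hand, both implications follow by a direct triangle-inequality argument on constants.
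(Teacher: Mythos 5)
Your proposal is correct and takes essentially the same route as the paper: both rest on the local stability of spectra with multiplicities near $\kk_0$ (you state it as a continuous pairing of eigenvalues via a permutation; the paper realizes the same fact via Riesz projections and the Hausdorff distance of spectra), followed by the same triangle-inequality bookkeeping using the slack between $\epsilon$ and $\epsilon/2$. In the second implication your contradiction is marginally more streamlined — you pull the putative $\epsilon/2$-cluster back to an $n$-multiset of diameter $<\epsilon$ at $\kk_0$ — whereas the paper argues forward that two eigenvalues of the cluster must be attracted to distinct eigenvalues of $\alpha(\kk_0)$ lying at distance at least $\epsilon$, but the mathematical content is identical.
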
 
\begin{proof} First, we assume that $\alpha(\kk_0)$ has an $\epsilon$-cluster $c_n(\kk_0)$. Consider a closed simple contour $\mathcal{C}$ in the complex plane which surrounds the $n$ eigenvalues in $c_n(\kk_0)$ and define the Riesz projection associated to them:
$$P(c_n(\kk_0))=\frac{1}{2\pi \iu}\int_\mathcal{C} (z\Id -\alpha(\kk_0))^{-1} \, \di z.$$ 
The Hausdorff distance $d\sub{H}(\sigma(\alpha(\kk)),\sigma(\alpha(\kk_0)))$ between the spectra of $\alpha(\kk)$ and $\alpha(\kk_0)$ is smooth, thus, if $r$ is small enough, $\mathcal{C}$ lies in the resolvent set of $\alpha(\kk)$ and we can define the spectral projection 
$$P(\kk)=\frac{1}{2\pi \iu}\int_\mathcal{C} (z\Id-\alpha(\kk))^{-1} \, \di z.$$ 
If $r$ is small enough, then $\norm{P(\kk)-P(c_n(\kk_0))}<1$, hence the range of $P(\kk)$ has dimension $n$ and corresponds to a cluster $c_n(\kk)$ which converges in the Hausdorff metric to $c_n(\kk_0)$. Finally, let $\lambda,\mu\in c_n(\kk)$. Then if $r$ is even smaller
$$|\lambda-\mu|\leq 2 d\sub{H}(c_n(\kk),c_n(\kk_0)) +\max_{a,b\in c_n(\kk_0)}|a-b|<\epsilon.$$

Second, we assume that $\alpha(\kk_0)$ has no $\epsilon$-clusters containing $n$ eigenvalues. In particular, each eigenvalue of $\alpha(\kk_0)$ has a degeneracy of order at most $n-1$. This implies that given any collection of $n$ eigenvalues (counting multiplicities) of $\alpha(\kk_0)$, at least two of them are at a distance larger or equal than $\epsilon$ from each other and are associated to mutually orthogonal projections. Now assume that there exists a sequence $\kk_\ell$ converging to $\kk_0$ such that $\alpha(\kk_\ell)$ has an $\epsilon/2$-cluster with $n$ eigenvalues. If $\ell$ is large enough, then near each eigenvalue $\lambda$ of $\alpha(\kk_0)$ with degeneracy $j<n$ there will be exactly $j$ eigenvalues of $\alpha(\kk_\ell)$ (counting multiplicities) which will converge to $\lambda$. Hence at least two of the $n$ eigenvalues from the $\epsilon/2$-cluster of $\alpha(\kk_\ell)$ must be close to different eigenvalues of $\alpha(\kk_0)$ if $\ell$ is large enough, thus at some point the distance between them must be larger or equal than $\epsilon/2$, a contradiction. 
\end{proof}

For any  $\epsilon>0$, we define $\Omega_{\epsilon,n}\subset \Omega$ to be the set of those $\kk$'s for which $\alpha(\kk)$ has at least one $\epsilon$-clusters of $n$ eigenvalues counting multiplicities.
 
\begin{lemma}\label{lemma2}
\begin{enumerate}
 \item $\Omega_{\epsilon,n}$ is an open set.
 \item Assume that $\alpha(\kk)$ can have eigenvalues with a degeneracy at most $n\leq m$, for all $\kk\in \Omega$.  Then there exists some $\epsilon_0$ and $A>0$ such that for any $\epsilon<\epsilon_0$ and for any $\epsilon$-cluster $c_n(\kk)$ we have
\begin{align}\label{hc1}
\left (\inf_{\kk\in \Omega_{\epsilon,n}}\inf_{\lambda\in c_n(\kk)}\inf_{\mu\in \sigma(\alpha(\kk))\setminus c_n(\kk)}|\lambda-\mu|\right ) \geq A >0. 
\end{align}
In other words, each such ``maximal'' cluster remains  at a distance at least $A$ from the rest of the spectrum, uniformly in $\kk\in \Omega_{\epsilon,n}$ and $\epsilon<\epsilon_0$.  
\end{enumerate}
\end{lemma}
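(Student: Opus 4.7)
For part (1), openness is immediate from the first half of Lemma~\ref{lemma1}: if $\kk_0 \in \Omega_{\epsilon,n}$, that lemma already produces a ball $B_r(\kk_0)$ on which $\alpha$ still admits an $\epsilon$-cluster of $n$ eigenvalues, so $B_r(\kk_0) \subset \Omega_{\epsilon,n}$.

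For part (2), I would argue by contradiction, combining compactness of $\Omega$ with the standard continuity of Riesz spectral projections. If no uniform pair $\epsilon_0, A>0$ works, then substituting $\epsilon_0 = A = 1/j$ in the negation produces, for each $j \in \N$, a point $\kk_j \in \Omega$, a scale $\epsilon_j < 1/j$, an $\epsilon_j$-cluster $c_n(\kk_j)$, an eigenvalue $\lambda_j \in c_n(\kk_j)$, and an eigenvalue $\mu_j \in \sigma(\alpha(\kk_j))\setminus c_n(\kk_j)$ with $|\lambda_j-\mu_j|<1/j$. Passing to a subsequence by compactness, $\kk_j \to \kk_\ast \in \Omega$. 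The $n+1$ numbers in $c_n(\kk_j) \cup \{\mu_j\}$ have diameter at most $\epsilon_j + 1/j \to 0$, so along a further subsequence all of them converge to a common value $\lambda_\ast \in \sigma(\alpha(\kk_\ast))$.

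Now choose $r>0$ small enough that the closed disk $\overline{D_r(\lambda_\ast)}\subset\C$ isolates $\lambda_\ast$ from the rest of $\sigma(\alpha(\kk_\ast))$, and consider the Riesz projection
\[ P(\kk) := \frac{1}{2\pi\iu}\int_{\partial D_r(\lambda_\ast)}(z\Id-\alpha(\kk))^{-1}\,\di z. \]
For $\kk$ sufficiently close to $\kk_\ast$, the contour stays in the resolvent set of $\alpha(\kk)$ and $\kk\mapsto P(\kk)$ is continuous in norm, so for $j$ large $\mathrm{rank}\,P(\kk_j) = \mathrm{rank}\,P(\kk_\ast)$; by the standing hypothesis this common rank is at most $n$. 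However, for $j$ large the $n$ slots of $c_n(\kk_j)$ together with the extra slot $\mu_j \notin c_n(\kk_j)$ all lie inside $D_r(\lambda_\ast)$, so the total algebraic multiplicity of $\sigma(\alpha(\kk_j)) \cap D_r(\lambda_\ast)$ is at least $n+1$, giving $\mathrm{rank}\,P(\kk_j)\geq n+1$. This contradiction establishes (2).

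The main subtlety, in my view, is the multiplicity bookkeeping in the last step: one must be sure that $\mu_j$, lying in the multiset complement $\sigma(\alpha(\kk_j)) \setminus c_n(\kk_j)$, genuinely contributes one additional unit to $\mathrm{rank}\,P(\kk_j)$ rather than being absorbed into a multiplicity already counted by $c_n(\kk_j)$. This is built into the convention that clusters and their complements are sub-multisets of the spectrum, so the $n+1$ slots are counted as independent dimensions in the range of $P(\kk_j)$.
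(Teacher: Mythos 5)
Your argument is correct, and for both (1) and (2) it follows the same underlying strategy as the paper. For (2), the paper also argues by contradiction and passes to a convergent subsequence $\kk_\ell \to \kk_\infty$; the only structural difference is how the final contradiction is extracted. The paper chooses $\delta$ below the minimal spectral gap of $\alpha(\kk_\infty)$, so that a $\delta$-cluster of $n+1$ eigenvalues at $\kk_\infty$ would already violate the degeneracy-at-most-$n$ hypothesis, and then invokes the second half of Lemma~\ref{lemma1} to rule out $\delta/2$-clusters of $n+1$ eigenvalues for $\alpha(\kk_\ell)$ once $\ell$ is large, obtaining the contradiction from $c_n(\kk_\ell)\cup\{\mu_\ell\}$ once $\epsilon_\ell+1/\ell<\delta/2$. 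You instead carry out the Riesz-projection rank count explicitly around the limiting eigenvalue $\lambda_\ast$. The two are the same in substance: Lemma~\ref{lemma1}(2) encapsulates precisely the multiplicity bookkeeping you perform via $\mathrm{rank}\,P(\kk_j)$, so you have in effect re-derived that lemma inline rather than citing it, at the cost of a few extra lines but with the benefit of being self-contained. The subtlety you flag --- that $\mu_j$, taken from the multiset complement $\sigma(\alpha(\kk_j))\setminus c_n(\kk_j)$, really contributes an extra unit to $\mathrm{rank}\,P(\kk_j)$ --- is exactly the point on which the argument hinges, and your multiset convention resolves it correctly.
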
 
\begin{proof} 
The fact that $\Omega_{\epsilon,n}$ is open is a direct consequence of Lemma~\ref{lemma1}. 
 
Now let us assume that the infimum in \eqref{hc1} is zero. Then there exists a sequence $\epsilon_\ell\to 0$ and a sequence $\kk_\ell\in \Omega$ such that the matrix $\alpha(\kk_\ell)$ has an $\epsilon_\ell$-cluster $c_n(\kk_\ell)$, and there exist two eigenvalues $\lambda_\ell\in c_n(\kk_\ell)$ and $\mu_\ell\in \sigma(\alpha(\kk_\ell))\setminus c_n(\kk_\ell)$ such that $|\lambda_\ell-\mu_\ell|<1/\ell$. 
 
Passing to a subsequence, we may assume that $\kk_\ell$ converges to some $\kk_\infty\in \Omega$. Since $\alpha(\kk_\infty)$ can have eigenvalues with a  degeneracy at most $n$, there exists some $\delta$ small enough such that $\alpha(\kk_\infty)$ cannot have $\delta$-clusters containing $n+1$ eigenvalues (counting multiplicities). Since $\alpha(\kk_\ell)$ converges in norm to $\alpha(\kk_\infty)$, Lemma~\ref{lemma1} implies that $\alpha(\kk_\ell)$ cannot have  $\delta/2$-clusters which contain $n+1$ eigenvalues (counting multiplicities) if $\ell$ is large enough. But $c_n(\kk_\ell)\cup \{\mu_\ell\}$ is a $\delta/2$-cluster containing $n+1$ eigenvalues when $\epsilon_\ell+1/\ell<\delta/2$, a contradiction. 
\end{proof}
 
Let $S_{\epsilon,n}\subset \Omega$ be the set of all $\kk$'s where $\alpha(\kk)$ has at least one ``closed'' cluster $\overline{c_n(\kk)}$ of $n$ eigenvalues, {\sl i.e.} for any $\lambda,\mu\in \overline{c_n(\kk)}$ we have that $|\lambda-\mu|\leq \epsilon$. 

\begin{lemma}\label{lemma3}
The set $S_{\epsilon,n}$ is compact, and for every $\epsilon'>\epsilon$ we have 
$$\overline{\Omega_{\epsilon,n}}\subset S_{\epsilon,n}\subset \Omega_{\epsilon',n}.$$ 
\end{lemma}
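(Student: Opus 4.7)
The plan is to prove the three assertions in the order: first the inclusion $S_{\epsilon,n}\subset\Omega_{\epsilon',n}$ for $\epsilon'>\epsilon$ (which is essentially immediate), then the compactness (closedness in $\Omega$) of $S_{\epsilon,n}$, and finally deduce $\overline{\Omega_{\epsilon,n}}\subset S_{\epsilon,n}$ as a direct consequence of $\Omega_{\epsilon,n}\subset S_{\epsilon,n}$ plus closedness.

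For the first inclusion, the argument is trivial: if $\kk\in S_{\epsilon,n}$, then by definition there exists a collection $\overline{c_n(\kk)}$ of $n$ eigenvalues of $\alpha(\kk)$ with pairwise distances at most $\epsilon$. Since $\epsilon<\epsilon'$, the non-strict inequality $|\lambda-\mu|\le\epsilon$ upgrades to the strict inequality $|\lambda-\mu|<\epsilon'$, so the same collection is an $\epsilon'$-cluster and $\kk\in\Omega_{\epsilon',n}$.

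The heart of the proof is showing that $S_{\epsilon,n}$ is closed in the compact set $\Omega$. The plan is to take a sequence $\kk_\ell\in S_{\epsilon,n}$ with $\kk_\ell\to\kk_\infty\in\Omega$ and exhibit a closed $\epsilon$-cluster of $n$ eigenvalues in $\sigma(\alpha(\kk_\infty))$. For each $\ell$ we pick such a cluster $\overline{c_n(\kk_\ell)}=\{\lambda^{(\ell)}_1,\dots,\lambda^{(\ell)}_n\}$ (counting multiplicities). Since each $\lambda^{(\ell)}_j$ lies in the unit circle, by a standard diagonal/compactness argument we can extract a subsequence along which $\lambda^{(\ell)}_j\to\lambda^\infty_j$ for every $j\in\{1,\dots,n\}$. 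By continuity of the spectrum of $\alpha$ as a multiset (upper semicontinuity of eigenvalues under norm convergence of matrices), each $\lambda^\infty_j$ belongs to $\sigma(\alpha(\kk_\infty))$, and the multiplicities are respected in the sense that the collection $\{\lambda^\infty_1,\dots,\lambda^\infty_n\}$ (with repetitions) accounts for $n$ eigenvalues of $\alpha(\kk_\infty)$ counted with multiplicity; this is the only slightly delicate point, and it follows from the Riesz-projection argument already used in Lemma~\ref{lemma1}: the spectral projection associated with a small contour enclosing a limit point $\lambda^\infty_j$ is the norm limit of the corresponding spectral projections for $\alpha(\kk_\ell)$, so its rank (the total multiplicity near $\lambda^\infty_j$) is at least the number of indices accumulating there. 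Finally, passing to the limit in $|\lambda^{(\ell)}_i-\lambda^{(\ell)}_j|\le\epsilon$ gives $|\lambda^\infty_i-\lambda^\infty_j|\le\epsilon$, so $\kk_\infty\in S_{\epsilon,n}$.

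Once $S_{\epsilon,n}$ is closed, the inclusion $\overline{\Omega_{\epsilon,n}}\subset S_{\epsilon,n}$ is automatic from $\Omega_{\epsilon,n}\subset S_{\epsilon,n}$. The main technical obstacle in this plan is the multiplicity bookkeeping when extracting limits of the eigenvalues, since several indices $j$ may accumulate to the same point in $\sigma(\alpha(\kk_\infty))$; this is precisely where one must invoke continuity of the Riesz projections rather than merely Hausdorff convergence of the spectra as sets.
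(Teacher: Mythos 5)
Your proof is correct, and it takes a genuinely different route from the paper's. To establish that $S_{\epsilon,n}$ is closed, the paper shows directly that the complement $\Omega\setminus S_{\epsilon,n}$ is open: given $\kk_0$ in the complement, for every $n$-element sub-multiset of $\sigma(\alpha(\kk_0))$ at least one pair of eigenvalues is separated by a distance strictly larger than $\epsilon$; since the number of such sub-multisets is finite and the spectrum is norm-continuous, a single radius $r>0$ preserves all these separations, so the whole ball $B_r(\kk_0)$ stays in the complement. You instead prove sequential closedness: starting from $\kk_\ell\to\kk_\infty$ with $\kk_\ell\in S_{\epsilon,n}$, you extract a convergent subsequence of the $n$ eigenvalues forming each closed cluster and pass to the limit in the pairwise inequalities. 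The two approaches are close in spirit --- both ultimately rest on the norm-continuity of Riesz projections and the resulting multiset convergence of spectra --- but your sequential argument makes the delicate multiplicity bookkeeping explicit (several cluster eigenvalues may collapse to the same limit, and you correctly invoke the rank of the limiting spectral projection to account for this), whereas the paper compresses this into the assertion that the ``subcluster evolving from $\lambda$'' remains separated, which leaves the multiplicity tracking implicit and is worded slightly loosely (the claim that a single $\lambda$ is $\epsilon$-far from \emph{all} other eigenvalues in the group does not literally follow from $\kk_0\notin S_{\epsilon,n}$, though the weaker existence of a separated pair suffices for the argument). Your handling of the inclusions $S_{\epsilon,n}\subset\Omega_{\epsilon',n}$ (non-strict $\le\epsilon$ implies strict $<\epsilon'$) and $\overline{\Omega_{\epsilon,n}}\subset S_{\epsilon,n}$ (inclusion plus closedness) matches what the paper does implicitly.
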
 
\begin{proof} 
We first prove that $S_{\epsilon,n}$ is closed. Let $\kk_0\in \Omega\setminus S_{\epsilon,n}$. Then given any group $c_n$ of $n$ eigenvalues of $\alpha(\kk_0)$, there exists at least one among them, call it $\lambda$ and assume that it has multiplicity $j<n$, such that the distance between $\lambda$ and the other $n-j$ eigenvalues is larger than $\epsilon$. Since $\alpha(\kk)$ is norm-smooth, the subcluster of $j$ eigenvalues evolving from $\lambda$ will still remain at a distance larger than $\epsilon$ from the other $n-j$ if $\norm{\kk-\kk_0}<r$ is small enough. Since the number of all possible $n$ combinations of eigenvalues is finite, we may find an $r>0$ which simultaneously works for all of them.  Thus $\Omega\setminus S_{\epsilon,n}$ is open. Since $S_{\epsilon,n}$ is contained in the compact set $\Omega$, it must be compact as well. 

We also have $S_{\epsilon,n}\subset \Omega_{\epsilon',n}$, while both $S_{\epsilon,n}$ and $\Omega\setminus \Omega_{\epsilon',n}$ are  compact. Thus the distance between $S_{\epsilon,n}$ and $\Omega\setminus \Omega_{\epsilon',n}$ is positive.
\end{proof}
 
Assume that $\alpha(\kk)$ can have eigenvalues of multiplicity at most $n$ when $\kk\in\Omega$ and let $\epsilon_1\leq \min\{\epsilon_0/10,A/10\}$, where $\epsilon_0$ and $A$ are as in \eqref{hc1}. We also assume that the maximal degeneracy $n$ is achieved in at least one point so that $S_{\epsilon_1,n}$ is never empty no matter how small $\epsilon_1$ is taken.

Given any $\kk_0\in S_{\epsilon_1,n}$ we know that there exist $\ell(\kk_0)\geq 1$ ``closed'' clusters of $n$ eigenvalues denoted by $\overline{c_n^{(j)}(\kk_0)}$, $1\leq j\leq  \ell(\kk_0)$, such that the distance between any two eigenvalues inside each cluster is less or equal than $\epsilon_1\leq A/10$, while the distance between any given cluster and its complementary part of the spectrum is at least $A$, uniformly in $\epsilon_1$. 
 
There are exactly $\ell(\kk_0)$ Riesz projections $P_j(\kk_0)$ of rank $n$, and each of them can be smoothly extended to a unique $P_j(\kk)$ near $\kk_0$. 

Let $d>0$ be the distance between $S_{\epsilon_1,n}$ and $\Omega\setminus \Omega_{2\epsilon_1,n}$. Define the open sets
\begin{equation} \label{hc2}
\begin{aligned}
O_1(\kk_0)&=B_{d/2}(\kk_0)\cap \{\kk\in \Omega_{2\epsilon_1,n}:\; \norm{P_j(\kk)-P_j(\kk_0)}<1/10,\; 1\leq j\leq \ell(\kk_0)\},\nonumber \\
O_2(\kk_0)&=B_{d}(\kk_0)\cap \{\kk\in \Omega_{3\epsilon_1,n}:\; \norm{P_j(\kk)-P_j(\kk_0)}< 1/5,\; 1\leq j\leq \ell(\kk_0)\}.
\end{aligned}
\end{equation}
The closure $\overline{ O_1(\kk_0)}$ is included in the set 
$$\overline{B_{d/2}(\kk_0)}\cap \{\kk\in \overline{\Omega_{2\epsilon_1,n}}:\; \norm{P_j(\kk)-P_j(\kk_0)}\leq 1/10,\; 1\leq j\leq \ell(\kk_0)\}.$$ 
 
Also, from Lemma \ref{lemma3}, we have that $\overline{\Omega_{2\epsilon_1,n}}\subset S_{2\epsilon_1,n}\subset \Omega_{3\epsilon_1,n}$, hence 
$$\overline{ O_1(\kk_0)}\subset O_2(\kk_0).$$
Because $S_{\epsilon_1,n}$ is compact, there exist a finite number $B$ of points $\kk_b$ such that 
\begin{align}\label{hc3}
S_{\epsilon_1,n}\subset \bigcup_{b=1}^B O_1(\kk_b).
\end{align}

Finally we introduce the set 
 \begin{align}\label{hc3'}
T_{\epsilon_1,n,L}:= \bigcup_{\ell(\kk_b)=L} \overline{O_1(\kk_b)}, \quad L:=\max_{1\leq b\leq B}\{\ell(\kk_b)\}\geq 1,.
\end{align}
This compact set contains all the points of $\Omega$ in which $\alpha$ might have exactly $L$ different $n$-fold degenerate eigenvalues, well separated among themselves and from the rest of the spectrum.

The next result allows to lift maximal degeneracies.

\begin{proposition}\label{prop1}
Assume that $\alpha(\kk)$ is smooth in $\kk$ and can have eigenvalues of multiplicity at most $n$, with $2\leq n\leq m$. We also assume that not more than one eigenvalue can be $n$-fold degenerate at a time. Then we can construct a family of smooth unitary maps $\alpha_s(\kk)$, $s>0$, such that 
$$\lim_{s\to 0}\sup_{\kk\in \Omega} \norm{\alpha_s(\kk)-\alpha(\kk)}=0$$ 
while $\alpha_s(\kk)$ can only have eigenvalue crossings of order at most $n-1$. 
\end{proposition}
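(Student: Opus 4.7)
My plan is to lift the potential $n$-fold degeneracies one problem region at a time, exploiting the hypothesis that only one eigenvalue can be $n$-fold degenerate at a given $\kk$ (so, effectively, each problem region admits a single well-defined rank-$n$ spectral cluster to perturb). The key geometric input is that the $n$-fold-degenerate locus inside $U(n)$, namely $\set{z\Id : z \in U(1)}$, has codimension $n^2 - 1 \ge 3$, whereas the parameter space $\Omega \subset \R^2$ is only two-dimensional. A suitably generic small perturbation acting inside the cluster should therefore miss this locus entirely.

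Concretely, on each $O_2(\kk_b)$ the rank-$n$ Riesz projection $P_b(\kk)$ of the unique ``closed'' cluster of $n$ eigenvalues is smooth, and by \eqref{hc2} satisfies $\norm{P_b(\kk) - P_b(\kk_b)} < 1/5$. The Kato--Nagy unitary $U_b(\kk)$ then yields a smooth isometry from $\Ran P_b(\kk_b)$ onto $\Ran P_b(\kk)$. I would fix once and for all a self-adjoint $n \times n$ matrix $H_b$ on $\Ran P_b(\kk_b)$ with $n$ pairwise distinct eigenvalues, and set $H_b(\kk) := U_b(\kk) H_b U_b(\kk)^*$ on $\Ran P_b(\kk)$, extended by zero on $(\Id - P_b(\kk))\C^m$. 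The resulting $H_b(\kk)$ is smooth, self-adjoint, and commutes with $P_b(\kk)$. Choosing smooth cutoffs $\chi_b \in C_0^\infty(O_2(\kk_b))$ with $0 \le \chi_b \le 1$ and $\chi_b \equiv 1$ on $\overline{O_1(\kk_b)}$, I would then define inductively $\alpha_s^{(0)} := \alpha$ and $\alpha_s^{(b)}(\kk) := \alpha_s^{(b-1)}(\kk) \cdot \eu^{\iu s \chi_b(\kk) H_b(\kk)}$, and set $\alpha_s := \alpha_s^{(B)}$. Smoothness, unitarity, and $\sup_{\kk \in \Omega} \norm{\alpha_s(\kk) - \alpha(\kk)} = \mathcal{O}(s)$ are then immediate.

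The heart of the argument is showing that $\alpha_s$ has no $n$-fold degenerate eigenvalue. Outside $\bigcup_b \overline{O_1(\kk_b)}$ we have $\kk \notin T_{\epsilon_1,n,L}$ by \eqref{hc3}, so $\alpha(\kk)$ has no $n$-fold degeneracy by definition of $T_{\epsilon_1,n,L}$; by the second half of Lemma~\ref{lemma1} this persists under $\mathcal{O}(s)$ perturbations for $s$ small. Inside $\overline{O_1(\kk_b)}$, because $H_b(\kk)$ commutes with $P_b(\kk)$ the perturbation respects the block decomposition: the spectrum on $(\Id - P_b(\kk))\C^m$ is unchanged, while the cluster eigenvalues are those of the smooth $n\times n$ family $\tilde{\alpha}_b(\kk) \, \eu^{\iu s H_b(\kk)|_{\Ran P_b(\kk)}}$ with $\tilde{\alpha}_b(\kk) := P_b(\kk) \alpha_s^{(b-1)}(\kk) P_b(\kk)|_{\Ran P_b(\kk)}$. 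The uniform gap $A > 0$ from Lemma~\ref{lemma2} ensures the cluster remains isolated from the rest of the spectrum for small $s$, so an $n$-fold degeneracy of $\alpha_s(\kk)$ can only occur when this $n \times n$ matrix equals $z\Id$ for some $z \in U(1)$. At $\kk = \kk_b$ one has $\tilde{\alpha}_b(\kk_b) = \lambda_b \Id$, so $\partial_s |_{s=0} \tilde{\alpha}_b(\kk_b) \eu^{\iu s H_b} = \iu \lambda_b H_b$; since $H_b$ has $n$ distinct eigenvalues, the path $s \mapsto \lambda_b \eu^{\iu s H_b}$ leaves the scalar locus transversally. By a Sard-type argument applied to the two-parameter map $(s,\kk) \mapsto \tilde{\alpha}_b(\kk) \eu^{\iu s H_b(\kk)}$ into the codimension-$(n^2-1)$ subvariety $\set{z\Id}$, for almost every sufficiently small $s>0$ the map avoids the scalar locus on the entire compact $\overline{O_1(\kk_b)}$.

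The main obstacle I anticipate lies in the \emph{inductive} step. After executing the $b$-th perturbation, I must verify that the subsequent perturbations at stages $b' > b$ do not reintroduce $n$-fold degeneracies in the previously-cleared region $\overline{O_1(\kk_b)}$, and conversely that the perturbation at stage $b$ does not destroy the original absence of degeneracy near $\overline{O_1(\kk_{b'})}$ for $b' > b$. The clean way to manage this is to first fix the cover, the reference matrices $\set{H_b}_{b=1}^B$, and the cutoffs $\set{\chi_b}_{b=1}^B$; then, having shown that each single step produces a non-degeneracy gap inside the relevant cluster which is uniform on the compact $\overline{O_1(\kk_b)}$, a single sufficiently small value of $s$ (avoiding at most a finite union of measure-zero exceptional sets from the $B$ Sard arguments) makes all $B$ compatibility conditions hold simultaneously, yielding the required approximant $\alpha_s$.
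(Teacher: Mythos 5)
The heart of your argument does not go through as stated. You fix a reference matrix $H_b$ once and for all and then argue, by a ``Sard-type'' argument applied to the three-dimensional map $(s,\kk)\mapsto\tilde\alpha_b(\kk)\,\eu^{\iu s H_b(\kk)}$, that for almost every small $s>0$ the slice $\kk\mapsto\tilde\alpha_b(\kk)\,\eu^{\iu s H_b(\kk)}$ avoids the scalar locus $W=\set{z\Id}\subset U(n)$ on the whole compact $\overline{O_1(\kk_b)}$. But this parametric-transversality conclusion requires $F(s,\kk)$ to be transverse to $W$, and $W$ has codimension $n^2-1\ge 3$ in $U(n)$ while the source has dimension $3$; for $n\ge 3$ transversality of $F$ to $W$ is impossible unless $F$ already misses $W$, which is exactly what you are trying to prove. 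The observation that $\partial_s F=\iu z_0 H_b$ is not tangent to $W$ at a bad point only shows that the bad $s$-set over a \emph{fixed} $\kk$ is discrete; it does not control the union over $\kk$. Indeed the ``almost every $s$'' claim is false: take, in the Kato--Nagy frame, $\gamma(\kk)=\eu^{-\iu\, c(\kk)H_b}$ for a small non-constant scalar function $c$. Then $\gamma(\kk)\eu^{\iu sH_b}=\eu^{\iu(s-c(\kk))H_b}$ is scalar precisely when $s=c(\kk)$, and since $H_b$ has $n$ distinct eigenvalues this is the only degeneracy; the set of bad $s$ is the whole interval $c(\overline{O_1(\kk_b)})$, which has positive measure. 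This family satisfies all the hypotheses of the Proposition (for $m=n$ there is trivially at most one $n$-fold degenerate eigenvalue), so no choice of a single scalar $s$ rescues the argument.

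The underlying issue is that you have only a one-real-parameter family of perturbations ($s$ times a fixed direction $H_b$), which is too rigid to dodge a bad set that can itself be one-dimensional in $\Omega$. The paper's proof avoids this by building a \emph{three}-parameter family of perturbations adapted to $\alpha$: after writing $\gamma(\kk)=z(\kk)\,\eu^{\frac{1}{n}\mathrm{Ln}(1+\mathcal{O}(\epsilon_1))}\eu^{\iu h(\kk)}$ with $h(\kk)\in\mathfrak{su}(n)$, it projects $h(\kk)$ onto three fixed Pauli-type generators $\Sigma_{1,2,3}$ of $\mathfrak{su}(n)$ to get ${\bf F}(\kk)\in\R^3$. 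Since $\kk$ ranges over a two-dimensional set, the range of ${\bf F}$ has Lebesgue measure zero in $\R^3$, and one can choose a small vector ${\bf v}^{(s_1)}\in\R^3$ so that $\norm{{\bf F}(\kk)+{\bf v}^{(s_1)}}>0$ on the compact; the perturbed generator $h(\kk)+\chi(\kk){\bf v}^{(s_1)}\cdot\boldsymbol{\Sigma}$ then has a nonvanishing $\Sigma$-component and so is never zero, while the estimate \eqref{hc5} guards against re-introducing degeneracies in the transition region. You should replace the ``almost every $s$'' step by a measure-zero argument of this kind in which the perturbation is chosen \emph{after} seeing $\tilde\alpha_b$, rather than fixing $H_b$ a priori and varying only the scale.
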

\begin{proof} 
An example would be $m=5$ and $n=3$, when if a $3$-crossing occurs at some $\kk$, no other $3$-crossing is possible at the same $\kk$. 
 
Since no more than one $n$ cluster can occur if $\epsilon_1$ is small enough, then on each open set $O_1(\kk_b)$ in \eqref{hc3} we will eventually have $\ell(\kk_b)=1$, {\sl i.e.} there is exactly one $\epsilon_1$-cluster of maximal dimension $n$ near $\kk_b$, and we can put $L=1$ in \eqref{hc3'}.

We first want to prevent any $n$-fold degeneracies  which might happen in $\overline{O_1(\kk_1)}$. We know that there exists exactly one $3\epsilon_1$-cluster of $n$ eigenvalues corresponding to a smooth Riesz projection $P_1(\kk)$ if $\kk\in \overline{O_1(\kk_1)}$. 
We can intertwine $P_1(\kk)$ and $P_1(\kk_1)$ through a smooth Kato-Nagy unitary $U(\kk)$, such that $U(\kk)P_1(\kk)U^{-1}(\kk)=P_1(\kk_1)$. The eigenvalues belonging to the $n$ cluster coincide with the eigenvalues of the reduced operator
\begin{align}\label{hc4}
U(\kk)P_1(\kk)\alpha(\kk)U^{-1}(\kk)=P_1(\kk_1) U(\kk)\alpha(\kk)U^{-1}(\kk)P_1(\kk_1),\quad \kk\in O_2(\kk_1).
\end{align}
This operator is represented by an $n\times n$ unitary matrix $\gamma(\kk)$ when restricted to ${\rm Ran}(P_1(\kk_1))$. Denoting by $\{\Psi_i(\kk_1)\}_{i=1}^{n}$ some orthonormal basis in this range, we have
$$\gamma_{ij}(\kk):=\scal{\Psi_i(\kk_1)}{U(\kk)\alpha(\kk)U^{-1}(\kk)\Psi_j(\kk_1)} ,\quad 1\leq i,j\leq n.$$
The unitary matrix $\gamma(\kk)$ is almost diagonal because the distance between any two of its $n$ eigenvalues is at most $3\epsilon_1$. Define $z(\kk)={\rm Tr}(\gamma(\kk))/n$. Then we must have
$$\gamma(\kk)=z(\kk)\Id+\mathcal{O}(\epsilon_{1}),\quad |z(\kk)|=1+\mathcal{O}(\epsilon_{1}).$$
Taking the determinant on both sides we obtain
$$\det \gamma(\kk) = z(\kk)^n (1+\mathcal{O}(\epsilon_{1})).$$
If $\epsilon_1$ is chosen smaller than some numerical constant, then we can write 
$$\det \gamma(\kk) = \left (z(\kk) \eu^{\frac{1}{n}{\rm Ln}(1+\mathcal{O}(\epsilon_{1}))}\right )^n.$$

Let us define
$$\widetilde{\gamma}(\kk):=\left (z(\kk) \eu^{\frac{1}{n}{\rm Ln}(1+\mathcal{O}(\epsilon_{1}))}\right )^{-1}\gamma(\kk).$$
Then $\det \widetilde{\gamma}(\kk)=1$  and $\norm{\widetilde{\gamma}(\kk)-\Id}=\mathcal{O}(\epsilon_{1})$ for all $\kk\in O_2(\kk_1)$. An important consequence is that $\widetilde{\gamma}(\kk)$ is a smooth $SU(n)$ matrix, close in norm to the identity matrix. Moreover,
$$\widetilde{\gamma}(\kk)=\eu^{\iu h(\kk)}$$
where $h(\kk)$ is a traceless generator of $SU(n)$, self-adjoint, smooth, and $\norm{h(\kk)}= \mathcal{O}(\epsilon_{1})$.
Hence the $U(n)$ matrix $\gamma(\kk)$ can be expressed as
$${\gamma}(\kk)=z(\kk) \eu^{\frac{1}{n}{\rm Ln}(1+\mathcal{O}(\epsilon_{1}))}\eu^{\iu h(\kk)}.$$
The complete matrix $\alpha$ has an $n$-crossing in $O_2(\kk_1)$ \emph{if and only if} $\gamma$ is diagonal, and this is equivalent with $h(\kk)=0$.
The following estimate
\begin{align}\label{hc5}
\norm{h(\kk)}\geq C(\epsilon_1)>0,\quad  \kk\in O_2(\kk_1)\setminus T_{\epsilon_1,n,1}
\end{align}
tells us that at least one of the $n$ eigenvalues of our cluster must be at a positive distance from at least one other eigenvalue from the other $n-1$ if $\kk\in O_2(\kk_1)\setminus T_{\epsilon_1,n,1}$, hence $h$ must have at least one non-zero real eigenvalue, uniformly in $\kk$. 

Denote by $\Sigma_{1,2,3}$ the extensions of the usual Pauli matrices to $\mathfrak{su}(n)$. The decomposition of ${h}(\kk)$ with respect to the basis of the generators of $SU(n)$ has some components $F_1(\kk)\Sigma_1+F_2(\kk)\Sigma_2+F_3(\kk)\Sigma_3$, where each $F_j(\kk)={\rm Tr}({h}(\kk)\Sigma_j)/2$ is smooth on $O_2(\kk_1)$. The range of the map
$$O_2(\kk_1)\ni \kk\mapsto {\bf F}(\kk):=[F_1(\kk),F_2(\kk),F_3(\kk)]\in\R^3$$
has Lebesgue measure zero, hence the origin of $\R^3$ cannot be an interior point of this range. Thus given any $s_1>0$ we may find a vector ${\bf v}^{(s_1)}\in\R^3$ such that:
\[
\norm{{\bf v}^{(s_1)}}=s_1,\qquad \inf_{\kk\in \overline{O_1(\kk_1)}}\norm{{\bf F}(\kk)+{\bf v}^{(s_1)}}>0.
\]
Now take a smooth function $0\leq \chi\leq 1$ which equals $1$ on the compact set $\overline{O_1(\kk_1)}$ and has support in $O_2(\kk_1)$. Define the $U(n)$ matrix
$$\gamma^{(s_1)}(\kk):=z(\kk) \eu^{\frac{1}{n}{\rm Ln}(1+\mathcal{O}(\epsilon_{1}))}\exp\left\{\iu \left( {h}(\kk)+\chi(\kk) \,{\bf v}^{(s_1)}\cdot\boldsymbol{\Sigma} \right) \right\},\quad \kk\in O_2(\kk_1).$$
The perturbed generator ${h}(\kk)+\chi(\kk)\,{\bf v}^{(s_1)}\cdot\boldsymbol{\Sigma}$ is traceless, hence $\gamma^{(s_1)}(\kk)$ has $n$ identical eigenvalues in $O_2(\kk_1)$ if and only if ${h}(\kk)+\chi(\kk)\,{\bf v}^{(s_1)}\cdot\boldsymbol{\Sigma}=0$ for some $\kk\in O_2(\kk_1)$. 

If $s_1$ is sufficiently small compared to $\epsilon_{1}$,  then ${h}(\kk)+\chi(\kk)\,{\bf v}^{(s_1)}\cdot\boldsymbol{\Sigma}$ cannot become zero inside $O_2(\kk_1)\setminus T_{\epsilon_1,n,1}$  due to \eqref{hc5}.  Also, the same perturbed generator equals ${h}(\kk)+{\bf v}^{(s_1)}\cdot\boldsymbol{\Sigma}$ inside $\overline{O_1(\kk_1)}$ thus it is different from zero,  no matter how small $s_1>0$ is. We conclude that if $s_1$ is smaller than some critical value depending on $\epsilon_{1}$, the $U(n)$ matrix $\gamma^{(s_1)}(\kk)$ can have an $n$-crossing neither on $O_2(\kk_1)\setminus T_{\epsilon_1,n,1}$ nor on $\overline{O_1(\kk_1)}$. Moreover, 
$$\lim_{s_1\searrow 0}\sup_{\kk\in O_2(\kk_1)}\norm{\gamma^{(s_1)}(\kk)-\gamma(\kk)}=0.$$

Using the notation introduced in \eqref{hc4}, we define the $U(m)$ matrix $\alpha^{(s_1)}(\kk)$ for $\kk \in \Omega$ as follows:
\[
\alpha^{(s_1)}(\kk):= \begin{cases}
\alpha(\kk), &  \kk\in \Omega\setminus O_2(\kk_1) \\
P_1(\kk)^\perp\alpha(\kk)+ \sum_{i,j=1}^{m-1} \gamma_{ij}^{(s_1)}(\kk)\ket{U^*(\kk)\Psi_i(\kk_1)}\bra{U^*(\kk)\Psi_j(\kk_1)}, & \kk\in O_2(\kk_1)\end{cases}
\]
The matrix $\alpha^{(s_1)}(\kk)$ is smooth and converges in norm to $\alpha(\kk)$. Also, if $s_1$ is sufficiently small (compared to $\epsilon_{1}$), we know that $\alpha^{(s_1)}(\kk)$ can have $n$-crossings neither on the old $\Omega\setminus T_{\epsilon_1,n,1}$ 
nor on $\overline{O_1(\kk_1)}$. 

In the next step we will fix $s_1$ and perturb $\alpha^{(s_1)}(\kk)$ inside $O_2(\kk_2)$ so that the new perturbed matrix (denoted by $\alpha^{(s_1,s_2)}(\kk)$) has no $n$-crossings on the old $\Omega\setminus T_{\epsilon_1,n,1}$, on  $\overline{O_1(\kk_1)}$ \emph{and} on $\overline{O_1(\kk_2)}$, provided $s_2$ is small enough  with respect to \emph{both} $\epsilon_{1}$ 
and $s_1$. We stress the fact that the sets $O_2(\kk_j)$ and $O_1(\kk_j)$ (see \eqref{hc2}) remain unchanged and only depend on the initial $\alpha$. 

We observe that $\alpha^{(s_1)}(\kk)$ will continue to have exactly one well isolated $n$ cluster if $\kk\in O_2(\kk_b)$. Moreover, the Riesz projections $P_j^{(s_1)}(\kk)$ are close in norm to $P_j(\kk)$ when $s_1$ is small, hence the Kato-Nagy formula can again be used for $\kk\in O_2(\kk_2)$ in order to reduce the problem to an $n$-dimensional matrix as we did in \eqref{hc4}. 

We can perform the same type of local generator perturbation (depending on $s_2$), where now $\alpha$ is replaced with $\alpha^{(s_1)}$ and $O_2(\kk_1)$ with $O_2(\kk_2)$. The $s_2$-perturbation may affect the set $O_2(\kk_2)\cap\overline{O_1(\kk_1)}$, but if $s_2$ is small enough compared to $s_1$ and $\epsilon_{1}$ it cannot re-induce $n$-crossings inside $\overline{O_1(\kk_1)}$. On the other hand, we can make sure that the perturbed generator is never zero on $\overline{O_1(\kk_2)}$ no matter how small $s_2$ is. 

After $B$ steps we make sure that no $n$-crossings can take place in $T_{\epsilon_1,n,1}$ and we are done. 
\end{proof}

Finally, with the next result we are able to lift multiple simultaneous degeneracies.

\begin{proposition}\label{prop2}
Assume that $\alpha(\kk)$ is smooth and can have eigenvalues of multiplicity at most $n$, with $2\leq n\leq m$. We assume that exactly $L>1$ different  eigenvalues can be simultaneously $n$-fold degenerate at a given $\kk$. Then we can construct a family of smooth unitary maps $\alpha_s(\kk)$, $s>0$, such that 
$$\lim_{s\to 0}\sup_{\kk\in \Omega}\norm{\alpha_s(\kk)-\alpha(\kk)}=0$$ 
and $\alpha_s(\kk)$ can have at most $L-1$ different  eigenvalues which can be simultaneously $n$-fold degenerate at a given $\kk$. 
\end{proposition}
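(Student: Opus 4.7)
The plan is to mimic the construction in Proposition~\ref{prop1}, but now perturb only \emph{one} of the $L$ simultaneously $n$-fold degenerate eigenvalues, leaving the remaining $L-1$ untouched. To make this precise, I would first adapt the machinery of Lemmas~\ref{lemma1}--\ref{lemma3} to describe the set $T_{\epsilon_1,n,L}\subset\Omega$ of points near which $L$ different well-separated $\epsilon_1$-clusters of $n$ eigenvalues coexist. By Lemma~\ref{lemma2} applied to each cluster individually, there is a uniform $A>0$ such that on a neighborhood of any $\kk_0\in T_{\epsilon_1,n,L}$ the $L$ clusters, and their complement in $\sigma(\alpha(\kk))$, stay pairwise at distance at least $A$. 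In particular, there exist $L$ smooth Riesz projections $P_1(\kk),\dots,P_L(\kk)$ of rank $n$ on a suitable open set $O_2(\kk_0)$, constructed exactly as in \eqref{hc2}.

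Next, I would single out (say by the ordering of the arguments of the trace-averages $z_j(\kk_0) = \Tr(P_j(\kk_0)\alpha(\kk_0))/n$) a distinguished cluster, say $P_1$, and apply to it alone the Kato--Nagy/generator perturbation from Proposition~\ref{prop1}: intertwine $P_1(\kk)$ with $P_1(\kk_0)$ by a smooth Kato--Nagy unitary $U(\kk)$, reduce the block $U(\kk)P_1(\kk)\alpha(\kk)U(\kk)^{-1}$ to an $n\times n$ unitary $\gamma(\kk) = z(\kk)\,\eu^{\frac{1}{n}\operatorname{Ln}(1+\mathcal{O}(\epsilon_1))}\,\eu^{\iu h(\kk)}$ with $h(\kk)$ a small traceless self-adjoint matrix, and then modify $h(\kk)$ into $h(\kk) + \chi(\kk)\,{\bf v}^{(s)}\cdot \boldsymbol{\Sigma}$ using a cutoff $\chi$ adapted to the pair $O_1(\kk_0)\Subset O_2(\kk_0)$ and a generic vector ${\bf v}^{(s)}\in\R^3$ of small norm $s$. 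Inserting the modified block back via the formula
\[
\alpha^{(s)}(\kk) := P_1(\kk)^\perp \alpha(\kk) + \sum_{i,j=1}^{m-1} \gamma_{ij}^{(s)}(\kk)\, \ket{U(\kk)^*\Psi_i(\kk_0)}\bra{U(\kk)^*\Psi_j(\kk_0)}
\]
produces a smooth unitary that coincides with $\alpha$ outside $O_2(\kk_0)$, approaches $\alpha$ as $s\to 0$, and on $\overline{O_1(\kk_0)}$ has its first cluster no longer $n$-fold degenerate. The remaining $L-1$ clusters $P_2,\dots,P_L$ are left completely unchanged, and therefore at any $\kk\in\overline{O_1(\kk_0)}$ at most $L-1$ eigenvalues can still be simultaneously $n$-fold degenerate.

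Finally, I would globalize via compactness exactly as in the end of the proof of Proposition~\ref{prop1}. The set $T_{\epsilon_1,n,L}$ is compact and is covered by a finite union of open sets $O_1(\kk_b)$, $1\le b\le B$, each associated with an $L$-fold simultaneous occurrence. I would then inductively perform the one-cluster perturbation at $\kk_1,\kk_2,\dots,\kk_B$ with hierarchically chosen parameters $s_B\ll\dots\ll s_1\ll \epsilon_1$, so that each new perturbation is too small to reintroduce an $n$-fold degeneracy in the region where the previous perturbations have already broken one; this is possible because outside each $\overline{O_1(\kk_b)}$ the norm $\|h\|$ corresponding to the already-broken cluster is bounded below by some $C(\epsilon_1,s_1,\dots,s_{b-1})>0$, in direct analogy with \eqref{hc5}. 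After $B$ steps the resulting $\alpha_s$ satisfies the conclusion.

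The main obstacle, as above, is the bookkeeping in this iterative covering argument: one must ensure that subsequent perturbations on overlapping neighborhoods $O_2(\kk_b)\cap O_2(\kk_{b'})$ do not restore an $n$-fold crossing of the cluster that was just broken, nor create a new $L$-th $n$-fold crossing among clusters that were not targeted. Both issues are handled by the hierarchical smallness of the $s_b$'s together with the uniform lower bound on the modified generator outside $\overline{O_1(\kk_b)}$, inherited from the analogue of estimate \eqref{hc5} applied to the single targeted cluster at each step.
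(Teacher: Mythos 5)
Your choice to perturb only \emph{one} of the $L$ clusters is a genuine departure from the paper. In the paper's proof of Proposition~\ref{prop2}, \emph{all} $L$ reduced generators $h_1,\dots,h_L$ are perturbed simultaneously, so that after the step no $n$-crossing whatsoever survives inside $\overline{O_1(\kk_b)}$; your version aims for the minimal reduction instead. Conceptually both routes can be made to reach the stated conclusion (at most $L-1$ simultaneous $n$-fold degeneracies), and they share the same covering/hierarchical-smallness scaffold. The paper's choice has the practical advantage that, after the local step, the clean state ``no $n$-crossings in $\overline{O_1(\kk_b)}$'' is manifestly stable under all subsequent small perturbations, which makes the bookkeeping easier.

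However, there is a concrete error in the estimate you invoke to close the iteration. You claim that ``outside each $\overline{O_1(\kk_b)}$ the norm $\|h\|$ corresponding to the already-broken cluster is bounded below'', presenting this as the analogue of \eqref{hc5} applied to the single targeted cluster. This is false: outside the support of the cutoff $\chi$ the perturbed generator of the targeted cluster coincides with the unperturbed one, which can vanish at points of $O_2(\kk_b)\setminus\overline{O_1(\kk_b)}$ (indeed, whenever that cluster alone is $n$-fold degenerate there while some \emph{other} cluster separates). The correct analogue of \eqref{hc5} for $L>1$, and the one the paper actually uses, is the \emph{maximum} over all $L$ generators, $\max_{1\le j\le L}\|h_j(\kk)\|\ge C(\epsilon_1)>0$ for $\kk\in O_2(\kk_b)\setminus T_{\epsilon_1,n,L}$; this only tells you that \emph{some} (a priori $\kk$-dependent) cluster is not $n$-fold degenerate, not a fixed one. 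Your argument still goes through after replacing the single-cluster bound by this $\max$ bound: inside $\overline{O_1(\kk_b)}$ the targeted cluster's generator is bounded away from zero by construction and later perturbations with hierarchically smaller $s_{b'}$ cannot annihilate it; outside $T_{\epsilon_1,n,L}$ the $\max$-bound (preserved under small perturbations) guarantees at most $L-1$ simultaneous crossings. So the gap is in the stated justification, not in the strategy; as written, the key estimate is not valid.
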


\begin{proof} 
An example would be $m=5$ and $n=2$, where we could have two different doubly degenerate eigenvalues at the same $\kk$, {\sl i.e.} $L=2$ in \eqref{hc3'} if $\epsilon_1$ is small enough. 

The strategy of the proof is quite close to the previous one. We need to slightly perturb the matrix $\alpha$ on the set $T_{\epsilon_1,n,L}$, see \eqref{hc3'}. We start with one of the sets $\overline{O_1(\kk_b)}$ for which we know that $\ell(\kk_b)=L$, {\sl i.e.} for every $\kk$ in this set, the spectrum of $\alpha(\kk)$ has precisely $L$ clusters of $n$ eigenvalues which are well separated from each other and from the rest of the spectrum. Up to a Kato-Nagy rotation, we can block-diagonalize $\alpha(\kk)$ and find $L$ unitary matrices $\gamma_j(\kk)\in U(n)$ which are almost diagonal and their spectrum coincide with the $j$-th cluster. 

Like in the previous case, we can write
$$\gamma_j(\kk)=\eu^{\iu \phi_j(\kk)} \eu^{\iu h_j(\kk)},\quad \kk\in O_2(\kk_b),\quad h_j\in \mathfrak{su}(n),$$
where $\phi_j$ is some smooth phase, and $\norm{h_j(\kk)}=\mathcal{O}(\epsilon_1)$. As before, we can slightly perturb each $h_j(\kk)$ such that the perturbed ones will never be zero on $\overline{O_1(\kk_b)}$, no matter how small the perturbation is.

The crucial difference compared to the case $L=1$ is that \eqref{hc5} becomes 
\[
\max_{1 \le j \le L} \norm{h_j(\kk)}\geq C(\epsilon_1)>0,\quad \text{for all } \kk\in O_2(\kk_1)\setminus T_{\epsilon_1,n,L}.
\]

In words, this says that outside $T_{\epsilon_1,n,L}$ we cannot have $L$ simultaneous $n$-crossings, only at most $L-1$. Hence one of the generators $h_j$ must have at least one eigenvalue away from zero. 

Now choosing as before $s_1$ small enough, the perturbation will not induce $L$ simultaneous crossings outside $T_{\epsilon_1,n,L}$, while lifting completely the $n$-crossings inside $\overline{O_1(\kk_b)}$. Overall, the conclusion is that we now can have at most $L-1$ simultaneous $n$-crossings inside $O_2(\kk_1)\setminus T_{\epsilon_1,n,L}$ and no crossings at all inside $\overline{O_1(\kk_b)}$. 

Then after a finite number of steps in which we use weaker and weaker perturbations each time, we exhaust $T_{\epsilon_1,n,L}$ and we are done.
\end{proof}

We are finally able to conclude with the desired completely non-degenerate generic form.
 
\begin{proposition}
Assume that $\alpha(\kk)\in U(m)$ is smooth on the compact set $\Omega$. Then, there exists a family of smooth unitary maps $\alpha_s(\kk)$, $s>0$, such that 
$$\lim_{s\to 0}\sup_{\kk\in \Omega}\norm{\alpha_s(\kk)-\alpha(\kk)}=0$$ 
and $\alpha_s(\kk)$ only has only non-degenerate eigenvalues. 
\end{proposition}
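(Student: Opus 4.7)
The plan is to iterate Propositions~\ref{prop1} and~\ref{prop2} so as to successively lower the highest spectral degeneracy of $\alpha$. Denote by $n(\alpha)$ the maximum, over $\kk\in\Omega$, of the multiplicity of any eigenvalue of $\alpha(\kk)$; by compactness of $\Omega$ and continuity of the spectrum, this maximum is attained and satisfies $1\le n(\alpha)\le m$. For each $n\ge 2$, denote further by $L(\alpha,n)$ the maximum number of distinct eigenvalues of $\alpha(\kk)$ of multiplicity exactly $n$, as $\kk$ varies in $\Omega$; this is likewise finite.

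If $n(\alpha)=1$ there is nothing to prove. Otherwise, set $n:=n(\alpha)$. First I would apply Proposition~\ref{prop2} a total of $L(\alpha,n)-1$ times, each time with a sufficiently small perturbation parameter: this produces a smooth unitary family $\alpha'$, arbitrarily close to $\alpha$ on $\Omega$, such that $L(\alpha',n)\le 1$, i.e.\ at most one eigenvalue of $\alpha'(\kk)$ can be $n$-fold degenerate at any given $\kk$. Proposition~\ref{prop1} then applies to $\alpha'$ and yields a further smooth approximant $\alpha''$ with $n(\alpha'')\le n-1$. Iterating the whole procedure with the updated maximal degeneracy, after at most $m-1$ outer rounds one obtains an approximant with completely non-degenerate spectrum.

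The main technical point is the nested choice of smallness parameters. Each invocation of Proposition~\ref{prop1} or~\ref{prop2} requires the perturbation to be small with respect to the \emph{already achieved} spectral gap bounds, which play the role of the constant $A$ in Lemma~\ref{lemma2} when applied to the partially desingularized matrix at that stage. Since the whole scheme involves only finitely many steps, say $N = N(m)$, one can arrange a sequence $s_1\gg s_2\gg\cdots\gg s_N>0$ with $\sum_{i=1}^N s_i$ arbitrarily small, so that the accumulated perturbation converges uniformly to zero on $\Omega$ by the triangle inequality.

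I expect the hard part to be the bookkeeping needed to certify that each successive perturbation truly preserves the non-degeneracies established at earlier stages. This relies crucially on the stability statements of Lemmas~\ref{lemma1}--\ref{lemma3}: the set of $\kk$'s where a maximal $n$-cluster is well separated from the rest of the spectrum is open, and the separation is uniformly bounded below. Concretely, the open–closed pairs $\overline{O_1(\kk_b)}\subset O_2(\kk_b)$ introduced in \eqref{hc2} supply exactly the buffer needed to run the induction: a perturbation acting inside $O_2(\kk_b)$ whose magnitude is smaller than the currently available spectral gap can be inserted without disturbing the non-degeneracies already arranged on the disjoint closures $\overline{O_1(\kk_{b'})}$ for $b'\ne b$, nor those enforced at strictly higher degeneracy orders in previous outer iterations.
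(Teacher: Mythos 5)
Your proposal is correct and follows essentially the same route as the paper: iterate Proposition~\ref{prop2} to reduce the number $L$ of simultaneous maximal-degeneracy clusters to one, then Proposition~\ref{prop1} to drop the maximal multiplicity $n$ by one, and repeat, arranging the finitely many smallness parameters in a nested hierarchy so the total perturbation is uniformly small. The paper states the same double induction and then works it out explicitly for $m=4$, whereas you phrase the loop in general; the content is the same.
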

\begin{proof}
This general statement follows from successive applications of Propositions \ref{prop1} and \ref{prop2}. For each point at which $\alpha(\kk)$ has $L$ eigenvalues each of which is $n$-fold degenerate, the two Propositions allow to perturb the matrix locally in order to achieve that the approximant has either less $n$-fold degenerate eigenvalues ({\sl i.e.} one reduces $L$ to $L-1$), or at least one degeneracy is lifted ({\sl i.e.} one passes from $n$-fold to $(n-1)$-fold degenerate eigenvalues). After a finite number of steps, each degeneracy is lifted.

We give a detailed proof for $m=4$. Assume that $\alpha$ has crossings which involve all its eigenvalues, {\sl i.e.} $n=m=4$. Only one maximal cluster can occur in this case, so we can apply Proposition \ref{prop1} which gives us an approximation $\alpha'$  which can only have crossings of order $m-1=3$. 

If there exist points $\kk$ where $\alpha'$ can have crossings involving $n=3$ eigenvalues, then again there exists only one maximal cluster at a time, hence Proposition \ref{prop1} gives us another approximation $\alpha''$  which can only have crossings of order $n-1=2$.

If $\alpha''$ has crossings of order two, it might happen that there exist points $\kk$ where a simultaneous double degeneracy can occur ($n=2$, $2+2=4$). This corresponds to $L=2$ in Proposition \ref{prop2}. Applying this Proposition we obtain a new matrix $\alpha'''$ which can only have one ({\sl i.e.} $L-1$) crossing of $n=2$ eigenvalues at any given $\kk$. Finally, we apply 
Proposition \ref{prop1} in order to split the last degeneracy and we are done.
\end{proof}


\section{Outlook}\label{open-pro}

To conclude this paper, let us briefly state two interesting issues which have not been addressed before. 

In the fermionic case in $d=1$ ($D=2$) we gave in \cite{CorneanMonacoTeufel17} an explicit construction of an homotopy which interpolates between any two families of matching matrices having the same $\mathbb{Z}_2$ invariant, regardless of whether it vanishes or not. The proof relies on a certain special factorization of \emph{any} such family $\alpha$, obstructed or not, which allows one to combine the two similar families into an equivariantly null-homotopic one, which can then be deformed into the identity. The special factorization in question reads \cite[Lemma~5.1]{CorneanMonacoTeufel17}
\[ \alpha(k) = \eps^{-1} \, \gamma(-k)^t \, \eps \, \gamma(k), \quad k \in \R, \]
where $\set{\gamma(k)}_{k \in \R}$ is a continuous and $\Z$-periodic family of unitary matrices. This factorization is also useful to compute the GP-index of $\alpha$ as $\rueda(\alpha) = \deg(\det(\gamma)) \bmod 2$, namely as the reduction $\bmod \: 2$ of the winding number of the periodic map $\det \gamma \colon [-1/2,1/2] \to U(1)$ \cite[Prop.~5.1]{CorneanMonacoTeufel17}.

Assume that a similar factorization holds in $d=2$ ($D=3$), namely that
\begin{equation} \label{eqn:factor}
\alpha(\kk) = \eps^{-1} \, \gamma(-\kk)^t \, \eps \, \gamma(\kk), \quad \kk \in \R^2.
\end{equation}
Then the map $\gamma(k_1, \cdot)$ gives an homotopy between the family $\set{\gamma(k_1,0)}_{k_1 \in \R}$ and $\set{\gamma(k_1,1/2)}_{k_1 \in \R}$, implying in particular that $\rueda(\alpha(\cdot,0)) = \rueda(\alpha(\cdot,1/2))$. Similarly one would deduce that $\rueda(\alpha(0,\cdot)) = \rueda(\alpha(1/2,\cdot))$ by exchanging the roles of $k_1$ and $k_2$. We see then that the factorization \eqref{eqn:factor} can not hold in general, since the above constraints on the GP-indices only account for four out of the eight equivariant homotopy classes of families of fermionic matching matrices in $d=2$ provided by Corollary~\ref{crl:2Dhomotopies}\eqref{item:2DEquivariantF}. Indeed, the constraints above exclude those classes for which the restrictions on one horizontal and one vertical line have the same GP-index (say $\rueda(\alpha(\cdot,0)) = \rueda(\alpha(0,\cdot))$) but they disagree with the GP-indices on the ``opposite'' horizontal and vertical lines (say $\rueda(\alpha(\cdot,0)) \ne \rueda(\alpha(\cdot,1/2))$). Thus, even if Theorem~\ref{TFAE} gives an explicit homotopy between all equivariantly null-homotopic matrices (compare \eqref{ExplicitHomotopy}), constructing explicit homotopies between \emph{obstructed} families of fermionic matching matrices in $d=2$ remains an open problem, and we suspect that a completely new idea is needed. 

The second open issue is about the explicit construction of maximally \emph{exponentially} localized composite Wannier functions in $L^2(\mathbb{R}^D)$ corresponding to an isolated spectral band $\sigma_0$ (assuming that no obstructions are present). Let us be more precise. The integral kernel of the corresponding projection has an off-diagonal exponential decay $e^{-\delta_0 \norm{\x-\x'}}$ where $\delta_0$ equals the distance between $\sigma_0$ and the rest of the spectrum. In the Bloch picture, the corresponding projection $P(\mathbf{k})$ will have an analytic continuation to a complex open strip near $\mathbb{R}^D$ of width $\delta_0$. Our constructive algorithm can be adapted, as mentioned in the Introduction, to produce a Bloch frame which is periodic and \emph{real analytic}, that is, analytic in a smaller complex strip of width $\delta'<\delta_0$. The main question is whether one can push the analyticity of the periodic Bloch frame up to a strip of width $\delta_0$, which would provide composite Wannier functions having an exponential decay like $e^{-\delta\norm{\x}}$ with \emph{any} $\delta<\delta_0$. We note that the construction of Nenciu \cite{Nenciu91} in the rank-$1$ case does just that. The \emph{existence} of such exponentially localized composite Wannier functions can be proved by means of methods from the theory of functions of several complex variables (see \cite{Panati07, Kuchment09} and references therein), but to the best of our knowledge a \emph{constructive} proof is still missing in the literature, and constitutes a stimulating line of research for the future.



\bigskip \bigskip

{\footnotesize

\begin{tabular}{rl}
(H.D. Cornean) & \textsc{Department of Mathematical Sciences, Aalborg University} \\
 &  Fredrik Bajers Vej 7G, 9220 Aalborg, Denmark \\
 &  \textsl{E-mail address}: \href{mailto:cornean@math.aau.dk}{\texttt{cornean@math.aau.dk}} \\
 \\
(D. Monaco) & \textsc{Fachbereich Mathematik, Eberhard Karls Universit\"{a}t T\"{u}bingen} \\
 &  Auf der Morgenstelle 10, 72076 T\"{u}bingen, Germany \\
 &  \textsl{E-mail address}: \href{mailto:domenico.monaco@uni-tuebingen.de}{\texttt{domenico.monaco@uni-tuebingen.de}} \\
\end{tabular}

}

\end{document}